\titlespacing*{\section}{0pt}{2ex}{1ex}
\titlespacing*{\subsection}{0pt}{1.5ex}{0.5ex}
\renewcommand{\paragraph}[1]{\vspace{2pt}\noindent\textbf{#1}. }
\renewenvironment{proof}[1][\proofname]{\par%
\pushQED{\qed}%
\normalfont \topsep1\p@\@plus1\p@\relax%
\trivlist%
\item\relax%
{\itshape%
#1\@addpunct{.}}\hspace\labelsep\ignorespaces%
}{%
\popQED\endtrivlist\@endpefalse%
}%
\numberwithin{equation}{section}
\newtheorem{definition}[equation]{Definition}
\newtheorem{example}[equation]{Example}
\newtheorem{lemma}[equation]{Lemma}
\newtheorem{proposition}[equation]{Proposition}
\newtheorem{theorem}[equation]{Theorem}
\newtheorem{remark}[equation]{Remark}
\newtheorem{corollary}[equation]{Corollary}
\newtheorem{claim}[equation]{Claim}
\newcommand{\un}{\mathsf{u}}
\newcommand{\Bool}{\{0, 1\}}
\newcommand{\Tri}{\{0, \un, 1\}}
\newcommand{\MUX}{\mathsf{MUX}}
\newcommand{\der}{\mathsf{d}}
\newcommand{\kw}[1]{\mathsf{KW}^{#1}}
\newcommand{\entry}[1]{#1}
\newcommand{\bin}{\mathrm{bin}}
\newcommand{\size}[1]{\mathsf{size}^{#1}}
\newcommand{\depth}[1]{\mathsf{depth}^{#1}}
\newcommand{\cc}{\mathsf{CC}}
\newcommand{\hfe}[1]{\widetilde#1}
\newcommand{\psize}{\mathsf{monorect}}
\newcommand{\monorect}{\mathsf{monorect}}
\newcommand{\rank}{\mathsf{rank}}
\title{Karchmer-Wigderson Games for Hazard-free Computation}
\author{Christian Ikenmeyer\thanks{University of Warwick, United Kingdom. email: christian.ikenmeyer@warwick.ac.uk}, Balagopal Komarath\thanks{IIT Gandhinagar, India. email: bkomarath@rbgo.in}, Nitin Saurabh\thanks{IIT Hyderabad, India. email: nitin@cse.iith.ac.in. This project has received funding from the the European Union’s Horizon 2020 research and innovation programme under grant agreement No 802020-ERC-HARMONIC.}}
\date{November 2022}
\begin{document}
\raggedbottom

\maketitle

\thispagestyle{empty}
\setcounter{page}{0}

\begin{abstract}
We present a Karchmer-Wigderson game to study the complexity of hazard-free formulas. This new game is both a generalization of the monotone Karchmer-Wigderson game and an analog of the classical Boolean Karchmer-Wigderson game. Therefore, it acts as a bridge between the existing monotone and general games.

Using this game, we prove hazard-free formula size and depth lower bounds that are provably stronger than those possible by the standard technique of transferring results from monotone complexity in a black-box fashion.
For the multiplexer function
we give (1) a hazard-free formula of optimal size and (2) an improved low-depth hazard-free formula of almost optimal size and (3) a hazard-free formula with alternation depth 2 that has optimal depth.
We then use our optimal constructions to obtain an improved universal worst-case hazard-free formula size upper bound.
We see our results as a step towards establishing hazard-free computation as an independent missing link between Boolean complexity and monotone complexity.
\end{abstract}

\bigskip\bigskip

\noindent\begin{minipage}{2cm}
\footnotesize
\textbf{Keywords:}\\\mbox{~}
\end{minipage}
\begin{minipage}{15cm}
\footnotesize
Hazard-free computation, monotone computation, Karchmer-Wigderson games, communication complexity, lower bounds%, distributed computing
\end{minipage}

\medskip

\noindent\begin{minipage}{4.5cm}
\footnotesize
\textbf{ACM subject classification:}\\\mbox{~}\\\mbox{~}\\\mbox{~}
\end{minipage}
\begin{minipage}{12.5cm}
\footnotesize
Theory of Comput.\ $\rightarrow$ Models of Computation $\rightarrow$ Concurrency \\
Theory of Comput.\ $\rightarrow$ Comput.\ Compl.\ and Cryptogr.\ $\rightarrow$ Communication Complexity\\
Theory of Comput.\ $\rightarrow$ Comput.\ Compl.\ and Cryptogr.\ $\rightarrow$ Circuit Complexity\\
Hardware $\rightarrow$ Integrated Circuits $\rightarrow$ Logic Circuits $\rightarrow$ Combinational Circuits
\end{minipage}

\bigskip\bigskip

\noindent\begin{minipage}{3.5cm}
\footnotesize
\textbf{Acknowledgments:}\\\mbox{~}\\\mbox{~}
\end{minipage}
\begin{minipage}{13cm}
\footnotesize
We thank Igor Sergeev for his support with the literature. 
Nitin Saurabh would like to thank Yuval Filmus for helpful discussions. We would also like to thank anonymous reviewers for useful suggestions that greatly improved the presentation. 
\end{minipage}

\newpage

\section{Introduction}
In this paper we apply for the first time methods from communication complexity to the study of hazard-free complexity, which we see as a step towards bridging the gap between Boolean complexity and monotone complexity.

The study of the three-valued strong logic of indeterminacy dates back to Kleene (\cite[p.~153]{Kle:38}, \cite[\S64]{Kle:52}). It found numerous applications, for example in logic (see e.g.\ \cite{KOR:66}, \cite{MAL:14}), in cybersecurity for information flow tracking at the gate level (see e.g.\ \cite{TWMMCS:09}, \cite{HOITSMK:12}, \cite{BHTBKI:17}), the design of real-world circuits that communicate between unsynchronized clock domains (see e.g.\ \cite{FFL:18}, \cite{FKLP:17}, \cite{TFL:17}, \cite{BLM:20}), and in the study of hazards in Boolean circuits (see e.g.\ \cite{Got:49,CAL:58,YR:64,EIC:65,MUK:72,MUK:83B,MUK:83A,ND:92,BS:95,BEI:01}).
The languages in these areas is different, but the underlying three-valued logic is the same and many questions and results can be readily transferred between areas.
We will use the language of hazards in circuits in this paper.
The use of three-valued logic to study hazards in Boolean circuits dates all the way back to Goto \cite{Got:49}, who used 0 and 1 to denote the Boolean values and used
the symbol $\frac 1 2$ to denote the third value, which stands for any undefined, oscillating, unstable, or otherwise somehow flawed state.
In this paper we use the symbol $\un := \frac 1 2$ to denote this third state.
Goto modeled the Boolean operations $\wedge$ (and) and $\vee$ (or) as \textsf{min} and \textsf{max}, respectively, and the $\neg$ (not) operation as $1-x$, which defines the behaviour of the three types of gates on inputs from $\{0,\un,1\}$.
Hence a Boolean circuit $C$ on $n$ inputs\footnote{All circuits in our paper have a single output.} computes a function $\{0,\un,1\}^n \to \{0,\un,1\}$ by induction over the circuit structure.
The design of the gate behaviour as \textsf{min}, \textsf{max}, and $1-x$ is the result of a more general construction principle that is called the \emph{hazard-free extension}\footnote{The function $\hfe{f}$ is called the \emph{hazard-free extension} of $f$ (see \cite{IKLLMS19}), or alternatively the \emph{ternary extension} (see \cite{MSB:12}) or the \emph{metastable closure} (see \cite{FFL:18}).}
$\hfe{f}:\{0,\un,1\}^n\to\{0,\un,1\}$ of a Boolean function $f:\{0,1\}^n\to\{0,1\}$.
It is defined as follows.

A binary string $a \in \{0,1\}^n$ is called a \emph{resolution} of a ternary string $\alpha \in \{0,\un,1\}^n$ if for all $1\leq i\leq n$ with $\alpha_i\neq\un$ we have $\alpha_i = a_i$, i.e., all entries $\un$ are replaced by 0s and 1s.
Note that the set of all resolutions $a$ of $\alpha$ forms a subcube of $\{0,1\}^n$.
For a Boolean function $f : \{0,1\}^n \to \{0,1\}$ and for an input $\alpha \in \{0,\un,1\}^n$ we define
the evaluation of the function $\hfe{f} : \{0,\un,1\}^n\to\{0,\un,1\}$ at~$\alpha$ via
\begin{equation}\label{eq:defhazfreeextension}
\hfe{f}(\alpha) := \begin{cases}
1 & \text{ if for all resolutions $a$ of $\alpha$ we have } f(a) = 1
\\
0 & \text{ if for all resolutions $a$ of $\alpha$ we have } f(a) = 0
\\
\un & \text{ otherwise}.
\end{cases}
\end{equation}
A Boolean circuit $C$ that computes a Boolean function $f:\{0,1\}^n\to\{0,1\}$ is called \emph{hazard-free} if
for all $\alpha\in\{0,\un,1\}^n$ we have $C(\alpha)=\hfe{f}(\alpha)$.
An $\alpha$ where these two functions differ is called a \emph{hazard}.
\begin{figure}
    \centering
\begin{subfigure}{0.25\textwidth}
\centering
    \begin{tikzpicture}[vertex/.style={circle,draw,minimum size=1em,inner sep=1pt}, xscale=0.6, yscale=0.8]
    
    %nodes in tree
    \node[vertex] (1) at (0,0) {$\vee$};
    \node[vertex] (2) at (-1.5,-2) {$\wedge$};
    \node[vertex] (3) at (1.5,-2) {$\wedge$};
    \node[vertex] (4) at (-2.5,-4) {$\neg s$};
    \node[vertex] (5) at (-0.5,-4) {$x_0$};
    \node[vertex,inner sep=3pt] (6) at (0.5,-4) {$s$};
    \node[vertex] (7) at (2.5,-4) {$x_1$};
    
    %edges in tree
    \draw[] (1) -- (2);
    \draw[] (1) -- (3);   
    \draw[] (2) -- (4);
    \draw[] (2) -- (5); 
    \draw[] (3) -- (6);
    \draw[] (3) -- (7);
    \end{tikzpicture}
    \caption{A size-optimal formula, but with a hazard at $(s,x_0,x_1)=(\un,1,1)$.}
%    \label{fig:mux1-non-hf-eg}
\end{subfigure}%
\hspace{0.05\textwidth}
\begin{subfigure}{0.3\textwidth}
\centering
    \begin{tikzpicture}[vertex/.style={circle,draw,minimum size=1em,inner sep=1pt}, xscale=0.6, yscale=0.6]
    
    %nodes in tree
    \node[vertex] (1) at (0,0) {$\vee$};
    \node[vertex] (2) at (-1.5,-2) {$\wedge$};
    \node[vertex] (3) at (1.5,-2) {$\wedge$};
    \node[vertex] (4) at (-2.5,-4) {$\neg s$};
    \node[vertex] (5) at (-0.5,-4) {$x_0$};
    \node[vertex,inner sep=3pt] (6) at (0.5,-4) {$s$};
    \node[vertex] (7) at (2.5,-4) {$x_1$};
    \node[vertex] (8) at (1.75,1.5) {$\vee$};
    \node[vertex] (9) at (3.5,0) {$\wedge$};
    \node[vertex] (10) at (2.5,-2) {$x_0$};
    \node[vertex] (11) at (4.5,-2) {$x_1$};
    
    %edges in tree
    \draw[] (1) -- (2);
    \draw[] (1) -- (3);   
    \draw[] (2) -- (4);
    \draw[] (2) -- (5); 
    \draw[] (3) -- (6);
    \draw[] (3) -- (7);
    \draw[] (1) -- (8);
    \draw[] (8) -- (9);
    \draw[] (9) -- (10);
    \draw[] (9) -- (11);
    \end{tikzpicture}
    \caption{The common hazard-free formula. There is visible symmetry between $x_0$ and $x_1$.}
 %   \label{fig:mux1-hf-eg}
\end{subfigure}
\hspace{0.05\textwidth}
\begin{subfigure}{0.3\textwidth}
\centering
    \begin{tikzpicture}[xscale=0.6, yscale=0.8]
    \node [circle,draw,minimum size=1em,inner sep=1pt] {$\vee$} [sibling distance=3cm]
      child { node [circle,draw,minimum size=1em,inner sep=1pt] {$\wedge$} [sibling distance=2cm]
        child {node [circle,draw,fill=red!15,minimum size=1em,inner sep=1pt] {$x_0$}}
        child {node [circle,draw,minimum size=1em,inner sep=1pt] {$\vee$}
          child {node [circle,draw,fill=blue!15,minimum size=1em,inner sep=1pt] {$x_1$}}
          child {node [circle,draw,fill=yellow!30,minimum size=1em,inner sep=1pt] {$\neg s$}}}}
      child {node [circle,draw,minimum size=1em,inner sep=1pt] {$\wedge$} [sibling distance=2cm]
        child {node [circle,draw,fill=green!15,minimum size=1em,inner sep=1pt] {$x_1$}}
        child {node [circle,draw,fill=orange,minimum size=1em,inner sep=3pt] {$s$}}};
    \end{tikzpicture}
    \caption{A size-optimal hazard-free formula. The symmetry is broken.}
    \label{fig:mux_formula}
\end{subfigure}
    \caption{Different De Morgan formulas for $\MUX_1$}
    \label{fig:mux1-eg}
\end{figure}
For example, consider the circuit in Part~(a) of Figure~\ref{fig:mux1-eg} that computes the multiplexer function $C(s,x_0,x_1)=\MUX(s,x_0,x_1)=x_s$ for all $(s,x_0,x_1)\in\{0,1\}^3$. We observe that $C$ has a hazard at $(\un,1,1)$, because $C(\un,1,1)=\un \vee \un = \un$, whereas $C(0,1,1)=C(1,1,1)=1$.
The circuit can be made hazard-free at the expense of using more gates, see Part~(b) of Figure~\ref{fig:mux1-eg} (this construction can be found for example in \cite[Fig.~6a]{FFL:18} and \cite[Fig.~1b]{IKLLMS19}).

Designing small hazard-free circuits for computing Boolean functions is a fundamental goal in electronic circuit design.
Huffman \cite{Huf:57} proved that all Boolean functions can be implemented by hazard-free circuits and he already noted the large growth of the number of gates in his examples.
Eichelberger proved the first lower bound on hazard-free complexity in the restricted model of DNF formulas, which is given by the number of prime implicants of the function that is computed.
The very recent paper \cite{IKLLMS19} formally defines the notion of hazard-free complexity and shows that for monotone functions the hazard-free complexity and the monotone complexity coincide.
Fortunately, good lower bounds are known on the monotone complexity of monotone Boolean functions (see \cite{R:85a, R:85b, A:87, AB:87, R:87, T:88, GS:85, KW90, RW:92, RM:99, HR:00, GP:14,PR:17}).
A direct consequence of \cite{IKLLMS19} is that the exponential gap between Boolean circuit complexity and monotone circuit complexity
transfers directly into an exponential gap between Boolean circuit complexity and the hazard-free circuit complexity.
\cite{Jukna21} proves that every Boolean circuit that computes a monotone function and that is optimal with respect to hazard-free complexity must automatically be a monotone circuit.
Hence the study of hazard-free complexity does not yield any new insights into monotone functions, but it is a natural generalization of monotone complexity to the domain of \emph{all} Boolean functions.
This suggests that the study of hazard-free complexity, in particular of non-monotone functions, should be of independent interest (apart from its applicability in practice).
As a first step in this direction, \cite{IKLLMS19} prove lower bounds for non-monotone functions by using monotone circuit lower bounds for the \emph{hazard-derivative} of the function, because the monotone complexity of the hazard-derivative of $f$ is a lower bound on the hazard-free complexity of~$f$.
All existing lower bounds known for hazard-free computation \cite{IKLLMS19, Jukna21} 
are derived from this wealth of known monotone complexity lower bounds.

However, the hazard-derivative method cannot always prove optimal lower bounds, because some functions with high hazard-free complexity have hazard-derivatives of only low monotone complexity (compare Proposition~\ref{prop:derivatives} with Theorem~\ref{thm:mux-lower}).
We call this problem the \emph{monotone barrier}.
In this paper we take a radically different approach than all previous papers and translate notions from communication complexity to the hazard-free setting.
The result is a new type of the Karchmer-Wigderson game that exactly describes the hazard-free De Morgan formula size and depth.
Our new game is at the same time a hazard-free analog of the classical Boolean Karchmer-Wigderson game (Remark~\ref{rem:shortcondition}) and a generalization of the monotone Karchmer-Wigderson game to the set of all Boolean functions: it coincides with the monotone Karchmer-Wigderson game when played on monotone functions (Theorem~\ref{thm:coincide}).
In other words, the difference between the monotone Karchmer-Wigderson game and the Boolean Karchmer-Wigderson game is precisely the presence of hazards in the Boolean game.
We use this new definition to precisely determine the hazard-free formula size (Theorems~\ref{thm:mux-upper} and \ref{thm:mux-lower}) and the depth of hazard-free formulas of alternation depth\footnote{Alternation depth is one plus the maximum number of changes in the type of the gate in root-to-leaf paths.} $2$ (Theorem~\ref{thm:mux-dnf-depth}) of the multiplexer function $\MUX_n : \{0,1\}^{n+2^n} \to \{0,1\}$, which is a (non-monotone) Boolean function on $n+2^n$ input bits, defined via
\[
\MUX_n(s_1,\ldots,s_n,x_0,x_1,\ldots,x_{2^n-1}) =  x_{\bin(s_1,\ldots,s_n)},
\]
where $\bin(s_1,\ldots,s_n)$ is the natural number represented by the binary number $s_1s_2\cdots s_n$.

%\[
%\MUX_n(s_1,\ldots,s_n,x_{0,0,\ldots,0},x_{0,0,\ldots,0,%1},x_{0,0,\ldots,1,0},\ldots,x_{1,1,\ldots,1}) := %x_{s_1,\ldots,s_n}.
%\]
Our result breaks the monotone barrier, i.e., the hazard-derivatives of the multiplexer have lower complexity than the bound we prove.
To obtain matching upper and lower bounds on complexity we use the Karchmer-Wigderson game interpretation to give two new efficient hazard-free implementations of the multiplexer function: One is optimal for the formula size and one is optimal for the depth of hazard-free formulas of alternation depth $2$. 

In contrast to monotone complexity, which is mainly a theoretical concept, hazard-free complexity has applications in practice, not only in cybersecurity (\cite{TWMMCS:09}, \cite{HOITSMK:12}, \cite{BHTBKI:17}), but also for designing real-world circuits, for example when a distributed system of agents with unsynchronized clock domains performs a parallel computation, see \cite{FKLP:17, FFL:18, TFL:17, LM:16, BLM:17, BLM:18, BLM:20}. The hazard-free circuit depth (which is equal to the hazard-free formula depth) is a main parameter in this research area, directly correlated to a circuit's execution time.

An interesting incremental approach towards
proving super-polynomial formula size lower bounds for explicit functions, is to make progress by proving good lower bounds for formulas with more and more NOT gates \cite{F:75,BNT:96,BNT:98}. In Section~\ref{app:limited-hazard-free}, we show that instead of considering all implicants and implicates, we can choose any subset of implicants and implicates to obtain upper and lower bounds on \emph{limited} hazard-free formulas, formulas that are guaranteed to be hazard-free on some inputs but not others. That is, we can parameterize our game by the number of undefined inputs so that it interpolates between the hazard-free game and the general Boolean game. This gives us a natural way to make progress towards proving super-polynomial Boolean formula size lower bounds by proving super-polynomial lower bounds for more and more limited hazard-free formulas, until we prove a lower bound on formulas that may have hazards on any input. Limited hazard-free formulas are also of interest in practice, for example when it is known that the unstable bit can only appear in the position where two adjacent Gray code numbers differ \cite{FKLP:17, LM:16, BLM:17, BLM:18, BLM:20}. We are not aware of any applications of limited negation circuits for designing real-world circuits. 

The multiplexer function is also significant from the perspective of proving super-polynomial formula size lower bounds. Informally, it suffices to prove a lower bound for a composition of the multiplexer function with itself. For a formal statement, see \cite{EIRS:01, Meir:20}.

\subsection{Exact Bounds}

In Section~\ref{sec:exactsizeMUX} we determine the \emph{exact} hazard-free formula complexity of the multiplexer function. We achieve this by using a combination of an improvement in the upper bound (Huffman's \cite{Huf:57} construction gives only  $\size{\un}(\MUX_n) \leq 4^n + 2n3^{n-1}$) and an analysis of the hazard-free Karchmer-Wigderson game for the lower bound:
\[
\size{\un}(\MUX_n) = 2 \cdot 3^n - 1.
\]
%We note 
It is known that there are De Morgan formulas (with hazards) of size $2^{n+1}(1+o(1))$ computing $\MUX_n$ \cite{LV:09},
i.e., $\size{}(\MUX_n) \leq 2^{n+1}\left(1 + \frac{1}{2n} + O(\frac{1}{n\log n}) \right)$. %(See also \cite{LV:11}.) 
Our upper bound construction is a recursive application of the improved implementation of $\MUX_1$ in Figure~\ref{fig:mux1-eg}(c).
To prove the lower bound we reduce the Karchmer-Wigderson game for $\MUX_n$ from a communication game for the \emph{subcube intersection problem}.
Its communication matrix is highly structured, so that its rank can be determined and be used to find the lower bound.
The subcube intersection problem is the hazard-free generalization of the classical equality problem from communication complexity and could be of independent interest, especially for proving other hazard-free formula lower bounds.

Since all derivatives of $\MUX_n$ have monotone formulas of size at most $(n+1)2^n$ (Proposition~\ref{prop:derivatives}), the separation that we achieve breaks the monotone barrier.
Therefore, our lower bound is the first to separate the Boolean complexity and the hazard-free complexity of a function while breaking the monotone barrier\footnote{Note that breaking the monotone barrier can also be achieved using Khrapchenko's method for the parity function \cite{K:71},
which was interpreted as a Karchmer-Wigderson game in \cite{KW90},
but for the parity function the hazard-free complexity and the Boolean complexity coincide (every implementation of parity is automatically hazard-free): Parity requires $\Theta(n^2)$ formula size, but the derivatives of parity are all equal to the OR function, which requires $\Theta(n)$ formula size.
For the parity function the Boolean Karchmer-Wigderson game coincides 
with our hazard-free Karchmer-Wigderson game, so we obtain the same bounds.
}.
We consider this an important step forward towards establishing hazard-free computation as a new theoretical device that can serve as a true generalization of monotone circuit complexity.

Considering the depth (which is the same for circuits and formulas), we immediately obtain
$
\depth{\un}(\MUX_n) \geq \log_2(3) n \geq 1.58 n.
$
This lower bound separates the hazard-free circuit depth complexity and Boolean circuit depth complexity of $\MUX_n$, because %$\depth{}(\MUX_n) \leq \depth{}_{2}(\MUX_n) \leq n + \log_2(n)$ (see Theorem~5.1, Section~{3.5} in \cite{W:87}). 
$\depth{}(\MUX_n) \leq n + 3$ \cite{TZ:97, LV:11}. 
(In fact, $\depth{}(\MUX_n) = n+2$ for all $n\geq 20$ \cite{LV:11}.) 
Analogously to formula size, since all derivatives of $\MUX_n$ have monotone circuits of depth at most $n + \log_2(n) + 1$ (Proposition~\ref{prop:derivatives}), our separation breaks the monotone barrier.

In Section~\ref{sec:tworoundprot} we focus on the depth of hazard-free formulas of alternation depth $2$ for $\MUX_n$. These formulas are interesting in practice because certain programmable logic arrays produce implementations that have alternation depth $2$. We prove the exact complexity of the multiplexer function in this restricted model to be $2n+2$.
%using a combination of a new protocol for the upper bound and a hazard-free Karchmer-Wigderson game for the lower bound:
%\[
%\depth{\un}_{2}(\MUX_n) = 2n+2.
%\]
%Hence, in this restricted model we can precisely determine the depth complexity. 
For the proof we exploit an old result by Huffman: the fact that in this restricted model Alice must communicate her prime implicant to Bob before Bob starts communicating. Therefore small-depth formulas can exist if and only if there are short prefix codes that allow Alice to communicate her prime implicant efficiently to Bob. Then using Kraft's inequality from information theory we show that there are prefix codes that achieve a depth upper bound of $2n+2$, but they cannot achieve $2n+1$.
One key idea is a distinction of cases between prime implicants of logarithmic size and prime implicants of super-logarithmic size.

For general hazard-free formula depth the upper bound of $2n+2$ is not optimal for $\MUX_n$, because we show in Theorem~\ref{thm:mux-depth-upper} that the depth is at most $2n+1$.
Note that this is significantly lower than the depth $3n$ achieved by the formula of optimal size in Theorem~\ref{thm:mux-upper}, and strictly lower than the depth that can be achieved by any formula of alternation depth $2$.
This gives a size-depth trade-off:
the size of this formula is only a factor of
$\frac{9}{8}$
more than the optimal size.
This construction is done recursively using the hazard-free Karchmer-Wigderson game.
It is crucial in this recursion that the induction hypothesis is \emph{not} the
monochromatic partitioning of the communication matrix of $\MUX_{n-1}$, but of an enlarged matrix that can be partitioned monochromatically using the same depth.

All upper bounds and lower bounds are proved using the framework of hazard-free Karchmer-Wigderson games. The lower bound proofs rely heavily on this framework. The game also played a crucial role in deriving the upper bounds given in Theorems~\ref{thm:mux-depth-upper} and \ref{thm:mux-dnf-depth}. The upper bound in Theorem~\ref{thm:mux-upper} can also be be proved without using the game (see Remark~\ref{rem:demystify}).

\subsection{Universal Upper Bounds}\label{subsec:universal}
One of the most fundamental and oldest questions in electronic circuit design is finding an upper bound on the size of circuits or formulas that holds for all Boolean functions \cite{S:49}.
For Boolean circuits and formulas, this question has been very satisfactorily answered. It is known that any $n$-bit Boolean function has circuits of size $(1+o(1))2^n/n$ \cite{L:58, Loz:96} and almost all Boolean functions require circuits of size $(1+o(1))2^n/n$ \cite{Lup:63}.
For Boolean formulas, the lower bound is $(1-o(1))2^n/\log(n)$ \cite{RS:42}, almost matched by the upper bound  $(1+o(1))\frac{2^n}{\log n}$ \cite{Lup:60, Loz:96}.

For hazard-free circuits, the situation is very similar to that of Boolean circuits: any $n$-bit Boolean function has a hazard-free circuit of size $O(2^{n}/n)$ (see, e.g., \cite[Section 7]{Jukna21}), thus matching Lupanov's upper bound \cite{L:58} up to constants.
Since hazard-free circuits are also Boolean circuits, the lower bound of $(1+o(1))2^n/n$ for almost all functions continues to hold for hazard-free circuits. 

For hazard-free formulas, this question is still open.
Huffman \cite{Huf:57} gives hazard-free implementations  for any function by representing it as a DNF where the set of terms is the set of all prime implicants of the function.
Since a function on $n$ variables may have as many as $\Omega(3^n/\sqrt{n})$ prime implicants \cite{CM:78} and each prime implicant may contain as many as $n$ literals, this translates into a worst-case bound of $O(\sqrt{n}\cdot 3^n)$ on the hazard-free formula complexity.

We make progress on this question by studying the multiplexer function.
In electronic circuit design, the multiplexer is often used as a programmable logic device. Indeed,
given any Boolean function $f : \Bool^n \mapsto \Bool$, we can implement it as: $f(x_1, \dotsc, x_n) = \MUX_n(x_1, \dotsc, x_n, f(0, 0, \dotsc, 0), \dotsc, f(1, 1, \dotsc, 1))$.
This implementation of $f$ is hazard-free if the implementation of $\MUX_n$ is hazard-free.
Therefore, any hazard-free formula upper bound for $\MUX_n$ gives an upper bound for the hazard-free formula complexity of \emph{all} $n$-bit Boolean functions.
Theorem~\ref{thm:mux-upper} gives such an improved upper bound of $2\cdot 3^n-1$ for the multiplexer function and hence our construction gives a new best worst-case hazard-free formula size implementation of size $2\cdot 3^n-1$, which was $O(\sqrt n \cdot 3^n)$ before.

Observe that in the world of Boolean circuits, Boolean formulas, and hazard-free circuits, the multiplexer upper bound is only a polynomial (in $n$) multiplicative factor away from the optimal bound.
We show in Theorem~\ref{thm:mux-lower} that our new bound is optimal for the multiplexer function.
This means that we cannot improve the universal upper bound further by directly using the multiplexer function. However, the best known lower bound for hazard-free formulas for $n$-bit functions is still the $2^n/\log(n)$ given by a counting argument. This creates an interesting situation that is different from the other three settings described in this section.
\begin{compactitem}
\item If there are $n$-bit functions such that the hazard-free formula size is asymptotically more than $2^n/\log(n)$, then a tight lower bound can be proved by only using some argument that exploits the \emph{semantic} property of hazard-freeness, such as the hazard-free Karchmer-Wigderson game we introduce in this paper. This is in contrast to the other settings where tight lower bounds can be obtained using a counting argument that only exploits the structure (or syntax) of the model.
\item Otherwise, all $n$-bit functions have hazard-free formulas that are smaller than the optimal hazard-free formula for the multiplexer function by a multiplicative factor that is exponential in $n$. This is also in stark contrast to the situation in the other three settings.
\end{compactitem}

\section{Preliminaries}

\paragraph{Formulas}
A \emph{Boolean formula} is a Boolean circuit whose graph is a tree.
That is, it is a formula over the De Morgan basis $\{\vee, \wedge, \neg\}$.
The $\vee$ and $\wedge$ gates have fan-in two and $\neg$ gates have fan-in one.
Using De Morgan's laws (which also work over the three-valued logic) the negations can be moved to the leaves: all
internal nodes are labeled with $\vee$ or $\wedge$ and all leaves are labeled with literals $x_i$ or $\neg x_i$. This is called a \emph{De Morgan formula}.
The \emph{size} of a De Morgan formula $F$, denoted $\size{}(F)$, is defined to be the number of leaves in it\footnote{
If all $\wedge$ and $\vee$ gates have fan-in two, then the number of leaves is always exactly one more than the number of gates (not counting negation gates) in $F$, which is a measure often used to describe circuit size.}.
The \emph{depth} of a formula $F$, denoted $\depth{}(F)$, is defined to be the length of the longest root-to-leaf path in $F$. 
For a Boolean function $f\colon\Bool^n\to\Bool$, we denote the minimal size 
of a De Morgan formula computing $f$ by $\size{}(f)$ and the minimal 
depth of a formula computing $f$ by $\depth{}(f)$. Similarly, 
in the hazard-free setting, let $\size{\un}(f)$ and $\depth{\un}(f)$ denote 
the minimal size and minimal depth of a hazard-free De Morgan formula computing $f$, respectively.
For a monotone function $f$ let $\size{+}(f)$ and $\depth{+}(f)$ denote 
the minimal size and minimal depth of a monotone formula computing $f$, respectively.
The \emph{alternation depth} of a formula is one plus maximum number of changes to the type of the gate in the sequence of gates in a root-to-leaf path.
For example, the alternation depth of the formula in Figure~\ref{fig:mux1-eg}(b) is $2$ and that of the formula in Figure~\ref{fig:mux1-eg}(c) is $3$.
We denote the minimal size and depth of hazard-free formulas of alternation depth $d$ using $\size{\un}_{d}(f)$ and $\depth{\un}_{d}(f)$, respectively.

\paragraph{Implicants and Implicates}
For a Boolean function $f:\{0,1\}^n\to\{0,1\}$
the preimage of a value $c\in\{0,1\}$ is denoted by $f^{-1}(c)$. For the hazard-free extension $\hfe{f}:\{0,\un,1\}^n\to\{0,\un,1\}$ the preimage of $\gamma\in\{0,\un,1\}$ is denoted by $\hfe{f}^{-1}(\gamma)$.
Elements $\alpha\in \hfe{f}^{-1}(1)$ are called \emph{implicants} of $f$.
A \emph{prime implicant} is an implicant in which no value from $\{0,1\}$ can be replaced by a $\un$ such that it is still an implicant, i.e., a prime implicant is an implicant that is minimal with respect to the \emph{instability partial order}, the partial order on $\{0, 1, \un\}$ defined by $\un \leq 0$ and $\un \leq 1$.
Elements $\alpha\in \hfe{f}^{-1}(0)$ are called \emph{implicates} of $f$.
A \emph{prime implicate} is an implicate in which no value from $\{0,1\}$ can be replaced by a $\un$ such that it is still an implicate, i.e., a prime implicate is an implicate that is minimal with respect to the instability partial order.
We occasionally identify an implicant $\alpha$ with the Boolean function that is 1 exactly on the hypercube of resolutions of $\alpha$, and an implicate $\beta$ with the Boolean function that is 0 exactly on the hypercube of resolutions of $\beta$.

\paragraph{Communication} We assume familiarity with the basic definitions of
communication complexity (see, e.g., \cite{KN96, RY20}). 
Let $K\colon{A}\times{B}\to 2^O$ be a function that maps tuples to nonempty subsets of a set $O$.
For the purposes of this paper we will only be interested in 
deterministic communication complexity where Alice gets 
$\alpha \in {A}$, Bob gets $\beta\in{B}$ and their goal is to 
determine some value in $K(\alpha,\beta)$ while minimizing the communication (number of bits exchanged). Let $\Pi$ be a deterministic communication protocol solving $K$. 
Then the \emph{communication cost} of $\Pi$, denoted $\cc(\Pi)$, 
is defined to be the maximum number of bits exchanged 
on any pair of inputs $(\alpha,\beta)$ when following $\Pi$. 
Let $\cc(K)$ denote the minimum cost over all protocols solving $K$.
Recall that the leaves of a protocol induce a partition of
${A}\times{B}$ into combinatorial rectangles. 
We denote the number of such combinatorial rectangles in a protocol $\Pi$ by $\monorect(\Pi)$ 
and the minimum number of leaves in a protocol solving $K$ by $\monorect(K)$.

We will often work with the communication matrix $M_K$ of dimensions 
$|{A}| \times |{B}|$ associated with a function $K$.  
The rows and columns of $M_K$ are indexed by the elements of ${A}$ 
and ${B}$, respectively. The $(\alpha,\beta)$-th entry of $M_K$ is defined to 
be $K(\alpha,\beta)$. The leaves of a protocol $\Pi$ solving $K$ 
partitions the communication matrix $M_K$ 
into $\monorect(\Pi)$ many monochromatic combinatorial rectangles, where a combinatorial rectangle $A'\times B'$ ($A'\subseteq A$, $B'\subseteq B$) is called monochromatic if there exists $o \in O$ with $\forall (\alpha,\beta)\in A'\times B' : o \in K(\alpha,\beta)$.
We will often use $K$ and $M_K$ interchangeably.

\section{Hazard-Derivatives and the Monotone Barrier}

For $x,y \in\Bool^n$, we define $x\oplus\un\cdot y$ to be the string $\alpha\in\Tri^n$ such that for all $i\in[n]$, $\alpha_i = x_i$ if $y_i=0$, and otherwise $\alpha_i=\un$. 
Let $f$ be a Boolean function on $n$ variables. Its \emph{hazard derivative} is a Boolean function on $2n$ variables denoted $\der f(x; y)$ that evaluates to $1$ if and only if $\hfe{f}(x \oplus \un \cdot y) = \un$, i.e., there are two resolutions of $x \oplus \un \cdot y$, say $a$ and $b$, such that $f(a) = 0$ and $f(b) = 1$.
In other words, $\der f(x; y)=1$ if and only if the function $f$ is \emph{not} constant on the subcube of all resolutions of $x \oplus \un \cdot y$. 
For example consider the multiplexer function $\MUX_n(s,x)$ which is defined as 
\[
\MUX_n(s_1,\ldots,s_n,x_0,x_1,\ldots,x_{2^n-1}) =  x_{\bin(s_1,\ldots,s_n)},
\]
where $\bin(s_1,\ldots,s_n)$ is the natural number represented by the binary number $s_1s_2\cdots s_n$. 
Its hazard derivative $\der\MUX_n(s,x;t,y) = 1$ if and only if the string $x\oplus\un\cdot y$ restricted to the positions given by the subcube of all resolutions of $s\oplus\un\cdot t$ is neither the all zeroes nor the all ones string. Equivalently, we can say that either $x$ restricted to the positions given by the subcube of all resolutions of $s\oplus\un\cdot t$ is not all zeroes or all ones, or $y$ restricted to the positions given by the subcube of all resolutions of $s\oplus\un\cdot t$ is not all zeroes. 

Notice that for a fixed value $a\in \{0,1\}^n$,  the function $\der f(a; y):\{0,1\}^n\to\{0,1\}$, is a monotone function, see \cite[Lemma~4.6]{IKLLMS19}. 
The key observation that connects hazard-free circuits to monotone circuits is that given a hazard-free circuit $C$ for $f$ and any Boolean string $a$, we can construct a monotone circuit for $\der f(a; y)$ that is no larger in size than $C$, see \cite[Thm.~4.9]{IKLLMS19}. Therefore, in order to prove lower-bounds for hazard-free circuits for $f$, one only needs to identify a Boolean string $a$ such that $\der f(a; y)$ is a hard function for monotone circuits, i.e., has high monotone circuit complexity. This allows us to transfer a wealth of known monotone circuit lower bounds to the hazard-free world. This works analogously for hazard-free De Morgan formulas (in this case the hazard-derivative can be constructed as a monotone De Morgan formula instead of a monotone circuit).

The best known construction for hazard-free De Morgan formulas for $\MUX_n$ has size $2 \cdot 3^n - 1$ (See Theorem~\ref{thm:mux-upper}). Can we use derivatives to prove that this is optimal? No. We show that \emph{all} derivatives of $\MUX_n$ have monotone De Morgan formulas of size at most $(n+1)2^n$. This is an instance of the \emph{monotone barrier}.

\begin{proposition}\label{prop:derivatives}
Fix any $(s,x) \in \{0,1\}^{n+2^n}$.
The function $\der\MUX_n(s,x;t,y):\{0,1\}^{n+2^n}\to\{0,1\}$ has monotone De Morgan formulas of size at most $(n+1)2^n$.
\end{proposition}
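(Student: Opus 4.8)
The plan is to give an explicit monotone formula for $\der\MUX_n(s,x;t,y)$ with $(s,x)$ fixed, working directly from the combinatorial description of the derivative given just before the statement. Recall that $\der\MUX_n(s,x;t,y)=1$ iff the string $x\oplus\un\cdot y$ restricted to the subcube of resolutions of $s\oplus\un\cdot t$ is neither constant $0$ nor constant $1$. Since $(s,x)$ is fixed, the relevant data are: the set $R(t)\subseteq\{0,\dots,2^n-1\}$ of indices selected by the resolutions of $s\oplus\un\cdot t$ (this depends only on $t$), and for each such index $j$ the bit $x_j\oplus(\un\cdot y)_j$, which equals $x_j$ if $y_j=0$ and is unstable if $y_j=1$. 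The derivative is $1$ precisely when, over $j\in R(t)$, we see both a "definitely/possibly $0$" and a "definitely/possibly $1$".

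First I would make the monotonicity structure explicit. For fixed $(s,x)$, define for each $j$ the Boolean indicator $\mathrm{sel}_j(t)$ that is $1$ iff $j\in R(t)$; because $R(t)$ only grows as more coordinates of $t$ become $1$, each $\mathrm{sel}_j(t)$ is a monotone conjunction of literals $t_i$ and $\overline{t_i}$ — but crucially, since we are looking at a fixed $s$, the literals appearing are determined: $\mathrm{sel}_j(t)=\bigwedge_{i:\,b_i\neq s_i} t_i$ where $b=\bin^{-1}(j)$, which uses only the positive literals $t_i$, hence is monotone in $t$. Then "$j$ contributes a possible $1$" is the event $\mathrm{sel}_j(t)\wedge(x_j\vee y_j)$ and "$j$ contributes a possible $0$" is $\mathrm{sel}_j(t)\wedge(\overline{x_j}\vee y_j)$; with $x_j$ a constant, $x_j\vee y_j$ and $\overline{x_j}\vee y_j$ are each monotone in $y$ (each is either $1$ or $y_j$). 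Therefore
\[
\der\MUX_n(s,x;t,y)=\Big(\bigvee_{j}\mathrm{sel}_j(t)\wedge(x_j\vee y_j)\Big)\wedge\Big(\bigvee_{j}\mathrm{sel}_j(t)\wedge(\overline{x_j}\vee y_j)\Big),
\]
and I would verify this identity against the combinatorial description (both resolutions realized $\iff$ both big ORs fire, using that $R(t)\neq\emptyset$ always and that a single index can simultaneously contribute a possible $0$ and a possible $1$ exactly when $y_j=1$). Each $\mathrm{sel}_j(t)$ being an AND of at most $n$ positive literals, and each leaf-level term $x_j\vee y_j$ costing $O(1)$, the inner formula for one index has size $O(n)$; there are $2^n$ indices, two big ORs, giving size $O(n2^n)$.

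The remaining work is the constant: to reach exactly $(n+1)2^n$ I would share structure so that each of the two big ORs is a monotone formula of size roughly $\tfrac{n+1}{2}2^n$. A clean way is to build, for each big OR, a balanced binary "multiplexer tree" on $t_1,\dots,t_n$: at depth $i$ branch on $t_i$ (using $t_i$ and $\overline{t_i}$, but note $\overline{t_i}$ is fine since we are inside an OR-of-ANDs and can push it down — the resulting formula is still monotone as a function because of the semantic argument above, even if written with $\overline{t_i}$; alternatively observe $\mathrm{sel}_j$ uses only positive $t_i$ so no negations of $t$ are needed at all), with the $2^n$ leaves holding the constants-or-variables $x_j\vee y_j$ (resp.\ $\overline{x_j}\vee y_j$). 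Such a tree has $2^n$ leaves plus $n2^n$ internal literal-leaves for the selection path, i.e.\ $(n+1)2^n$ leaves; but we need this total across both ORs, so I would instead share the selection tree between the two copies or absorb the final $\wedge$ — the cleanest bookkeeping is to note $\der\MUX_n(s,x;t,y)=\MUX_n(t,z)$ where $z_j:=(x_j\vee y_j)$ combined suitably... actually the tidiest route: observe that after fixing $(s,x)$ the derivative equals a single $\MUX_n$-like selection returning, for the chosen block $R(t)$, whether that block is non-constant, and encode this as one monotone $\MUX$-tree of $2^n$ leaves over $O(1)$-size leaf gadgets, giving $\le(n+1)2^n$.

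I expect the main obstacle to be purely the \emph{constant factor} in the size bound: the $O(n2^n)$ bound is immediate and routine, but squeezing it to exactly $(n+1)2^n$ requires choosing the right shared tree structure and carefully counting literal-leaves versus gate-leaves, and in particular handling the two "directions" (possible-$0$ and possible-$1$) without doubling the $n\cdot 2^n$ term — likely by recognizing that a single selection tree with a two-bit payload at each leaf, or an $\MUX_n$ applied to precomputed $O(1)$-depth leaf functions, suffices. The correctness identity itself is a short case check and should not be the bottleneck.
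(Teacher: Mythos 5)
Your identity
\[
\der\MUX_n(s,x;t,y)=\Bigl(\bigvee_{j}\mathrm{sel}_j(t)\wedge(x_j\vee y_j)\Bigr)\wedge\Bigl(\bigvee_{j}\mathrm{sel}_j(t)\wedge(\overline{x_j}\vee y_j)\Bigr),\qquad \mathrm{sel}_j(t)=\bigwedge_{i:\,b_i\neq s_i}t_i,
\]
is correct, and for fixed $(s,x)$ each side is indeed syntactically monotone in $(t,y)$. The gap is in the size bound: counted naively this AND of two big ORs has roughly $2(n+1)2^n$ leaves, and neither of the fixes you sketch closes the factor of two. A balanced ``multiplexer tree on $t_1,\dots,t_n$'' computes a single selection $\MUX_n(t,\cdot)$, but $\bigvee_j\mathrm{sel}_j(t)\wedge L_j$ is an OR over the \emph{entire} subcube $R(t)$ --- any number of the $\mathrm{sel}_j(t)$ can fire at once --- so no $\MUX$-tree layout applies; and introducing $\overline{t_i}$ to force mutual exclusivity of branches would make the formula syntactically non-monotone, which is not allowed (the statement asks for a monotone De Morgan formula, not merely a De Morgan formula for a monotone function). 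Your final ``$O(1)$-size leaf gadget'' sketch inherits the same mismatch.

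You are, however, exactly one observation away from closing the gap, and it is the observation the paper's proof hinges on: $\bin(s)\in R(t)$ for \emph{every} $t$, i.e.\ $\mathrm{sel}_{\bin(s)}(t)\equiv 1$. Hence the $j=\bin(s)$ term makes one of your two conjuncts identically $1$ --- the first if $x_{\bin(s)}=1$, the second if $x_{\bin(s)}=0$ --- and your formula collapses to a \emph{single} OR, $\bigvee_j\mathrm{sel}_j(t)\wedge L_j$ with each $L_j\in\{1,y_j\}$ once the constant $x_j$ is simplified away. Expanding by whether $x_j=x_{\bin(s)}$ recovers exactly the paper's formula: when $x_{\bin(b)}\neq x_{\bin(s)}$ the leaf is the constant $1$ and costs nothing, and when $x_{\bin(b)}=x_{\bin(s)}$ it is the single literal $y_{\bin(b)}$. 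Each of the $2^n$ disjuncts therefore contributes at most $n$ leaves from $\mathrm{sel}_{\bin(b)}(t)$ plus at most one leaf from $y$, giving the stated $(n+1)2^n$.
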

\begin{proof}
Fix $(s,x)\in\Bool^{n+2^n}$. Recall $\der\MUX_n(s,x;t,y)=1$ if and only if $x$ restricted to the positions given by the subcube of all resolutions of $s\oplus\un\cdot t$ is not all zeroes or all ones, or $y$ restricted to the positions given by the subcube of all resolutions of $s\oplus\un\cdot t$ is not all zeroes.

%Recall $\der\MUX_n(s,x;t,y)=1$ if and only if the subcube given by the ternary string $s\oplus\un\cdot t$ is non-constant \sout{in} \textcolor{red}{on the subcube of resolutions of} $x\oplus\un\cdot y$ \textcolor{red}{CI: I do not understand this sentence.}. Furthermore, there are only two ways in which a subcube can be non-constant, namely it was non-constant to begin with in $x$ or one of the points in this subcube is set to $1$ in $y$ \textcolor{red}{CI: What does it mean that "a subcube is constant"?}. 

For the sake of clarity we first give a monotone implementation of 
$\der\MUX_n$ when $(s,x)$ is fixed to the all zeroes input. 
When $(s,x)=(0,0)$ (i.e., the all zeroes string), then $x$ restricted to the positions given by the subcube of all resolutions of $s\oplus\un\cdot t$ is all zeroes. Therefore, $\der\MUX_n(0,0;t,y) = 1$ if and only if $y$ restricted to the positions given by the subcube of all resolutions of $s\oplus\un\cdot t$ is not the all zeroes string. 
Furthermore, since $s=0$, the subcube of all resolutions of $s\oplus\un\cdot t$ is the set of points $b\in\Bool^n$ such that $b\leq t$. 
Thus, $\der\MUX_n(0,0;t,y) = \bigvee_{b \in \Bool^n}y_{\bin(b)} \wedge(\bigwedge_{i : b_i =1}t_i)$.

For the general case when $(s,x)\in\Bool^{n+2^n}$ is arbitrary (but fixed), we need to also verify whether $x$ restricted to the positions given by the subcube of all resolutions of $s\oplus\un\cdot t$ is neither all zeroes nor all ones. Equivalently, for $b\in\Bool^n$, when $x_{\bin(b)} \neq \MUX_n(s,x) = x_{\bin(s)}$, we need not check if $y_{\bin(b)} = 1$. Thus we have the following monotone implementation
\[\der\MUX_n(s,x;t,y) = \left(\bigvee_{\substack{b\in\Bool^n\colon \\ x_{\bin(b)} = x_{\bin(s)}}}y_{\bin(b)}\wedge\left(\bigwedge_{i\colon b_i\neq s_i}t_i\right)\right) \vee \left(\bigvee_{\substack{b\in\Bool^n\colon\\ x_{\bin(b)}\neq x_{\bin(s)}}}\bigwedge_{i\colon b_i\neq s_i}t_i\right).\]
The size bound easily follows. 
\end{proof}

The above proposition shows that the derivative method cannot yield a lower bound bigger than $(n+1)2^n$ for hazard-free De Morgan formulas for $\MUX_n$. We now proceed to develop a framework that will allow us to prove that $2 \cdot 3^n - 1$ is the optimal size for $\MUX_n$. This is the first result that proves a hazard-free circuit lower bound without relying on an existing monotone circuit lower bound, i.e., that breaks the monotone barrier.

\section{A Karchmer-Wigderson Game for Hazard-free Computation}

In this section we give a natural generalization of the classical Karchmer-Wigderson game,
which captures the complexity of \emph{hazard-free} computation.
We begin with recalling the framework of Karchmer-Wigderson games \cite{KW90}.

\begin{definition}[\cite{KW90}]
  \label{def:kw-rel}
  Let $f\colon \Bool^n \to \Bool$ be a Boolean function.
  The \emph{Karchmer-Wigderson game} of $f$, denoted $\kw{}_f$,
  is the following communication problem: 
  Alice gets $a \in \{0,1\}^n$ with $f(a)=1$, 
  Bob gets $b \in \{0,1\}^n$ with $f(b)=0$ and
  their goal is to determine a coordinate $i \in [n]$ such that $a_i \neq b_i$.
\end{definition}
They also gave the following monotone version of the game. 
\begin{definition}[\cite{KW90}]
  \label{def:kw-relmon}
  Let $f\colon \Bool^n \to \Bool$ be a monotone Boolean function.
  The \emph{monotone Karchmer-Wigderson game} of $f$, denoted $\kw{+}_f$,
  is the following communication problem:
  Alice gets $a \in \{0,1\}^n$ with $f(a)=1$, Bob gets $b \in \{0,1\}^n$ with $f(b)=0$ and
  their goal is to determine a coordinate $i \in [n]$ such that $1=a_i \neq b_i=0$.
\end{definition}
The seminal work of Karchmer and Wigderson \cite{KW90} showed that
the communication complexity of the $\kw{}_f$ game (resp., $\kw{+}_f$ game)
characterizes the size and depth complexity of De Morgan formulas 
(resp., monotone formulas).

\begin{theorem}[\cite{KW90}]
  \label{thm:kwthm}
Let $f\colon\Bool^n\to\Bool$ be a Boolean function. Then, 
\[\depth{}(f) = \cc(\kw{}_f),\quad \mbox{ and }\quad \size{}(f)= \monorect(\kw{}_f).\]
Let $f\colon\Bool^n\to\Bool$ be a monotone Boolean function. Then, 
\[\depth{+}(f) = \cc(\kw{+}_f),\quad \mbox{ and }\quad \size{+}(f)= \monorect(\kw{+}_f).\]
\end{theorem}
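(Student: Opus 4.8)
The plan is to establish the size and depth characterizations simultaneously by exhibiting two shape-preserving transformations: one that turns a De Morgan formula for $f$ into a communication protocol for $\kw{}_f$, and one that turns a protocol for $\kw{}_f$ into a De Morgan formula for $f$. Because each transformation preserves the underlying tree, the number of leaves and the length of the longest root-to-leaf path are matched exactly, which gives both equalities after minimizing. The monotone statement then follows by the same argument, where the asymmetry between $\vee$-nodes and $\wedge$-nodes corresponds to the asymmetry in the monotone game's output condition.

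For the direction that produces a protocol, given a De Morgan formula $F$ computing $f$, I would have Alice (holding $a$ with $f(a)=1$) and Bob (holding $b$ with $f(b)=0$) walk together from the root of $F$ toward a leaf, maintaining the invariant that at the current subformula $G$ we have $G(a)=1$ and $G(b)=0$. At a node $G = G_1 \vee G_2$, some $G_j$ satisfies $G_j(a)=1$; Alice sends one such $j$, and since $G(b)=0$ forces $G_1(b)=G_2(b)=0$ the invariant is restored. At $G = G_1 \wedge G_2$, Bob symmetrically sends a $j$ with $G_j(b)=0$. At a leaf $G \in \{x_i, \neg x_i\}$ the invariant says the literal is $1$ at $a$ and $0$ at $b$, forcing $a_i \neq b_i$ (and, when $G=x_i$, forcing $1=a_i\neq b_i=0$, which is what the monotone game wants); both players output $i$. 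The protocol tree is $F$ with internal nodes relabeled by who speaks and leaves relabeled by their output coordinate, so it has $\size{}(F)$ leaves and depth $\depth{}(F)$; minimizing gives $\monorect(\kw{}_f) \le \size{}(f)$ and $\cc(\kw{}_f) \le \depth{}(f)$ (and the analogues for $\kw{+}$).

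For the converse, given a protocol $\Pi$ for $\kw{}_f$ I would build a formula $F_\Pi$ by recursion on the protocol tree, proving the stronger claim by induction: if a subprotocol handles the rectangle $A' \times B'$ with $A' \subseteq f^{-1}(1)$ and $B' \subseteq f^{-1}(0)$, then the associated formula evaluates to $1$ on all of $A'$ and to $0$ on all of $B'$. At a leaf, the rectangle is monochromatic with some output $i$, i.e.\ $a_i \neq b_i$ for every $(a,b)\in A'\times B'$; after pruning empty rectangles we may assume $A',B'$ nonempty, and then this forces either $a_i=1$ throughout $A'$ and $b_i=0$ throughout $B'$, or the reverse, so the literal $x_i$ (resp.\ $\neg x_i$) does the job — and in the monotone game only the first case arises, so only positive literals appear. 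At an internal node where Alice speaks, the rectangle is $(A'_0\cup A'_1)\times B'$; setting $F_\Pi := F_0 \vee F_1$ for the two recursively built formulas gives a formula that is $1$ on $A'_0\cup A'_1$ and $0$ on $B'$, and when Bob speaks we use $\wedge$ with the split of $B'$ instead. Applied at the root with $A'=f^{-1}(1)$ and $B'=f^{-1}(0)$, the formula is $1$ on all of $f^{-1}(1)$ and $0$ on all of $f^{-1}(0)$, hence computes $f$; it has one leaf per leaf of $\Pi$ and depth equal to the number of bits on the longest branch, which is $\cc(\Pi)$. Minimizing yields $\size{}(f)\le\monorect(\kw{}_f)$ and $\depth{}(f)\le\cc(\kw{}_f)$, closing both equalities, and the monotone case is identical.

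The routine part is the bookkeeping: fan-in-two $\vee/\wedge$ nodes correspond to one-bit messages, the size measure is the number of leaves, and the protocol's cost is the longest branch. The only step needing a little care is the base case of the converse — deducing from monochromaticity together with nonemptiness of $A'$ and $B'$ that the separating coordinate behaves uniformly as $x_i$ or as $\neg x_i$ across the whole rectangle, and checking that empty rectangles can be pruned without increasing size or depth. I do not expect a real obstacle here; this is the classical argument of Karchmer and Wigderson, and the genuine work of the later sections lies in adapting this correspondence to the three-valued hazard-free setting rather than in the Boolean statement itself.
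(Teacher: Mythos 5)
Your proposal is the classical Karchmer-Wigderson argument, and it matches the route the paper takes for the analogous hazard-free statement (Lemmas~\ref{lem:formula-protocol} and~\ref{lem:protocol-formula} in Section~\ref{sec:hazfreeKWgameproofs}): walk the formula tree maintaining the invariant $G(a)=1$, $G(b)=0$ to get a protocol, and conversely replace Alice-nodes by $\vee$, Bob-nodes by $\wedge$, and monochromatic leaves by the appropriate literal, pruning empty rectangles. The paper cites \cite{KW90} for the Boolean case rather than reproving it, but your argument is correct, handles the monotone variant correctly (only positive literals arise at leaves), and is essentially the same proof.
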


We now extend the Karchmer-Wigderson games to the hazard-free setting.
For a Boolean function $f\colon\Bool^n\to\Bool$, recall from \eqref{eq:defhazfreeextension} that 
$\hfe{f}\colon\Tri^n\to\Tri$ is the hazard-free extension of $f$. 

\begin{definition}[Hazard-free Karchmer-Wigderson game]
  \label{def:hf-kw-rel}
  Let $f\colon \Bool^n \to \Bool$ be a Boolean function.
  The \emph{hazard-free Karchmer-Wigderson game} of $f$, denoted $\kw{\un}_f$,
  is the following communication problem:
  Alice gets $\alpha \in \Tri^n$ with $\hfe{f}(\alpha)=1$, 
  Bob gets $\beta \in \Tri^n$ with $\hfe{f}(\beta)=0$ and
  their goal is to determine a coordinate $i \in [n]$ such that 
  $\alpha_i \neq \beta_i$ and furthermore $\alpha_i \neq \un$ 
  and $\beta_i \neq \un$.
 \end{definition}
 
An example of a communication matrix for this game is shown in Figure~\ref{fig:firstmuxmatrix}.
\begin{figure}
\begin{subfigure}{0.5\textwidth}
\centering
\[
\begin{pNiceArray}{ccccccc}[first-row,first-col]
~   & 000 & 00\un & 001 & \un00 & 100 & 1\un0 & 110 \\
010 & 2 & 2 & 2,3 & 2 & 1,2 & 1 & 1 \\
01\un & 2& 2& 2 & 2 & 1,2 & 1 & 1 \\
011 & 2,3& 2& 2 & 2,3 & 1,2,3 & 1,3 & 1,3 \\
\un11 & 2,3 & 2 & 2 & 2,3 & 2,3 & 3 & 3\\
 101 & 1,3 & 1 & 1 & 3 & 3 & 3 & 2,3 \\
 1\un 1& 1,3 & 1 & 1 & 3 & 3 & 3 & 3 \\
111 & 1,2,3 & 1,2 & 1,2 & 2,3 & 2,3 & 3 & 3 
\end{pNiceArray}
\]
    \caption{The communication matrix $M_{\kw{\un}_{\MUX_1}}$.\vspace{2\baselineskip}}
\end{subfigure}
% \hspace{0.1\textwidth}
\begin{subfigure}{0.5\textwidth}
\[
\begin{pNiceArray}{ccccccc}[first-row,first-col,margin,rules/width=3pt]
~   & 000 & 00\un & 001 & \un00 & 100 & 1\un0 & 110 \\
010 & \Block[fill=red!15]{4-4}{} 2 & 2 & 2 & 2 & \Block[fill=yellow!30]{3-3}{} 1 & 1 & 1 \\
01\un & 2& 2& 2 & 2 & 1 & 1 & 1 \\
011 & 2& 2& 2 & 2 & 1 & 1 & 1 \\
\un11 & 2 & 2 & 2 & 2& \Block[fill=blue!15]{1-3}{} 3& 3 & 3\\
 101 & \Block[fill=orange]{3-3}{} 1 & 1 & 1 & \Block[fill=green!15]{3-4}{} 3 & 3 & 3 & 3 \\
 1\un 1& 1 & 1 & 1 & 3 & 3 & 3 & 3 \\
111 & 1 & 1 & 1 & 3 & 3 & 3 & 3 
\end{pNiceArray}
\]
    \caption{The monochromatic partition of $M_{\kw{\un}_{\MUX_1}}$.
    It is a result of applying the construction from Lemma~\ref{lem:formula-protocol} to Figure~\ref{fig:mux1-eg}(c).}
\end{subfigure}
    \caption{The communication matrix $M_{\kw{\un}_{\MUX_1}}$ and a monochromatic partition.}
    \label{fig:firstmuxmatrix}
\end{figure}
Observe that, for all $(\alpha,\beta)\in \hfe{f}^{-1}(1) \times \hfe{f}^{-1}(0)$, 
there exists an $i \in [n]$ such that 
$\alpha_i \neq \beta_i$, $\alpha_i \neq \un$ and $\beta_i \neq \un$~\footnote{Assume the contrary. Then $\alpha$ and $\beta$ must have some resolution in common, which we call $a$.
Hence $f(a)=1$ and $f(a)=0$, which is a contradiction.},
which implies that all cells in the communication matrix are nonempty.  

\begin{remark}\label{rem:shortcondition}
Note that the wordy condition ``$\alpha_i \neq \beta_i$ and $\alpha_i \neq \un$ and $\beta_i \neq \un$'' is equivalent to the simple $\alpha_i \oplus \beta_i = 1$, which is in complete analogy to $a_i \oplus b_i = 1$ in the classical Karchmer-Wigderson game, see Def.~\ref{def:kw-rel}, whereas we will show in Theorem~\ref{thm:coincide} that our game is actually a generalization of the monotone Karchmer-Wigderson game to the domain of all Boolean functions.
\end{remark}

Now using this generalized game $\kw{\un}_f$ we characterize 
the complexity of hazard-free De Morgan formulas for Boolean functions. 

\begin{theorem}
\label{thm:hf-kwthm}
Let $f\colon\Bool^n\to\Bool$ be a Boolean function. Then, 
\[\depth{\un}(f) = \cc(\kw{\un}_f),\quad \mbox{ and }\quad \size{\un}(f)= \monorect(\kw{\un}_f).\]
\end{theorem}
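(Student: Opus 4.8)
The plan is to mirror the classical Karchmer--Wigderson argument (Theorem~\ref{thm:kwthm}), but with implicants/implicates in place of $1$-inputs/$0$-inputs. We prove two inequalities in each direction; since size and depth are handled by the same recursive correspondence between formulas and protocols, I would package both in a single lemma of the form ``hazard-free De Morgan formulas for $f$ are in bijective correspondence with protocols for $\kw{\un}_f$, preserving leaf-count (up to the standard off-by-one) and depth.''

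For the direction $\size{\un}(f) \geq \monorect(\kw{\un}_f)$ and $\depth{\un}(f) \geq \cc(\kw{\un}_f)$: given a hazard-free De Morgan formula $F$ computing $f$, I build a protocol for $\kw{\un}_f$ by induction on the structure of $F$. Alice holds $\alpha$ with $\hfe f(\alpha)=1$, Bob holds $\beta$ with $\hfe f(\beta)=0$. If $F$ is a leaf $x_i$ (resp.\ $\neg x_i$), then $\hfe{x_i}(\alpha)=1$ forces $\alpha_i=1$ and $\hfe{x_i}(\beta)=0$ forces $\beta_i=0$ (resp.\ the reverse), so $i$ is a valid output with no communication. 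If $F = F_0 \vee F_1$ then $\hfe f(\alpha)=1$ means $\widetilde{F_0}(\alpha)=1$ or $\widetilde{F_1}(\alpha)=1$ (here I need the easy fact that $\widetilde{g\vee h} = \max(\tilde g,\tilde h)$ pointwise on $\Tri^n$, i.e.\ the hazard-free extension commutes with $\vee$ since $F$ is hazard-free so $\tilde F = \widetilde{F_0}\vee\widetilde{F_1}$ already holds as functions on $\Tri^n$), so Alice sends one bit naming a child $F_c$ with $\widetilde{F_c}(\alpha)=1$; meanwhile $\widetilde{F_0}(\beta)=\widetilde{F_1}(\beta)=0$ since the $\max$ is $0$, so Bob's input is still a valid $0$-input for the subformula $F_c$, and they recurse. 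The $\wedge$ case is dual with Bob speaking. This yields a protocol whose protocol tree is exactly the formula tree, giving $\monorect(\Pi)=\size{}(F)$ (modulo the leaf-vs-gate convention already flagged in the paper) and $\cc(\Pi)=\depth{}(F)$. The key point making the recursion legal is precisely that hazard-freeness of $F$ guarantees $\tilde F$ decomposes over the gates --- this is where the ``hazard-free'' hypothesis is used, and it is the main place the proof differs from a naive attempt.

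For the converse $\size{\un}(f) \leq \monorect(\kw{\un}_f)$ and $\depth{\un}(f)\leq\cc(\kw{\un}_f)$: given a protocol $\Pi$ for $\kw{\un}_f$, I construct a De Morgan formula $F$ from the protocol tree. At each internal node, associate the set $A'\times B'$ of surviving (implicant, implicate) pairs; if Alice speaks at that node we put a $\vee$, if Bob speaks we put a $\wedge$; at a leaf, which is monochromatic with some output coordinate $i$, I must decide between leaf $x_i$ and leaf $\neg x_i$ --- here I use that on the monochromatic rectangle every $\alpha\in A'$ has $\alpha_i\neq\un$, every $\beta\in B'$ has $\beta_i\neq\un$, and $\alpha_i\neq\beta_i$; moreover all $\alpha\in A'$ must agree in coordinate $i$ (if two implicants disagreed at $i$ while both differing from all implicates at $i$, one of them would agree with some implicate there), so the leaf label is well-defined: $x_i$ if $\alpha_i=1$ on $A'$, else $\neg x_i$. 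Then I prove by induction up the tree that the subformula rooted at each node, as a hazard-free extension, separates the implicants of $A'$ from the implicates of $B'$, i.e.\ evaluates to $1$ on every resolution of every $\alpha\in A'$ and to $0$ on every resolution of every $\beta\in B'$; at the root $A'$ is all implicants and $B'$ all implicates, which forces $F$ to compute $f$ on all of $\Bool^n$ \emph{and} to be hazard-free (since $\tilde F(\alpha)=1$ for all prime implicants $\alpha$ and $\tilde F(\beta)=0$ for all prime implicates, which by \eqref{eq:defhazfreeextension} pins $\tilde F$ everywhere). Size and depth transfer as before.

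The main obstacle, and the step I would spend the most care on, is the inductive invariant in the converse direction: I need the right statement about what a subformula coming from a sub-protocol-tree computes on ternary inputs, strong enough that combining two children under $\vee$ or $\wedge$ preserves it and that at the root it yields both correctness and hazard-freeness simultaneously. Concretely, the invariant ``for the rectangle $A'\times B'$ at a node, the subformula's hazard-free extension is $1$ on all resolutions of all of $A'$ and $0$ on all resolutions of all of $B'$'' is the one that threads through, but verifying the $\vee$-step requires knowing that when Alice splits $A' = A'_0\cup A'_1$ the children rectangles are $A'_c\times B'$ (Bob's side unchanged), and dually for $\wedge$; this is just the standard rectangle structure of protocol trees, but one must check it interacts correctly with the ternary semantics. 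Everything else --- the leaf analysis, the size/depth bookkeeping, the observation that every cell of $M_{\kw{\un}_f}$ is nonempty (already noted in the text) so protocols exist --- is routine.
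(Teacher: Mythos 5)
Your high-level plan is right (formula tree $\leftrightarrow$ protocol tree, Alice speaks at $\vee$, Bob at $\wedge$), but both directions hinge on an invariant that conflates two distinct objects, and this is where the argument has genuine gaps. Let $F$ be a formula computing the Boolean function $f$, with subformulas computing Boolean functions $f_0,f_1$. There are two different things one might track on a ternary input $\alpha$: the \emph{ternary circuit evaluation} $F(\alpha)$ (apply $\min/\max/1{-}x$ gate by gate), and the \emph{hazard-free extension} $\hfe{f}(\alpha)$ of the computed Boolean function. These satisfy $F(\alpha)\sqsubseteq\hfe{f}(\alpha)$ in the stability order, and they agree exactly when $F$ is hazard-free. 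Your invariant uses the second object; the paper's proof uses the first, and that choice is what makes the recursion go through.

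In the formula-to-protocol direction you track ``$\widetilde{F_c}(\alpha)=1$'' and justify the $\vee$-split by the claim that $\widetilde{F}=\widetilde{F_0}\vee\widetilde{F_1}$ when $F$ is hazard-free. That identity \emph{is} true at the root, but it does not recurse: the subformulas $F_0,F_1$ need not be hazard-free even when $F$ is (e.g.\ $F=(x\wedge\neg x)\vee(x\vee z)$ is a hazard-free formula for $x\vee z$, yet the subformula $x\wedge\neg x$ has a hazard at $x=\un$). Without hazard-freeness of the subformula you lose the decomposition, and the unconditional claim ``$\widetilde{g\vee h}=\max(\hfe g,\hfe h)$'' that you call easy is simply false (take $g=x$, $h=\neg x$, input $\un$). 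The paper avoids all of this by tracking the ternary circuit evaluation $G(\alpha)$, for which $G(\alpha)=G_0(\alpha)\vee G_1(\alpha)$ holds \emph{by definition of circuit semantics}, with no hazard-freeness hypothesis on the subformulas at all; hazard-freeness of $F$ is used exactly once, at the root, to get $F(\alpha)=\hfe f(\alpha)=1$.

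The protocol-to-formula direction has the mirror-image problem. Your invariant ``the subformula evaluates to $1$ on every resolution of every $\alpha\in A'$'' is a statement about the Boolean function computed by the subformula — i.e.\ about $\hfe{g}$. At the root it shows the constructed $F$ \emph{computes} $f$, but it does not show that the ternary circuit evaluation $F(\alpha)$ equals $\hfe f(\alpha)$, which is what hazard-freeness means. The paper's invariant is instead ``$G(\alpha)=1$ for all $\alpha\in A'$ and $G(\beta)=0$ for all $\beta\in B'$,'' with $G(\cdot)$ the ternary circuit evaluation; this propagates trivially and, applied at the root with $A'=\hfe f^{-1}(1)$, $B'=\hfe f^{-1}(0)$, gives $F(\alpha)=\hfe f(\alpha)$ whenever $\hfe f(\alpha)\in\Bool$, while the $\hfe f(\alpha)=\un$ case follows from stability-monotonicity of circuit evaluation ($F(\alpha)\sqsubseteq\hfe f(\alpha)=\un$ forces $F(\alpha)=\un$). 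Your leaf analysis and rectangle bookkeeping are fine; the fix in both directions is the same single substitution — run the induction on the ternary circuit evaluation of subformulas rather than on the hazard-free extension of the functions they compute.
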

The proof is a natural generalization of the proof of Theorem~\ref{thm:kwthm} and is provided in Section~\ref{sec:hazfreeKWgameproofs}.

\begin{remark}\label{rem:relatedwork}
We remark that a variant of the game $\kw{\un}_f$ has been
considered in prior works \cite{Hastad98,FMT21}.  
In this variant, the inputs to Alice and Bob remains the same 
but the goal is \emph{different}. 
More formally, Alice gets $\alpha\in \hfe{f}^{-1}(1)$,  
Bob gets $\beta\in\hfe{f}^{-1}(0)$ and their goal is to determine 
a coordinate $i\in[n]$ such that $\alpha_i\neq \beta_i$. 
That is, now a coordinate where one of them has $\un$ 
and the other has $0$ or $1$ is a valid answer. 
This is the subtle but crucial difference with respect to our game (Definition~\ref{def:hf-kw-rel}), where we forbid such answers 
by requiring that $\alpha_i\neq\un$ and $\beta_i\neq\un$. 
\end{remark}

\begin{remark}
Building on \cite{Raz:95} and \cite{Pudlak:10}, \cite{sokolov2017dag}
generalizes the Karchmer-Wigderson result from formulas to De Morgan circuits (i.e., Boolean circuits with negations only at the inputs). The corresponding communication games are played on directed acyclic graphs.
We remark that their proofs also work for hazard-free complexity, so we get a tight correspondence between hazard-free De Morgan circuit size and the correspoding hazard-free $\kw{}$-game.
\end{remark}

\subsection{Restriction to prime implicants and prime implicates}
We now prove that we can
restrict our attention to small (in some cases significantly smaller) submatrices of the communication matrix.
We will use this restricted version of the game to show that for monotone functions the hazard-free Karchmer-Wigderson game is \emph{equivalent} to the monotone Karchmer-Wigderson game, see Theorem~\ref{thm:coincide}.

\begin{theorem}\label{thm:kw-prime}
For any function $f$, the complexity (works for size and also for depth) of the game $\kw{\un}_f$ remains unchanged even if we restrict  Alice's input to prime implicants and Bob's input to prime implicates.
\end{theorem}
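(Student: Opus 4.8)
The plan is to show that restricting Alice to prime implicants and Bob to prime implicates does not change either the communication cost or the number of rectangles of $\kw{\un}_f$. The key structural fact is a monotonicity (or "domination") relation among rows and among columns of $M_{\kw{\un}_f}$ induced by the instability partial order. Concretely, suppose $\alpha, \alpha' \in \hfe{f}^{-1}(1)$ with $\alpha \leq \alpha'$ in the instability order (so $\alpha'$ is obtained from $\alpha$ by turning some $\un$'s into $0$'s or $1$'s; note $\alpha'$ is then "less stable than needed," wait — careful: a prime implicant is \emph{minimal}, so $\alpha'$ with $\alpha \le \alpha'$ is the more-resolved one). For any $\beta$, if $i$ is a valid answer for $(\alpha,\beta)$, meaning $\alpha_i \oplus \beta_i = 1$, then since $\alpha_i \neq \un$ we have $\alpha'_i = \alpha_i$, so $i$ is still a valid answer for $(\alpha',\beta)$. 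Hence the set of valid answers for the row indexed by $\alpha'$ contains that of the row indexed by $\alpha$; in matrix terms, row $\alpha$ "dominates" row $\alpha'$ entrywise (as sets of valid outputs). The symmetric statement holds for columns and implicates.

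From this domination relation the two halves follow quickly. For \textbf{depth} ($\cc$): a protocol solving the restricted game on prime implicants and prime implicates already tells Alice and Bob what to do when they hold arbitrary implicants/implicates — namely, Alice picks any prime implicant $\alpha^* \leq \alpha$ below her input $\alpha$ (one exists, by minimality/finiteness), Bob picks any prime implicate $\beta^* \leq \beta$, and they run the restricted protocol on $(\alpha^*, \beta^*)$. The output $i$ satisfies $\alpha^*_i \oplus \beta^*_i = 1$, and by the domination argument above (applied in both coordinates) $\alpha_i \oplus \beta_i = 1$ as well, so $i$ is valid for the original pair. This shows $\cc(\kw{\un}_f \text{ restricted}) \geq \cc(\kw{\un}_f)$; the reverse inequality is trivial since the restricted game is a subproblem. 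For \textbf{size} ($\monorect$): a monochromatic partition of the full matrix restricts to a monochromatic partition of the prime submatrix, so $\monorect(\text{restricted}) \leq \monorect(\text{full})$. Conversely, given a monochromatic partition of the prime submatrix into $t$ rectangles, extend it to the full matrix: assign each non-prime row $\alpha$ to the rectangles chosen for \emph{some} fixed prime implicant $\alpha^* \leq \alpha$, and likewise for columns, i.e. replace the prime index sets $A' \times B'$ of each rectangle by $\{\alpha : \alpha^* \in A'\} \times \{\beta : \beta^* \in B'\}$ where $\alpha \mapsto \alpha^*$, $\beta \mapsto \beta^*$ are fixed choice functions into primes. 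These extended sets still partition $\hfe{f}^{-1}(1) \times \hfe{f}^{-1}(0)$, and each is still monochromatic with the same label by the domination property. Hence $\monorect(\text{full}) \leq t$, giving equality.

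The main subtlety — and the place to be careful — is verifying that the extended rectangles are genuinely \emph{monochromatic}, i.e. that if $o$ was a valid common answer on $A' \times B'$ (prime indices), then $o$ remains valid on the blown-up rectangle. This is exactly where one needs the one-sided nature of the instability order: $\alpha^*_i \in \{0,1\}$ forces $\alpha_i = \alpha^*_i$ whenever $\alpha^* \le \alpha$, and similarly $\beta^*_i = \beta_i$; so $\alpha^*_i \oplus \beta^*_i = 1$ immediately upgrades to $\alpha_i \oplus \beta_i = 1$. One should also double-check the edge case that every implicant has at least one prime implicant below it and every implicate has at least one prime implicate below it — immediate from finiteness of $\Tri^n$ and the fact that the instability order is a partial order with the all-$\un$-on-the-stable-coordinates... — in fact just: repeatedly turn a $0/1$ into $\un$ while staying an implicant, which must terminate. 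I expect no real obstacle beyond bookkeeping; the proof is a routine but careful application of the domination lemma in both the protocol-tree and rectangle-partition formulations, mirroring the analogous reduction to minterms/maxterms in the monotone Karchmer-Wigderson setting.
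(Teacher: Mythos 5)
Your proof is correct and takes essentially the same approach as the paper: both directions reduce arbitrary implicants and implicates to primes via a choice function that turns stable bits into $\un$'s, and exploit the fact that stable coordinates of the prime persist upward in the instability order so that any answer valid for $(\alpha^*,\beta^*)$ remains valid for $(\alpha,\beta)$. The only cosmetic difference is that the paper handles both depth and size through a single protocol-simulation argument, whereas you spell out the size case separately by blowing up the rectangle partition — these are equivalent.
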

\begin{proof}
The complexity of the restricted game is obviously at most the complexity of the original game, since the game is now being played on a submatrix of the original matrix. For the other direction, observe that given an arbitrary implicant $\alpha$ and an arbitrary implicate $\beta$, Alice can choose a prime implicant $\alpha'$ that is obtained by flipping some stable bits in $\alpha$ to $\un$ and Bob can choose a prime implicate $\beta'$ that is obtained by flipping some stable bits in $\beta$ to $\un$, and play the restricted game on the input $(\alpha', \beta')$. Any valid answer in the restricted game is also a valid answer in the original game, since we are only flipping stable bits to $\un$. This proves that the complexity of the original game is at most the complexity of the restricted game. Therefore, both games have the same complexity.
\end{proof}

For example, consider the communication matrix of $\kw{\un}_{\MUX_1}$ given below, where we restricted the rows and columns to prime implicants and prime implicates. It is a submatrix of Figure~\ref{fig:firstmuxmatrix}(a).
\begin{equation}\label{eq:reducedmuxmatrix}
\begin{pNiceMatrix}[first-row,first-col]
~       & 00{\un}          &   {\un}00         & 1{\un}0 \\
01{\un} & \entry{2}      & \entry{2}       & \entry{1}   \\
{\un}11 & \entry{2}      & \entry{2, 3}  & \entry{3} \\
1{\un}1 & \entry{1}        & \entry{3}       & \entry{3} 
\end{pNiceMatrix}
.\end{equation}
We will use this equivalent reduced form of the hazard-free Karchmer-Wigderson game in the rest of the paper.

There is a natural counterpart to Theorem~\ref{thm:kw-prime} in the monotone world: in Definition~\ref{def:kw-relmon}, we can assume without loss of generality that Alice's input has minimal number of ones and Bob's input has maximal number of ones. We show that we can view the hazard-free Karchmer-Wigderson game as a generalization of the monotone Karchmer-Wigderson game.

\begin{theorem}\label{thm:coincide}
Let $f\colon\Bool^n\to\Bool$ be a \emph{monotone} function. 
Then, the games $\kw{\un}_f$ and $\kw{+}_f$ are equivalent.
\end{theorem}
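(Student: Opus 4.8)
The plan is to show that the two communication problems $\kw{\un}_f$ and $\kw{+}_f$ are literally the same communication problem once we use the reduced form from Theorem~\ref{thm:kw-prime} on the hazard-free side and the analogous minimal/maximal reduction on the monotone side. So the first step is to identify, for a monotone $f$, what the prime implicants and prime implicates of $\hfe f$ look like. I would argue that for monotone $f$ the prime implicants of $f$ in the Boolean sense (minimal $1$-sets, i.e.\ minterms) correspond bijectively to the prime implicants of $\hfe f$: given a minterm, i.e.\ a minimal set $S$ of coordinates such that setting them to $1$ forces $f=1$, the ternary string $\alpha$ with $\alpha_i=1$ for $i\in S$ and $\alpha_i=\un$ otherwise is an implicant of $\hfe f$ (every resolution has the $S$-coordinates set to $1$, hence value $1$ by monotonicity), and it is prime because flipping any $1$ to $\un$ would, by minimality of $S$, admit a resolution with value $0$. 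Conversely any prime implicant of $\hfe f$ can have no $0$-entries — a $0$-entry could be raised to $\un$ without losing the implicant property, using monotonicity — so it is of exactly this form. Dually, prime implicates of $\hfe f$ are exactly the strings with $\un$'s on a maximal $1$-set (maxterm) and $0$'s elsewhere; again the key point is that a prime implicate of $\hfe f$ can have no $1$-entries.

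The second step is to match this up with the monotone game. In $\kw{+}_f$ we may assume Alice holds $a\in f^{-1}(1)$ of minimal Hamming weight (a minterm) and Bob holds $b\in f^{-1}(0)$ of maximal Hamming weight (a maxterm), and the goal is a coordinate $i$ with $a_i=1,b_i=0$. Under the correspondence above, Alice's minterm $a$ becomes the prime implicant $\alpha$ ($\alpha_i=1$ on the support of $a$, $\un$ elsewhere), and Bob's maxterm $b$ becomes the prime implicate $\beta$ ($\beta_i=\un$ on the support of $b$, $0$ elsewhere). Now I would check that the winning conditions coincide: a coordinate $i$ is a valid answer in the reduced $\kw{\un}_f$ game iff $\alpha_i\oplus\beta_i=1$ (Remark~\ref{rem:shortcondition}), i.e.\ iff $\{\alpha_i,\beta_i\}=\{0,1\}$; given the shapes of $\alpha$ and $\beta$ this forces $\alpha_i=1$ and $\beta_i=0$, which says exactly $a_i=1$ and $b_i=0$ — the monotone winning condition. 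Since inputs are in bijection and legal answers are identical for corresponding inputs, the communication matrices (with their sets of valid answers per cell) are identical, so the two games have the same protocols and hence the same size and depth complexity; combined with Theorem~\ref{thm:kw-prime} this gives the claim.

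Two small points deserve care. First, I should confirm the correspondence is genuinely a bijection on inputs, which amounts to the standard fact that minterms of a monotone function are exactly its prime implicants and maxterms exactly its prime implicates; this is where I would lean on monotonicity most directly (raising a $0$ to $\un$, or lowering a $1$ to $\un$, preserving the implicant/implicate property). Second, when translating answers I should make sure that \emph{every} valid monotone answer is a valid hazard-free answer and vice versa, not just that one exists — but this is immediate from the coordinatewise description above, since for corresponding $(\alpha,\beta)$ and $(a,b)$ we have $\alpha_i\oplus\beta_i=1 \iff (a_i=1 \wedge b_i=0)$ for every $i$.

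The main obstacle, such as it is, is purely the bookkeeping of the prime-implicant/prime-implicate characterization for monotone $f$: making fully precise that a prime implicant of $\hfe f$ cannot contain a $0$ and a prime implicate cannot contain a $1$, and that what remains is in bijection with the Boolean minterms and maxterms. Everything after that is a direct comparison of winning conditions via Remark~\ref{rem:shortcondition}, with no real computation.
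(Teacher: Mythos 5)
Your proof is correct and rests on the same central observation as the paper's: for monotone $f$, the prime implicants of $\hfe f$ contain only $\{1,\un\}$ entries and the prime implicates only $\{0,\un\}$ entries, so the $\un\leftrightarrow\{0,1\}$ relabeling puts the reduced $\kw{\un}_f$ inputs in bijection with the minterm/maxterm inputs of $\kw{+}_f$ with identical valid-answer sets. The only difference is framing: the paper packages this as two one-directional reductions between the games, whereas you identify the reduced communication matrices directly — which is actually closer in spirit to the remark the paper places immediately after the theorem.
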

\begin{proof}
First, we show that the complexity of $\kw{\un}_f$ is at most that of $\kw{+}_f$. Using Theorem~\ref{thm:kw-prime}, we can assume that Alice's input is a prime implicant and Bob's input is a prime implicate. Since $f$ is monotone, any prime implicant of $f$ contains only $1$s and $\un$'s. Similarly, any prime implicate of $f$ contains only $0$s and $\un$'s. Now, Alice can flip every $\un$ in her input to $0$ and Bob can flip every $\un$ in his input to $1$ and play the game $\kw{+}_f$. Notice that by Definition~\ref{def:kw-relmon}, the output of this game will be a position where Alice's and Bob's input had different stable values originally.

For the other direction, Alice can flip every $0$ in her input to $\un$ and Bob can flip every $1$ in his input to $\un$. Since $f$ is monotone, Alice still has an input in $\hfe{f}^{-1}(1)$ and Bob still has an input in $\hfe{f}^{-1}(0)$. Now, the output of the game $\kw{\un}_f$ on these new inputs will also be a valid output for the game $\kw{+}_f$ since all stable bits in Alice's input are $1$ and all stable bits in Bob's input are $0$.
\end{proof}

\begin{remark}
We note another perspective on hazard-free $\kw{}$-game through the lens of monotone $\kw{}$-game. For this purpose we need to associate sets with implicants and implicates. For an implicant 
$\alpha \in \hfe{f}^{-1}(1)$ we define the set $S_\alpha$ of literals as follows: if $\alpha_i = 1$, then $x_i \in S_\alpha$ and if $\alpha_i=0$, then $\neg x_i\in S_\alpha$. In particular, if $\alpha_i=\un$ then neither $x_i$ nor $\neg x_i$ belongs to $S_\alpha$. Furthermore, $\alpha$ is said to be a \emph{prime} implicant if $S_\alpha$ is \emph{minimal} with respect to set inclusion. That is, no strict subset of $S_\alpha$ is also an implicant. 
Analogously, for an implicate $\beta \in \hfe{f}^{-1}(0)$ we define the set $T_\beta$ of literals as follows: if $\beta_i=1$, then $\neg x_i \in T_\beta$ and if $\beta_i=0$, then $x_i\in T_\beta$. Furthermore, $\beta$ is said to be a \emph{prime} implicate if $T_\beta$ is \emph{minimal} with respect to set inclusion.

By using Theorem~\ref{thm:kw-prime} and its counterpart in the monotone world, we observe that \emph{both} the monotone $\kw{}$-game and the hazard-free $\kw{}$-game has the exact same definition: Alice gets a set of literals corresponding to a prime implicant and Bob gets a set of literals corresponding to a prime implicate. Their goal is to find a literal in the intersection of the two sets.

This means that playing (this reduced version of) the hazard-free $\kw{}$-game is the same as playing the monotone $\kw{}$-game on a function that is not necessarily monotone.
\end{remark}
\begin{remark}
It is instructive to gain yet another perspective on hazard-free $\kw{}$-games via an application of a well-known result in communication complexity that informally says, for every communication total search problem $S$ (in particular also for the hazard-free $\kw{}$-game) there exists a partial monotone function $g$ such that 
the monotone $\kw{}$-game for $g$ is equivalent to the communication problem $S$. For a formal statement see~\cite[Lemma~2.3]{Gal:01} or \cite[Proposition~2.10]{PR:18} (and references therein). In our setting the partial monotone function $g:\Bool^{2n} \to \{0,1,\ast \}$ is defined as follows: an input $(y_1,\ldots ,y_n,z_1,\ldots ,z_n)$ is said to be admissible if for all $i\in[n]$ we have $y_i \cdot z_i = 0$, otherwise we call the input inadmissible. On an inadmissible input $g$ is undefined, which is denoted by $\ast$. Given an admissible input $(y,z)$, let us define 
$\alpha \in \Tri^n$ such that for all $i\in[n]$, $\alpha_i=y_i$ if $y_i\oplus z_i=1$, and otherwise $\alpha_i=\un$. Then, $g(y,z) = \Psi(\hfe{f}(\alpha))$, where $\Psi(0)=0$, $\Psi(1)=1$, and $\Psi(\un)=\ast$.  

Now it can be shown (see, e.g., \cite[Proposition~2.10]{PR:18}) that the hazard-free $\kw{}$-game for the Boolean function $f$ is equivalent to the monotone $\kw{}$-game for the partial function $g$. Thus the monotone $\kw{}$-games \emph{on partial functions} are expressive enough to capture hazard-free $\kw{}$-games.
Note that $g$ has twice as many inputs as~$f$.
%For these reasons we believe that our game is a more natural way of generalizing the monotone $\kw{}$-games to capture the hazard-free property. 
\end{remark}

\section{Hazard-Free Formulas for the Multiplexer Function}
\label{sec:exactsizeMUX}
We now use the hazard-free Karchmer-Wigderson game to give 
improved constructions of hazard-free formulas as well as proofs of their optimality.
Our starting point is the observation that the commonly used hazard-free formula for 
$\MUX_1$ in Figure~\ref{fig:mux1-eg}(b) is \emph{not} optimal w.r.t.\ size.
We find an optimal formula for it (Figure~\ref{fig:mux_formula}) which in turn leads to an optimal formula 
of size $2\cdot 3^n -1$ for $\MUX_n$. Following the discussion in Subsection~\ref{subsec:universal}, this upper bound also applies to all $n$-bit Boolean functions and improves upon Huffman's construction \cite{Huf:57}. 
However, a gap between the upper and lower bound still remains. We begin with some necessary basics on 
the multiplexer function.

\subsection{The Multiplexer Function and its Communication Matrix}
Recall, the multiplexer function $\MUX_n\colon\Bool^{n+2^n}\to\Bool$ is 
a Boolean function on $n+2^n$ variables defined as 
\begin{align}\label{eq:defn-mux}
    \MUX_n(s_1,\ldots,s_n,x_0,x_1,\ldots,x_{2^n-1}) =  x_{\bin(s_1,\ldots,s_n)},
\end{align}
where $\bin(s_1,\ldots,s_n)$ is the natural number represented by the binary number $s_1s_2\cdots s_n$.
We will be studying the communication matrix $M_{\kw{\un}_{\MUX_n}}$ of the hazard-free game $\kw{\un}_{\MUX_n}$. Following Theorem~\ref{thm:kw-prime}, 
we will restrict our attention to the submatrix given by 
prime implicants and implicates of $\MUX_n$. 
The following proposition gives the structure of the prime implicants and prime implicates.
\begin{proposition}\label{prop:mux-prime}
For any $n \geq 1$ and any string $\alpha \in \Tri^n$, there exist unique strings $\alpha' \in \{\un, 1\}^{2^n}$ and $\beta' \in \{0, \un\}^{2^n}$ such that $\alpha\alpha' \in \Tri^{n+2^n}$ is a prime implicant of $\MUX_n$ and $\alpha\beta' \in\Tri^{n+2^n}$ is a prime implicate of $\MUX_n$.
\end{proposition}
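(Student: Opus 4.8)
The plan is to give explicit formulas for $\alpha'$ and $\beta'$ in terms of $\alpha$, and then verify both that $\alpha\alpha'$ (resp.\ $\alpha\beta'$) is an implicant (resp.\ implicate) and that it is prime, and finally that no other choice works. The key structural fact is: for a ternary string $\alpha \in \Tri^n$, the set of resolutions of $\alpha$ determines a subcube $S_\alpha \subseteq \{0,1\}^n$ of the address bits, which in turn selects a set of data positions $I_\alpha = \{\bin(b) : b \in S_\alpha\} \subseteq \{0, \ldots, 2^n-1\}$. I would first observe that $\widetilde{\MUX_n}(\alpha\gamma) = 1$ iff $\gamma_j = 1$ for every $j \in I_\alpha$ (all selectable data positions are forced to $1$), and $\widetilde{\MUX_n}(\alpha\gamma) = 0$ iff $\gamma_j = 0$ for every $j \in I_\alpha$; this is immediate from the definition \eqref{eq:defhazfreeextension} since the resolutions of $\alpha\gamma$ that differ only in the address part range over all of $S_\alpha$ and read off $\gamma_{\bin(b)}$.

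Given this, the natural candidates are: $\alpha'_j = 1$ if $j \in I_\alpha$ and $\alpha'_j = \un$ otherwise; and $\beta'_j = 0$ if $j \in I_\alpha$ and $\beta'_j = \un$ otherwise. First I would check $\alpha\alpha'$ is an implicant: by the observation above it suffices that $\alpha'_j = 1$ for all $j \in I_\alpha$, which holds by construction, and the entries outside $I_\alpha$ are $\un$, so indeed $\alpha' \in \{\un,1\}^{2^n}$. For primality I need that flipping any stable bit of $\alpha\alpha'$ to $\un$ destroys the implicant property. Flipping one of the $1$'s at position $j \in I_\alpha$ to $\un$ clearly breaks it (there is now a resolution selecting that position and reading $0$). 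Flipping a stable address bit $\alpha_i \ne \un$ to $\un$ enlarges $S_\alpha$ to some $S' \supsetneq S_\alpha$, hence enlarges $I_\alpha$ to $I' \supsetneq I_\alpha$; since there is a position in $I' \setminus I_\alpha$ where $\alpha'$ is $\un$, there is a resolution reading $\un$... more carefully, a resolution reading $0$, so the implicant property fails. The argument for $\beta'$ being a prime implicate is entirely symmetric (swap the roles of $0$ and $1$).

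For uniqueness, suppose $\alpha\gamma$ is a prime implicant with $\gamma \in \{\un,1\}^{2^n}$ (the fact that a prime implicant cannot have a stable $0$ among the data bits follows because flipping a data $0$ to $\un$ does not change whether $\alpha\gamma$ is an implicant — $0$-valued data positions outside $I_\alpha$ are irrelevant, and no position in $I_\alpha$ can be $0$ in an implicant). Since $\alpha\gamma$ is an implicant, $\gamma_j = 1$ for all $j \in I_\alpha$. If additionally $\gamma_j = 1$ for some $j \notin I_\alpha$, then flipping that bit to $\un$ still leaves an implicant (positions outside $I_\alpha$ are never selected by any resolution of $\alpha$), contradicting primality; and flipping a stable address bit cannot be used to reduce $\gamma$, but one must also argue the address part of a prime implicant equals $\alpha$ — here is the subtlety: the proposition fixes the address part to be exactly $\alpha$ and asks for the data part, so I only need uniqueness of $\gamma$ given the address part, which is what the above delivers. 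Hence $\gamma = \alpha'$. Symmetrically $\beta'$ is the unique data completion making $\alpha\beta'$ a prime implicate.

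The main obstacle I anticipate is being careful about the primality direction involving address bits: one has to make sure that flipping an address bit $\alpha_i$ from a stable value to $\un$ genuinely breaks primality, which uses that $I_\alpha \subsetneq I_{\alpha'}$ where $\alpha'$ has a $\un$ somewhere the implicant needs a $1$ — and conversely that we are not required to also minimize over the address part, since the statement pins the address part to $\alpha$. Once the bijection $\alpha \leftrightarrow S_\alpha \leftrightarrow I_\alpha$ and the membership characterization of $\widetilde{\MUX_n}$ are set up cleanly, the rest is a short case check; I would keep the write-up terse by doing the implicant case in full and noting the implicate case follows by the $0 \leftrightarrow 1$ symmetry of $\widetilde{\MUX_n}$ on the data bits.
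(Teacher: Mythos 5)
Your proposal is correct and follows essentially the same approach as the paper's proof: construct $\alpha'$ with $1$s exactly at the data positions indexed by resolutions of $\alpha$, verify the implicant property, check primality by arguing that flipping any data $1$ or any stable address bit breaks implicancy, and argue uniqueness because all $1$s in $\alpha'$ are forced and any extra stable data bit would contradict minimality. You add some helpful bookkeeping via $S_\alpha$ and $I_\alpha$ and a clean membership characterization of $\widetilde{\MUX_n}(\alpha\gamma)$, but the underlying argument is the same.
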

\begin{proof}
For $\alpha \in \Tri^n$, consider the string $\alpha' \in \{\un, 1\}^{2^n}$ that has $1$s at positions indexed by the resolutions of $\alpha$ and that has $\un$'s elsewhere. We have $\hfe{\MUX}_n(\alpha\alpha') = 1$ showing that $\alpha\alpha'$ is an implicant. We now show it is a prime implicant. If any $1$s in $\alpha'$ are made a $\un$, then the output becomes a $\un$, because a resolution of $\alpha$ now indexes into a $\un$. If any Boolean value in $\alpha$ is made a $\un$, then at least one resolution of the selector bits is a position in the data bits that is a $\un$. Therefore this implicant is minimal. This is also that only prime implicant that can be obtained by extending $\alpha$ because the $\alpha'$ part is minimal and all $1$s in it are necessary. The argument for prime implicates is symmetric.
\end{proof}

The following proposition states the inductive structure of communication matrices of $\MUX_n$.
\begin{proposition}
\label{prop:mux-comm-matrix}
The communication matrix of $\kw{\un}_{\MUX_n}$, when restricted to prime implicants and prime implicates, has the following inductive structure:
\begin{compactitem}
    \item For $n = 1$, 
    \begin{align*}
        M_{\kw{\un}_{\MUX_1}} = \; \begin{pNiceMatrix}[first-row,first-col]
    ~       & 00{\un}          &   {\un}00         & 1{\un}0 \\
    01{\un} & \entry{x_0}      & \entry{x_0}       & \entry{s}   \\
    {\un}11 & \entry{x_0}      & \entry{x_0, x_1}  & \entry{x_1} \\
    1{\un}1 & \entry{s}        & \entry{x_1}       & \entry{x_1} 
    \end{pNiceMatrix}
    .
    \end{align*}
    \item   For $n\geq2$, 
    \begin{align*}
    M_{\kw{\un}_{\MUX_n}} = \begin{pNiceMatrix}[first-row,first-col,margin]
    ~ & 0   &   {\un}         & 1 \\
    \phantom{00}0 & M_0 & M_0 & s_1 \\
    \phantom{00}{\un} & M_0 & M_0 \cup M_1 & M_1   \\
    \phantom{00}1 & s_1 & M_1  & M_1 
    \end{pNiceMatrix},
    \end{align*}
    where the row (resp., column) labeled $\gamma \in \Tri$ represents the set of prime implicants (resp., prime implicates) with $s_1 = \gamma$. We define the formulas
    \begin{align*}
     F_0 &=\MUX_{n-1}(s_2\dotsc, s_{n-1}, x_0, \dotsc, x_{2^{n-1}-1}) = \MUX_n(0,s_2,\ldots,s_n,x_0,x_1,\ldots,x_{2^n-1}),\\ 
     F_1 &=\MUX_{n-1}(s_2, \dotsc, s_{n}, x_{2^{n-1}}, \dotsc, x_{2^{n}-1}) =  \MUX_n(1,s_2,\ldots,s_n,x_0,x_1,\ldots,x_{2^n-1}),
     \end{align*}
     and matrices $M_0 := M_{\kw{\un}_{F_0}}$, $M_1 := M_{\kw{\un}_{F_1}}$, $s_1$ stands for a block matrix of all entries $s_1$, and 
    $M_0 \cup M_1$ is obtained by taking entry-wise union of $M_0$ and $M_1$. In other words, $M_0 \cup M_1$ represents the matrix where the $(i,j)$ entry equals $(M_{0})_{i,j}\cup (M_{1})_{i,j}$.
\end{compactitem}
Note that in the communication matrix we have changed the entries from indices of variables to their labels, as in 
instead of $1,2,3$ we write the more intuitive symbols $s_1,\dotsc ,s_n,x_0,x_1,\dotsc,x_{2^n-1}$ for better readability (cp.~\eqref{eq:reducedmuxmatrix}).

\end{proposition}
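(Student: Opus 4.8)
The plan is to prove the two claimed block decompositions directly from Proposition~\ref{prop:mux-prime}, which already tells us that the rows and columns of the reduced matrix are in bijection with strings $\alpha\in\Tri^n$: each $\alpha$ determines a unique prime implicant $\alpha\alpha'$ (with $\alpha'\in\{\un,1\}^{2^n}$ having $1$s exactly at resolutions of $\alpha$) and a unique prime implicate $\alpha\beta'$ (symmetrically). So I would index the rows of $M_{\kw{\un}_{\MUX_n}}$ by $\alpha\in\Tri^n$ (Alice's prime implicant) and columns by $\gamma\in\Tri^n$ (Bob's prime implicate), and compute the entry set, i.e.\ all coordinates $i$ with $(\alpha\alpha')_i\oplus(\gamma\gamma'')_i=1$, where $\gamma''$ is the data part of Bob's prime implicate.

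\textbf{Base case $n=1$.} Here there are $|\Tri|=3$ prime implicants and $3$ prime implicates, giving the $3\times 3$ matrix. I would just enumerate: for $\alpha\in\{0,\un,1\}$ the prime implicant is $\alpha\alpha'$ with $\alpha'\in\{\un,1\}^2$ as in Proposition~\ref{prop:mux-prime}, so $0\mapsto 01\un$, $\un\mapsto\un11$, $1\mapsto1\un1$; symmetrically the prime implicates are $00\un$, $\un00$, $1\un0$. For each of the nine pairs one computes the set of coordinates where both entries are stable and differ; this reproduces the displayed matrix with entries labelled $s_1,x_0,x_1$ (matching \eqref{eq:reducedmuxmatrix} up to renaming indices by variable labels). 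This is a finite check.

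\textbf{Inductive step $n\ge 2$.} Partition rows by the value $\alpha_1=s_1\in\{0,\un,1\}$ of Alice's selector-prefix bit, and columns by $\gamma_1=s_1\in\{0,\un,1\}$ of Bob's selector-prefix bit; this yields the $3\times3$ block structure. For a fixed choice of $(\alpha_1,\gamma_1)$ I must identify the sub-block. The key structural fact is that fixing $s_1=0$ in $\MUX_n$ gives $\MUX_{n-1}$ on $(s_2,\dots,s_n,x_0,\dots,x_{2^{n-1}-1})$ and fixing $s_1=1$ gives $\MUX_{n-1}$ on the upper half of the data bits; this is exactly the definition of $F_0,F_1$. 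Writing $\alpha=\alpha_1\rho$ with $\rho\in\Tri^{n-1}$, the prime implicant $\alpha\alpha'$ restricted to the relevant half of data bits is precisely the prime implicant of $F_0$ (or $F_1$) associated to $\rho$, while on the irrelevant half $\alpha'$ is all-$\un$ (no resolution of $\alpha$ with $s_1$ fixed reaches there). I would then go through the six ``pure'' blocks: when $\alpha_1=\gamma_1=0$, both players live entirely in the $F_0$-world (the $x$-coordinates in the upper data half are $\un$ for both, and $s_1$ agrees), so the entry set is exactly $(M_0)_{\rho,\sigma}$; analogously $\alpha_1=\gamma_1=1$ gives $M_1$. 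When $\alpha_1=0,\gamma_1=1$ (or vice versa), coordinate $s_1$ is stable and differs, contributing $s_1$; one must check no other coordinate can contribute — Alice's $1$s sit only in the lower half, Bob's $0$s only in the upper half, and on each half the other player is all-$\un$, so $\{s_1\}$ is the whole entry set. The four blocks where exactly one of $\alpha_1,\gamma_1$ equals $\un$: say $\alpha_1=\un,\gamma_1=0$; then Alice's resolutions span both halves, so her data part has $1$s in both halves, but Bob ($s_1=0$) has $0$s only in the lower half, and on the upper half Bob is all-$\un$; the $s_1$ coordinate is not stable for Alice; hence the entry set is exactly $(M_0)_{\rho,\sigma}$. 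Symmetrically $\alpha_1=\un,\gamma_1=1$ gives $M_1$, and the two transposed cases likewise. Finally the center block $\alpha_1=\gamma_1=\un$: both players' selector resolutions span both halves, Alice has $1$s in both data halves and Bob has $0$s in both data halves, and $s_1$ is unstable for both, so the entry set is the union of the contributions from the two halves, i.e.\ $(M_0)_{\rho,\sigma}\cup(M_1)_{\rho,\sigma}$ — this is where $M_0\cup M_1$ comes from.

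\textbf{Main obstacle.} The real content — and the only place one can slip — is the careful bookkeeping in the center block and the four ``one-$\un$'' blocks: one has to be sure that a coordinate in the ``wrong'' data half really is $\un$ for the relevant player (so it cannot be in the entry set) and that the row/column indexing of $M_0$ and $M_1$ are aligned so that the entry-wise union $M_0\cup M_1$ is well-defined. This alignment is exactly guaranteed by Proposition~\ref{prop:mux-prime}: both $F_0$ and $F_1$ have their prime implicants and prime implicates indexed by the same set $\Tri^{n-1}$ via $\rho\mapsto$ (prime implicant/implicate), so $(M_0)_{\rho,\sigma}$ and $(M_1)_{\rho,\sigma}$ refer to the same pair of $\Tri^{n-1}$-strings and can be unioned coordinate-wise. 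Once that identification is fixed, each of the nine blocks is a short deterministic computation, and the proposition follows.
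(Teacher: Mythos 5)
Your approach — index rows and columns by $\Tri^n$ via Proposition~\ref{prop:mux-prime} and split into nine blocks according to $(\alpha_1,\gamma_1)$ — is sound and is, in spirit, the same recursive decomposition the paper invokes in one line ($\MUX_n=\MUX_1(s_1,F_0,F_1)$), just spelled out block by block. Your verifications for the $(0,0)$, $(0,\un)$, $(\un,0)$, $(\un,\un)$, $(\un,1)$, $(1,\un)$, $(1,1)$ blocks are correct: in each case the only stable–stable disagreements occur in $s_2,\dotsc,s_n$ and in the relevant half of the data bits, which is exactly the corresponding $M_0$, $M_1$, or $M_0\cup M_1$ entry.

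There is, however, a genuine error in your treatment of the $(0,1)$ and $(1,0)$ blocks. You argue that since Alice's $1$s are confined to the lower data half, Bob's $0$s to the upper data half, and each is all-$\un$ on the opposite half, ``$\{s_1\}$ is the whole entry set.'' But this argument only rules out contributions from the data bits. It says nothing about the remaining selector bits $s_2,\dotsc,s_n$. If $\alpha=0\rho$ and $\gamma=1\sigma$ and $\rho_{j-1}\oplus\sigma_{j-1}=1$ for some $j\ge 2$, then $s_j$ is also a valid answer, so the entry set is
\[
\{s_1\}\ \cup\ \{\,s_j : 2\le j\le n,\ \rho_{j-1}\oplus\sigma_{j-1}=1\,\},
\]
which is strictly larger than $\{s_1\}$ in general. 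The paper's own Figure~\ref{fig:commmatrix} exhibits this: for instance, the entry at (row $01\un1\un\un$, column $10\un\un0\un$) is $\{s_1,s_2\}$. To be fair, the proposition's phrase ``a block matrix of all entries $s_1$'' is itself imprecise and, read literally, is contradicted by Figure~\ref{fig:commmatrix}; what is actually true (and what every downstream application in Theorem~\ref{thm:mux-upper} and Lemma~\ref{lem:e-kw-mux-ub} needs) is only that $s_1$ lies in \emph{every} cell of those two off-diagonal blocks, so they are monochromatic with colour $s_1$. You should replace the claim ``$\{s_1\}$ is the whole entry set'' with ``$s_1$ lies in every entry of this block''; the remainder of your argument, and the proposition's intended content, then go through.
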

See Figure~\ref{fig:commmatrix} for the example of $M_{\kw{\un}_{\MUX_2}}$.
\begin{figure}
    \centering
\[
\begin{pNiceMatrix}[margin,first-row,first-col]
{} & 000\un\un\un & 0\un00\un\un & 01{\un}0{\un}{\un} & \un 00{\un}0{\un} & \un\un 0000 & \un 1{\un}0{\un}0 & 10{\un}{\un}0{\un} & 1\un{\un}{\un}00 & 11{\un}{\un}{\un}0 \\
001{\un}{\un}{\un} & x_{0} & x_{0} & s_2 & x_{0} & x_{0} & s_2 & s_1 & s_1 & s_1, s_2 \\
0{\un}11\un\un & x_{0} & x_{0}, x_{1} & x_{1} & x_{0} & x_{0},x_{1} & x_{1} & s_1 & s_1 & s_1 \\
01\un1\un\un & s_2 & x_{1} & x_{1} & s_2 &x_{1}& x_{1} & s_1, s_2 & s_1 & s_1 \\
\un 01\un1\un& x_{0} & x_{0} & s_2 & x_{0}, x_{2} &x_{0},x_{2}&  s_2 & x_{2} & x_{2} & s_2 \\
\un\un 1111&  x_{0} & x_{0},x_{1} & x_{1} & x_{0},x_{2}&x_{0},x_{1},x_{2},x_{3}& x_{1},x_{3} & x_{2} & x_{2},x_{3}& x_{3}\\
\un 1\un1\un1 & s_2 & x_{1} & x_{1} & s_2 &x_{1},x_{3}& x_{1}, x_{3} & s_2 & x_{3} & x_{3} \\
10\un\un1\un & s_1 & s_1 & s_1, s_2 & x_{2} &x_{2}& s_2 & x_{2} & x_{2} & s_2 \\
1\un\un\un11 & s_1 & s_1 & s_1 & x_{2} &x_{2},x_{3}& x_{3} & x_{2} & x_{2}, x_{3} & x_{3} \\
11\un\un\un1 & s_1, s_2 & s_1 & s_1 & s_2 &x_{3}& x_{3} & s_2 & x_{3} & x_{3} 
\end{pNiceMatrix}
.\]
    \caption{The communication matrix for $\kw{\un}_{\MUX_2}$.}
    \label{fig:commmatrix}
\end{figure}
\begin{proof}[Proof of Proposition~\ref{prop:mux-comm-matrix}]
For $n=1$ the proof follows by inspection and when $n\geq 2$  it follows from the following recursive decomposition:  
$\MUX_n = \MUX_1(s_1, F_0, F_1)$. 
\end{proof}

We also need the following well-known general technique used in communication complexity that allows us to exploit 
repeated submatrices within a communication matrix.
\begin{proposition}\label{prop:parallel}
Let $M$ be a communication matrix such that 
$M = \begin{pmatrix}A& A\\A&A\end{pmatrix}$ or 
$M=\begin{pmatrix}A& A\end{pmatrix}$. Then, 
\[\cc(M) = \cc(A)\quad \mbox{ and }\quad \monorect(M) = \monorect(A).\]
\end{proposition}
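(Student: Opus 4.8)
The plan is to prove both equalities by establishing inequalities in each direction, which is routine for this kind of ``block-repetition'' statement. First I would handle the two shapes uniformly: write $M$ as a horizontal (and possibly also vertical) concatenation of copies of $A$, and observe that the row index set of $M$ is a union of copies of the row index set of $A$, and similarly for columns, with the entry at position $((i,\text{copy}),(j,\text{copy}'))$ equal to $A_{i,j}$, independent of which copies are chosen.

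For the easy directions $\cc(M)\le\cc(A)$ and $\monorect(M)\le\monorect(A)$: given an optimal protocol $\Pi$ for $A$, Alice and Bob simply ignore which copy their input lies in, project to the underlying $A$-coordinates, and run $\Pi$. Since every entry of $M$ agrees with the corresponding entry of $A$, any answer valid for $A$ is valid for $M$. This protocol has the same cost and the same number of leaves, so the leaves partition $M$ into at most $\monorect(A)$ monochromatic rectangles (each leaf rectangle of $\Pi$, lifted by taking the full preimage under the projection, is still combinatorial and still monochromatic because it carries a single valid output). Hence $\cc(M)\le\cc(A)$ and $\monorect(M)\le\monorect(A)$.

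For the reverse directions $\cc(A)\le\cc(M)$ and $\monorect(A)\le\monorect(M)$: fix one distinguished copy of the rows and one distinguished copy of the columns, so that $A$ appears literally as a submatrix of $M$. Restricting any protocol for $M$ to that submatrix gives a protocol for $A$ of no greater cost, and the monochromatic rectangle partition of $M$, intersected with that submatrix, gives a monochromatic rectangle partition of $A$ using at most $\monorect(M)$ rectangles (intersection of a combinatorial rectangle with a submatrix is again a combinatorial rectangle, possibly empty, and monochromaticity is inherited). Combining with the easy directions yields $\cc(M)=\cc(A)$ and $\monorect(M)=\monorect(A)$.

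I do not expect any real obstacle here; the only point requiring the mildest care is the bookkeeping that ``combinatorial rectangle'' and ``monochromatic'' are both preserved under the projection lift and under restriction to a submatrix, and that the number of rectangles does not increase (rather than merely being bounded by the product), which is immediate because we use exactly the leaf structure of the given protocol. This is precisely the standard observation that duplicating rows or columns of a communication matrix changes neither the deterministic communication complexity nor the partition number, specialized to the (multi-valued) search-problem setting used throughout the paper.
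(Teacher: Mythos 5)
Your proposal is correct and is essentially the same argument as the paper's: the lower bound $\cc(A)\le\cc(M)$ and $\monorect(A)\le\monorect(M)$ comes from $A$ being a submatrix of $M$, and the upper bound comes from lifting a protocol for $A$ to a protocol for $M$ by projecting away the copy-index, with the lifted leaf rectangles being preimages of the original leaf rectangles (hence still combinatorial and monochromatic). The only cosmetic difference is that the paper spells out the lift step-by-step (Alice's first bit splits $R_1\uplus R_2$, lifted to $R_1'\uplus R_2'$, etc.) and handles the $\begin{pmatrix}A&A\end{pmatrix}$ case by observing it is a submatrix of the $2\times 2$ block case, whereas you describe the lift in one stroke via the projection map and treat both block shapes uniformly; this is a mild streamlining, not a different proof.
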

\begin{proof}
Since $A$ is a submatrix of $M$, the following inequalities are self-evident
\[\cc(M) \geq \cc(A)\quad \mbox{ and }\quad \monorect(M) \geq \monorect(A).\]
For the other direction
we treat only the case $M = \begin{pmatrix}A& A\\A&A\end{pmatrix}$, as the other case follows because $\begin{pmatrix}A& A\end{pmatrix}$ is a submatrix of 
$\begin{pmatrix}A& A\\A&A\end{pmatrix}$.
We consider a protocol $\Pi$ of $A$. 
Using $\Pi$ we give a protocol $\Pi'$ for $M$ such that 
$\cc(\Pi')=\cc(\Pi)$ and $\monorect(\Pi') =\monorect(\Pi)$.

Without loss of generality, assume Alice is the first player to start in $\Pi$ and she sends a bit to indicate whether her input lies in the set of rows $R_1$ or $R_2$ such that the disjoint union $R_1 \uplus R_2$ is the set of all rows in $A$. Now let $R_1'$ be the union of rows $R_1$ from each copy of $A$ within $M$ and $R_2'$ be the union of rows $R_2$ from each copy of $A$ within $M$. Clearly, $R_1'\uplus R_2'$ is the set of all rows in~$M$. 
Then in $\Pi'$ too, Alice will start by sending a bit to indicate whether her input lies in $R_1'$ or $R_2'$. 
Upon receiving the message from Alice,
Bob now communicates using
completely analogous adjustments to the protocol
$\Pi$. 
The two players proceed in this way
and keep making these adjustments
until they reach the end of $\Pi$. 
From the protocol it follows that at the end of $\Pi'$ each rectangle
in $M$
is a union of the same rectangles from each copy of $A$, and thus monochromatic. Clearly, we also have $\cc(\Pi') = \cc(\Pi)$ and 
$\monorect(\Pi') = \monorect(\Pi)$. 
Since $\Pi$ is an arbitrary protocol for $A$, we obtain $\cc(M) \leq \cc(A)$ and $\monorect(M) \leq \monorect(A)$. 
\end{proof}

\subsection{Size optimal hazard-free formula}\label{subsec:sizeoptimalhazfree}
We now give the size optimal hazard-free formula for the multiplexer function. 
As a simple application of Theorem~\ref{thm:hf-kwthm}, 
we begin with finding optimal formulas for $\MUX_1$.

\begin{proposition}
\label{prop:mux1-optimal}
The optimal (size and depth) hazard-free De Morgan formula for $\MUX_1(s,x_0,x_1)$ has size $5$ and depth $3$. 
\end{proposition}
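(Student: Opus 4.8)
The plan is to apply Theorems~\ref{thm:hf-kwthm} and~\ref{thm:kw-prime}, so that $\size{\un}(\MUX_1)=\monorect(M)$ and $\depth{\un}(\MUX_1)=\cc(M)$, where $M$ is the $3\times 3$ reduced communication matrix of $\kw{\un}_{\MUX_1}$ from~\eqref{eq:reducedmuxmatrix} (equivalently, the labelled version in Proposition~\ref{prop:mux-comm-matrix}). Both bounds then become statements about partitioning $M$ into monochromatic combinatorial rectangles.

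For the upper bound I would exhibit the partition of $M$ into the five monochromatic rectangles $\{01\un,\un11\}\times\{00\un,\un00\}$ (colour $x_0$), $\{01\un\}\times\{1\un0\}$ (colour $s$), $\{\un11\}\times\{1\un0\}$ (colour $x_1$), $\{1\un1\}\times\{00\un\}$ (colour $s$) and $\{1\un1\}\times\{\un00,1\un0\}$ (colour $x_1$), and observe that it is induced by the depth-$3$ protocol in which Alice first announces whether her row is $1\un1$ and the players then resolve each branch with at most two further bits. By Theorems~\ref{thm:hf-kwthm} and~\ref{thm:kw-prime} this yields $\size{\un}(\MUX_1)\le 5$ and $\depth{\un}(\MUX_1)\le 3$; the associated formula is the one drawn in Figure~\ref{fig:mux_formula}.

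For the lower bound, since a cost-$c$ protocol has at most $2^c$ leaves we have $\cc(M)\ge\lceil\log_2\monorect(M)\rceil$, so both bounds follow once we show $\monorect(M)\ge 5$, i.e.\ that no partition of $M$ into monochromatic rectangles uses fewer than five of them. The two cells carrying the label $\{s\}$ are $(01\un,1\un0)$ and $(1\un1,00\un)$; they lie in distinct rows and distinct columns, and every other cell in their rows and columns carries a single label different from $s$. Hence the only monochromatic rectangle containing either of these cells is the cell itself, so every monochromatic partition of $M$ contains these two singletons as blocks. The remaining seven cells (all of $M$ except these two) must then be covered by the other blocks; if two combinatorial rectangles sufficed, their sizes would sum to $7$, and since a combinatorial rectangle inside a $3\times 3$ matrix has size in $\{1,2,3,4,6,9\}$ the only options are $\{1,6\}$ and $\{3,4\}$. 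A $6$-cell rectangle is a union of two full rows or of two full columns, and no such choice avoids both excluded cells; a $3$-cell rectangle is a full row or full column, of which only the middle row and the middle column avoid both excluded cells, and in either case the complementary four cells do not form a combinatorial rectangle. This contradiction gives $\monorect(M)\ge 2+3=5$, hence $\size{\un}(\MUX_1)=5$ and $\depth{\un}(\MUX_1)=3$.

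The step I expect to be the main obstacle is precisely this last one --- forcing a \emph{fifth} rectangle for the seven non-$s$ cells --- because the usual fooling-set and rank lower bounds top out at four here (there is a set of four cells no two of which share a monochromatic rectangle, but no such set of five), so one cannot avoid arguing directly about the exactness of the partition and the admissible shapes of combinatorial rectangles in a $3\times 3$ matrix. Once that is in hand, everything else is a routine finite check.
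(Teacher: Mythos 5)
Your upper bound is identical to the paper's: the same five-rectangle partition of the reduced $3\times 3$ matrix, which is the one the paper colours in \eqref{eq:mux1monorect} and converts into the formula of Figure~\ref{fig:mux_formula} via Lemma~\ref{lem:protocol-formula}. Where you diverge is the lower bound. The paper defers size optimality to Theorem~\ref{thm:mux-lower}, which proves $\monorect(\kw{\un}_{\MUX_n}) \geq 2\cdot 3^n - 1$ for all $n$ by reducing from $\textsf{subcube-intersect}_n$ (Lemma~\ref{lem:subcube-intersect-mux}) and then applying the Mehlhorn--Schmidt bound $\monorect \geq 2\,\rank - 1$ to the full-rank $3^n\times 3^n$ matrix (Lemma~\ref{lem:subcube-intersect}). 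You instead argue directly on the $3\times 3$ set-valued matrix: you force the two $\{s\}$-cells to be singleton blocks in any monochromatic partition, and then show by exhausting rectangle shapes inside a $3\times 3$ grid that the remaining seven cells cannot be covered by two combinatorial rectangles. Both arguments are correct. The paper's route buys generality --- one lemma gives the tight bound for every $n$ and the subcube-intersect reduction is a reusable tool --- while yours is entirely self-contained for $n=1$ and does not rely on the machinery of Section~\ref{sec:exactsizeMUX}. One caveat on your closing remark: ``rank lower bounds top out at four here'' is only accurate for rank applied naively to the set-valued matrix $M$; after the subcube-intersect reduction the relevant $0/1$ matrix is $\bigl(\begin{smallmatrix}1&1&0\\1&1&1\\0&1&1\end{smallmatrix}\bigr)$, which has rank $3$, so the rank bound does give $2\cdot 3-1=5$. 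The obstacle you identify is therefore avoidable, though only by a reduction that is considerably less obvious than your direct case analysis.
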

\begin{proof}
Consider the communication matrix of $\kw{\un}_{\MUX_1}$ shown below, 
\[
\begin{pNiceMatrix}[first-row,first-col]
~       & 00{\un}          &   {\un}00         & 1{\un}0 \\
01{\un} & \entry{x_0}      & \entry{x_0}       & \entry{s}   \\
{\un}11 & \entry{x_0}      & \entry{x_0, x_1}  & \entry{x_1} \\
1{\un}1 & \entry{s}        & \entry{x_1}       & \entry{x_1} 
\end{pNiceMatrix}
.\]
We find the following protocol for $\kw{\un}_{\MUX_1}$ by inspection:
\begin{equation}\label{eq:mux1monorect}
\begin{pNiceMatrix}[first-row,first-col,margin]
\CodeBefore
\cellcolor{blue!15}{2-3}
\cellcolor{yellow!30}{1-3}
\cellcolor{orange}{3-1}
\Body
~       & 00{\un}          &   {\un}00         & 1{\un}0 \\
01{\un} & \Block[fill=red!15]{2-2}{} x_0 & x_0 & s \\
{\un}11 & x_0 & x_0 & x_1   \\
1{\un}1 & s & \Block[fill=green!15]{1-2}{} x_1   & x_1 
\end{pNiceMatrix}
.\end{equation}
Using Lemma~\ref{lem:protocol-formula} with the above protocol, 
we obtain the hazard-free formula for $\MUX_1$ shown in Figure~\ref{fig:mux_formula}.  
The optimality of depth follows from the optimality of size. 
We defer the proof of the optimality of size to Theorem~\ref{thm:mux-lower}, the general case of $\MUX_n$. 
\end{proof}
\begin{remark}\label{rem:demystify}
To demystify the construction in Figure~\ref{fig:mux_formula} 
we note that it is simply the hazard-free DNF of $\MUX_1$, the formula \((s\wedge x_1) \vee (\neg s \wedge x_0) \vee (x_0 \wedge x_1)\),  
with an application of distributivity of $\wedge$ over $\vee$ 
to reduce the size. 
\end{remark}    

Now, using the recursive decomposition of $\MUX_n$ we obtain the 
following upper bound. 
\begin{theorem}
\label{thm:mux-upper}
The multiplexer function $\MUX_n$ has hazard-free
formulas 
of size $2 \cdot 3^n - 1$ and depth $3n$ for all $n \geq 1$.
\end{theorem}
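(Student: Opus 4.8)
The plan is to induct on $n$, obtaining the formula for $\MUX_n$ by a single recursive substitution into the optimal hazard-free formula $E$ for $\MUX_1$ from Figure~\ref{fig:mux_formula}, i.e.\ $E = (x_0\wedge(x_1\vee\neg s))\vee(x_1\wedge s)$ (cf.\ Remark~\ref{rem:demystify}), which is hazard-free and has size $5$ and depth $3$ by Proposition~\ref{prop:mux1-optimal}. The base case $n=1$ is exactly that proposition, since $2\cdot 3^1-1=5$ and $3\cdot 1=3$. For $n\ge 2$, I would use the decomposition $\MUX_n=\MUX_1(s_1,F_0,F_1)$ from Proposition~\ref{prop:mux-comm-matrix}, where $F_0=\MUX_{n-1}(s_2,\dots,s_n,x_0,\dots,x_{2^{n-1}-1})$ and $F_1=\MUX_{n-1}(s_2,\dots,s_n,x_{2^{n-1}},\dots,x_{2^n-1})$. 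By the induction hypothesis (applied after renaming variables) there are hazard-free formulas $D_0$ for $F_0$ and $D_1$ for $F_1$, each of size $2\cdot 3^{n-1}-1$ and depth $3(n-1)$. Let $F$ be obtained from $E$ by replacing the single $x_0$-leaf by a copy of $D_0$, each of the two $x_1$-leaves by a copy of $D_1$, and the variable $s$ by $s_1$.

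The size and depth bounds are then mechanical. The formula $E$ has exactly one $x_0$-leaf, two $x_1$-leaves, and two $s$-leaves, so $\size{}(F)=\size{}(D_0)+2\,\size{}(D_1)+2=3(2\cdot 3^{n-1}-1)+2=2\cdot 3^n-1$. For the depth, the leaves of $E$ lie at depths at most $3$ (concretely, the $x_0$-leaf at depth $2$, the two $x_1$-leaves at depths $2$ and $3$, and the $s$-leaves at depths $2$ and $3$), so substituting formulas of depth $3(n-1)$ gives $\depth{}(F)\le 3+3(n-1)=3n$ (and in fact equality holds).

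The substantive point is that $F$ is hazard-free, i.e.\ that it computes $\hfe{\MUX}_n$ on all of $\Tri^{n+2^n}$. Since three-valued evaluation of a circuit is compositional by definition, evaluating $F$ on $\alpha\in\Tri^{n+2^n}$ first evaluates the sub-formulas $D_0$ and $D_1$, which, being hazard-free, output $\hfe{F_0}(\alpha_L)$ and $\hfe{F_1}(\alpha_R)$, where $\alpha_L,\alpha_R$ denote the restrictions of $\alpha$ to the input coordinates of $F_0$ and $F_1$; the remaining formula $E$, being hazard-free for $\MUX_1$, then computes $\hfe{\MUX}_1$ of these values, so $F(\alpha)=\hfe{\MUX}_1\bigl(\alpha_1,\hfe{F_0}(\alpha_L),\hfe{F_1}(\alpha_R)\bigr)$. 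It remains to verify the lifting identity $\hfe{\MUX}_n(\alpha)=\hfe{\MUX}_1\bigl(\alpha_1,\hfe{F_0}(\alpha_L),\hfe{F_1}(\alpha_R)\bigr)$. This is a short case distinction on $\alpha_1\in\{0,\un,1\}$: using that $\hfe{\MUX}_m(\sigma,y)$ equals $1$ (resp.\ $0$) exactly when $y_j=1$ (resp.\ $=0$) for every index $j$ that is a resolution of $\sigma$, together with the fact that the index set of resolutions of $\un\alpha_2\cdots\alpha_n$ is the disjoint union of those of $0\alpha_2\cdots\alpha_n$ and $1\alpha_2\cdots\alpha_n$, and that $F_0,F_1$ read disjoint blocks of data bits, each of the three cases collapses to the inductive descriptions of $\hfe{F_0}$ and $\hfe{F_1}$. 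Note that this composition step works precisely because $\MUX_1$ merely selects one of its data inputs, so no reconvergent-fanout pathology (as in $x\wedge\neg x$) arises; hazard-freeness is not preserved under arbitrary composition.

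I expect this hazard-freeness check — really, keeping straight exactly which coordinates $D_0$ and $D_1$ read, and then the three-case verification of the lifting identity — to be the only place that needs care; everything else is bookkeeping. An alternative route, more in the spirit of the paper, stays inside the game framework: by Theorem~\ref{thm:hf-kwthm} it suffices to monochromatically partition $M_{\kw{\un}_{\MUX_n}}$ with few rectangles and small protocol depth, and the block decomposition of Proposition~\ref{prop:mux-comm-matrix} together with a mild strengthening of Proposition~\ref{prop:parallel} lifts the $5$-rectangle, depth-$3$ partition of $M_{\kw{\un}_{\MUX_1}}$ in \eqref{eq:mux1monorect} to a partition of $M_{\kw{\un}_{\MUX_n}}$ into $3\,\monorect(\kw{\un}_{\MUX_{n-1}})+2$ rectangles realized by a protocol of cost $\cc(\kw{\un}_{\MUX_{n-1}})+3$, giving the same recursions $r_n=3r_{n-1}+2$ and $d_n=d_{n-1}+3$ with $r_1=5$, $d_1=3$.
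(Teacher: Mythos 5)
Your proof is correct, and the size and depth bookkeeping matches. The one genuine difference from the paper is in how hazard-freeness of the composed formula $F$ is justified. The paper routes this through the Karchmer–Wigderson framework: it shows the protocol induced by $F$ monochromatically partitions the block-structured communication matrix of $\kw{\un}_{\MUX_n}$, using Proposition~\ref{prop:mux-comm-matrix} and Proposition~\ref{prop:parallel}. Your primary argument instead verifies hazard-freeness directly: compositionality of ternary evaluation gives $F(\alpha)=\hfe{\MUX}_1\bigl(\alpha_1,\hfe{F_0}(\alpha_L),\hfe{F_1}(\alpha_R)\bigr)$ once $D_0,D_1,E$ are hazard-free, and then one checks the lifting identity $\hfe{\MUX}_n(\alpha)=\hfe{\MUX}_1\bigl(\alpha_1,\hfe{F_0}(\alpha_L),\hfe{F_1}(\alpha_R)\bigr)$ by a three-way case split on $\alpha_1$. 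This is a cleaner, more elementary route for an upper bound, and it is in fact the route the paper alludes to in the sentence after Remark~\ref{rem:demystify} (``the upper bound in Theorem~\ref{thm:mux-upper} can also be proved without using the game''); the game-based proof the paper gives is there mainly to exercise the machinery that the lower bounds require. Two small remarks. First, your caveat about reconvergent fanout is well placed, but the reason the composition is safe is not that $\MUX_1$ ``merely selects'' — it is that ternary formula evaluation is literally functional composition (each leaf-copy of $D_1$ sees the same input and so returns the same ternary value), so $F(\alpha)=E(\alpha_1,D_0(\alpha_L),D_1(\alpha_R))$ holds unconditionally, and then hazard-freeness of $E,D_0,D_1$ plus the lifting identity finish it; what can fail for arbitrary compositions is the lifting identity itself, not compositionality. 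Second, in your sketch of the game-based alternative, Proposition~\ref{prop:parallel} as stated already covers the $2\times 2$ block of $M_0$'s and the $1\times 2$ block of $M_1$'s appearing in the paper's partition, so no strengthening is actually needed there.
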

\begin{proof}
We construct the formula inductively. The construction for $\MUX_1$ 
is given by Proposition~\ref{prop:mux1-optimal}. 
Recall that we can write $\MUX_n(s_1, \dotsc, s_{n}, x_0, \dotsc ,x_{2^n-1})$ recursively as the formula $F = \MUX_1(s_1, F_0, F_1)$, where
\[
F_0 =\MUX_{n-1}(s_2,\dotsc,s_{n},x_0,\dotsc,x_{2^{n-1}-1})
\ \text{ and } \
F_1 =\MUX_{n-1}(s_2,\dotsc,s_{n},x_{2^{n-1}},\dotsc,x_{2^n - 1}).
\]
By the induction hypothesis, both $F_0$ and $F_1$ have hazard-free formulas of size $2 \cdot 3^{n-1} - 1$ and depth $3(n-1)$. Using the hazard-free formula for $\MUX_1$, given in Figure~\ref{fig:mux_formula}, 
to implement $F$ yields a formula of size $2 \cdot 3^n - 1$ and depth $3n$ 
for $\MUX_n$. 

It remains to prove that the constructed formula $F$ is hazard-free. 
Using Lemma~\ref{lem:formula-protocol}, it suffices to show that 
the protocol using $F$ correctly solves the hazard-free $\kw{}$-game 
$\kw{\un}_{\MUX_n}$. In other words, the communication matrix of 
$\kw{\un}_{\MUX_n}$ is partitioned into monochromatic rectangles by the protocol given by $F$. 
We will prove it by induction on $n$. The base case, $n=1$, 
is given by Proposition~\ref{prop:mux1-optimal}. Now consider the inductive formula $F$ for $\MUX_n$ shown in Figure~\ref{fig:muxn_formula}. Following Lemma~\ref{lem:formula-protocol}, when Alice and Bob 
reach the 
colored nodes in $F$ (see Figure~\ref{fig:muxn_formula}), then 
we obtain the following partition of $M_{\kw{\un}_{\MUX_n}}$ as a block matrix, where $M_i := M_{\kw{\un}_{F_i}}$ for $i \in \{0, 1\}$, and $s_1$ stands for a block matrix of all entries $s_1$:
\[
\begin{pNiceMatrix}[first-row,first-col,margin]
\CodeBefore
\cellcolor{blue!15}{2-3}
\cellcolor{green!15}{3-2,3-3}
\cellcolor{yellow!30}{1-3}
\cellcolor{orange}{3-1}
\Body
~       & 0          &   {\un}         & 1 \\
0 & \Block[fill=red!15]{2-2}{} M_0 & M_0 & s_1 \\
{\un} & M_0 & M_0 & M_1   \\
1 & s_1 & M_1   & M_1 
\end{pNiceMatrix}
,\]
where the row (resp., column) labeled $\gamma \in \Tri$ represents the 
set of prime implicants (resp., implicates) with $s_1 = \gamma$. 
Now from the monochromatic partition of $M_i$ using $F_i$, $i\in\{0,1\}$, and Proposition~\ref{prop:parallel}, we get a  monochromatic partition of the communication matrix of
$\kw{\un}_{\MUX_n}$.
\begin{figure}%{0.3\textwidth}
\centering
    \begin{tikzpicture}[xscale=0.6, yscale=0.8]
    \node [circle,draw,minimum size=1em,inner sep=1pt] {$\vee$} [sibling distance=3cm]
      child { node [circle,draw,minimum size=1em,inner sep=1pt] {$\wedge$} [sibling distance=2cm]
        child {node [circle,draw,fill=red!15,minimum size=1em,inner sep=1pt] {$F_0$}}
        child {node [circle,draw,minimum size=1em,inner sep=1pt] {$\vee$}
          child {node [circle,draw,fill=blue!15,minimum size=1em,inner sep=1pt] {$F_1$}}
          child {node [circle,draw,fill=yellow!30,minimum size=1em,inner sep=1pt] {$\neg s_1$}}}}
      child {node [circle,draw,minimum size=1em,inner sep=1pt] {$\wedge$} [sibling distance=2cm]
        child {node [circle,draw,fill=green!15,minimum size=1em,inner sep=1pt] {$F_1$}}
        child {node [circle,draw,fill=orange,minimum size=1em,inner sep=3pt] {$s_1$}}};
    \end{tikzpicture}
    \caption{$F$: a size-optimal hazard-free formula for $\MUX_n$}
    \label{fig:muxn_formula}
\end{figure}
\end{proof}

\begin{remark}
We remark that the construction in Theorem~\ref{thm:mux-upper} can be made into an \emph{alternating} formula:
the communication matrix is symmetric and hence the monochromatic partition can be transposed so that Bob starts the communication.
We now use this transposed partition on the blue $F_1$ node in Figure~\ref{fig:muxn_formula}.
\end{remark}

We now prove that the above construction for $\MUX_n$ is optimal 
with respect to size. For this purpose, we study the communication problem 
associated with the following \emph{subcube intersection} function, 
\begin{align*}
    \textsf{subcube-intersect}_n\colon \Tri^n \times \Tri^n \to \Bool,
\end{align*}
where $\textsf{subcube-intersect}_n(\alpha,\beta) = 1$ if and only if
the subcubes defined by $\alpha$ and $\beta$ in $\Bool^n$ intersect, i.e., if $\alpha$ and $\beta$ have a common resolution. We note that the subcube intersection function is the same as the equality function when
restricting its domain of definition to Boolean values only.
The equality function is widely used in classical communication complexity for proving lower bounds. We also note that the subcube intersection function
cannot be implemented by any circuit over $\{0,\un,1\}$
(and hence in particular is not the hazard-free extension of any Boolean function), even for $n=1$, because
$\textsf{subcube-intersect}_1(\un,\un) = 1$, but
$\textsf{subcube-intersect}_1(0,1) = 0$ \footnote{In any circuit implementation, if $C(\alpha)=1$, then for all resolutions $a$ of $\alpha$ we also have $C(a)=1$, which is easily seen by induction. Alternatively, this can be seen by the fact that all gates (and hence the whole circuit) are monotone with respect to the partial order of stability ($\un \sqsubseteq 0$, $\un \sqsubseteq 1$, $0$ and $1$ incomparable), so switching unstable inputs to stable inputs can only keep an output $\un$ or switch an output from $\un$ to a stable value, but not change a stable output.}.
Let us see how the subcube intersection problem 
helps in capturing the complexity of the hazard-free game $\kw{\un}_{\MUX_n}$. 
\begin{lemma}
\label{lem:subcube-intersect-mux}
The \textup{$\textsf{subcube-intersect}_n$} communication problem reduces to the communication problem $\kw{\un}_{\MUX_n}$ with no extra cost. 
%That is, a monochromatic partition of the communication matrix of $\kw{\un}_{\MUX_n}$ is also a monochromatic partition for the communication matrix of  $\textsf{subcube-intersect}_n$. 
\end{lemma}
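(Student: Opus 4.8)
The plan is to set up a mapping reduction that uses no communication beyond a single call to a $\kw{\un}_{\MUX_n}$ protocol. By Theorem~\ref{thm:kw-prime} we may work with the version of $\kw{\un}_{\MUX_n}$ restricted to prime implicants and prime implicates, and Proposition~\ref{prop:mux-prime} pins these down exactly. So, given a $\textsf{subcube-intersect}_n$ instance $(\alpha,\beta)\in\Tri^n\times\Tri^n$, Alice forms the unique prime implicant of $\MUX_n$ whose first $n$ coordinates equal $\alpha$ --- whose data part thus carries a $1$ precisely at the positions $\bin(a)$ for resolutions $a$ of $\alpha$, and $\un$ elsewhere --- and Bob forms the unique prime implicate of $\MUX_n$ whose first $n$ coordinates equal $\beta$ --- whose data part carries a $0$ precisely at the positions $\bin(b)$ for resolutions $b$ of $\beta$, and $\un$ elsewhere. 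These are a legal input pair for $\kw{\un}_{\MUX_n}$.

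I would then run an optimal protocol for $\kw{\un}_{\MUX_n}$ on this pair; it returns a coordinate $i$ at which Alice's and Bob's strings carry opposite stable bits (so neither is $\un$). The decoding is: answer $1$ to $\textsf{subcube-intersect}_n$ if $i$ is a data coordinate, and answer $0$ if $i$ is one of the first $n$ (selector) coordinates. Correctness is a short case analysis. If $\alpha$ and $\beta$ share a resolution $a$, then in each selector coordinate $\alpha$ and $\beta$ agree wherever both are stable, so no selector coordinate is an admissible answer; the data coordinate $\bin(a)$, however, is admissible (Alice has a stable $1$ there, Bob a stable $0$), so the returned $i$ must be a data coordinate and the decoding outputs $1$, correctly. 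If $\alpha$ and $\beta$ have no common resolution, then at a data coordinate $\bin(c)$ Alice has a stable bit only if $c$ resolves $\alpha$ and Bob has a stable bit only if $c$ resolves $\beta$, and these cannot both hold; hence no data coordinate is admissible, the returned $i$ is a selector coordinate, and the decoding outputs $0$, correctly. Note that the argument shows that \emph{every} admissible answer decodes correctly, which is the right thing since a protocol may return any admissible coordinate.

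Since the transformation of inputs and the decoding of the output are purely local, the reduction adds no communication and does not increase the number of protocol leaves; consequently $\cc(\textsf{subcube-intersect}_n)\le\cc(\kw{\un}_{\MUX_n})$ and $\monorect(\textsf{subcube-intersect}_n)\le\monorect(\kw{\un}_{\MUX_n})$, which is the content of ``reduces with no extra cost''. I do not foresee a real obstacle; the only delicate point is that the whole dichotomy hinges on the defining restriction in Definition~\ref{def:hf-kw-rel} that a coordinate where one side holds $\un$ is \emph{not} a valid answer --- it is exactly this that makes a data-coordinate answer witness a common resolution and a selector-coordinate answer witness a separating bit (the variant in Remark~\ref{rem:relatedwork} would not give this).
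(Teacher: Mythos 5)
Your proposal is correct and follows essentially the same reduction as the paper: both construct the unique prime implicant extending $\alpha$ and the unique prime implicate extending $\beta$, invoke a $\kw{\un}_{\MUX_n}$ protocol, and decode based on whether the returned coordinate is a selector or a data variable. The only difference is that you spell out the case analysis showing that every admissible answer decodes correctly, whereas the paper states this dichotomy as an observation; your elaboration is a clean fill-in of that step.
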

\begin{proof}
Given inputs $\alpha,\beta \in \Tri^n$ to the \textsf{subcube-intersect}$_n$ 
problem, Alice and Bob modify their inputs as follows \emph{without} communication. 
\begin{compactitem}
    \item Alice constructs $\alpha' \in \{\un,1\}^{2^n}$ such that $\alpha'$ 
    has ones only at the positions indexed by the subcube of resolutions of $\alpha$. 
    \item Bob constructs $\beta'\in\{\un,0\}^{2^n}$ such that $\beta'$ 
    has zeroes only at the positions indexed by the subcube of resolutions of $\beta$. 
\end{compactitem}
Now they can solve the game $\kw{\un}_{\MUX_n}$ on inputs $\alpha\alpha'$ and $\beta\beta'$. Observe that if the subcubes $\alpha$ and $\beta$ intersect 
then answers to $\kw{\un}_{\MUX_n}$ lie in the set of data variables $\{x_0,\dotsc ,x_{2^n-1}\}$, otherwise they lie in the set of selector 
variables $\{s_1, \dotsc , s_n\}$. Therefore, from the answers to the $\kw{\un}_{\MUX_n}$ game they can deduce whether the subcubes intersect or not, 
again without communication. 
\end{proof}

Using the rank lower bound technique of \cite{MS:82} (See also \cite[Lemma~1.28]{KN96} and the discussion following the lemma.), we know that 
\begin{equation}\label{eq:MehohornSchmidt}
\psize(\textsf{subcube-intersect}_n)\geq2\cdot\rank(M_{\textsf{subcube-intersect}_n})-1,
\end{equation}
where $M_{\textsf{subcube-intersect}_n}$ is interpreted as a matrix over $\mathbb R$ with 0s and 1s as entries.
We prove the following tight bound on the rank of $M_{\textsf{subcube-intersect}_n}$:
\begin{lemma}
\label{lem:subcube-intersect}
The communication matrix of \textup{$\textsf{subcube-intersect}_n$} is of \emph{full} rank. 
That is, the rank of $M_{\textup{\textsf{subcube-intersect}$_n$}}$
equals $3^n$ for all $n\geq1$.
\end{lemma}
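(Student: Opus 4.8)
The plan is to observe that subcube intersection factors coordinate-wise, which exhibits $M_{\textsf{subcube-intersect}_n}$ as the $n$-fold Kronecker power of the corresponding $3\times 3$ matrix for $n=1$; the claim then follows from multiplicativity of rank (and of determinant) under Kronecker products.

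First I would note that $\alpha,\beta \in \Tri^n$ have a common resolution if and only if for every coordinate $i$ the single-coordinate subcubes $\alpha_i$ and $\beta_i$ have a common resolution, i.e., $(\alpha_i,\beta_i)\neq(0,1)$ and $(\alpha_i,\beta_i)\neq(1,0)$. Hence, ordering $\Tri^n$ lexicographically,
\[
M_{\textsf{subcube-intersect}_n} \;=\; M_{\textsf{subcube-intersect}_1}^{\otimes n},
\qquad
M_{\textsf{subcube-intersect}_1} \;=\; \begin{pmatrix} 1 & 1 & 0 \\ 1 & 1 & 1 \\ 0 & 1 & 1 \end{pmatrix},
\]
where the rows and columns of the $n=1$ matrix are indexed by $0,\un,1$ in this order. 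Equivalently, this block decomposition can be read off the recursive structure in Proposition~\ref{prop:mux-comm-matrix} and verified by induction on $n$, the induction step being $M_{\textsf{subcube-intersect}_n} = M_{\textsf{subcube-intersect}_1}\otimes M_{\textsf{subcube-intersect}_{n-1}}$.

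Next I would check the base case: $\det M_{\textsf{subcube-intersect}_1} = -1 \neq 0$, so the $n=1$ matrix has rank $3$. Since $\rank(A\otimes B) = \rank(A)\cdot\rank(B)$ for real matrices, we conclude $\rank(M_{\textsf{subcube-intersect}_n}) = 3^n$, which is full rank since the matrix is $3^n\times 3^n$. I do not expect a serious obstacle: the only step needing a moment's thought is spotting the tensor structure, and the only subtlety is that the statement concerns the real rank — but since the base-case determinant equals $-1$, using $\det(A\otimes B) = \det(A)^{\dim B}\det(B)^{\dim A}$ and induction gives $\det M_{\textsf{subcube-intersect}_n} = \pm 1$, so the matrix is in fact invertible over every field.
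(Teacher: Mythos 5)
Your proof is correct and follows essentially the same route as the paper: both identify $M_{\textsf{subcube-intersect}_n}$ as the $n$-fold Kronecker power of the $3\times 3$ base-case matrix and then invoke multiplicativity of rank (the paper phrases the decomposition as an inductive block structure on the first coordinate, whereas you additionally give the cleaner coordinate-wise factorization argument, but these are the same observation). Your extra remark about the determinant being $\pm 1$, hence invertibility over every field, is a nice bonus but not needed.
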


This immediately implies our size lower bound:
\begin{theorem}
\label{thm:mux-lower}
  Any hazard-free
formula for $\MUX_n$ requires $2 \cdot 3^n - 1$ leaves for all $n\geq 1$.
\end{theorem}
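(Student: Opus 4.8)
The plan is to combine the hazard-free Karchmer-Wigderson characterization with the subcube-intersection reduction and the rank lower bound, so that the theorem reduces to results already established above. First, by Theorem~\ref{thm:hf-kwthm} we have $\size{\un}(\MUX_n) = \monorect(\kw{\un}_{\MUX_n})$, so it suffices to prove $\monorect(\kw{\un}_{\MUX_n}) \geq 2\cdot 3^n - 1$.

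Second, I would upgrade the reduction of Lemma~\ref{lem:subcube-intersect-mux} from a statement about communication cost $\cc$ to one about the number of monochromatic rectangles $\monorect$. In that reduction Alice maps her input $\alpha$ to $\alpha\alpha'$ and Bob maps his input $\beta$ to $\beta\beta'$, where $\alpha'$ depends only on $\alpha$ and $\beta'$ only on $\beta$ (no communication), and the answer of $\kw{\un}_{\MUX_n}$ determines whether the two subcubes intersect. Hence, given any protocol $\Pi$ solving $\kw{\un}_{\MUX_n}$, precomposing with the product map $(\alpha,\beta)\mapsto(\alpha\alpha',\beta\beta')$ yields a protocol for $\textsf{subcube-intersect}_n$; the leaf partition of $M_{\kw{\un}_{\MUX_n}}$ pulls back under this product map to a partition of $M_{\textsf{subcube-intersect}_n}$ into at most $\monorect(\Pi)$ combinatorial rectangles (preimages of combinatorial rectangles under product maps are again combinatorial rectangles), each monochromatic since a fixed $\kw{}$-answer fixes the subcube-intersection bit. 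Therefore $\monorect(\kw{\un}_{\MUX_n}) \geq \monorect(\textsf{subcube-intersect}_n)$.

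Third, I would invoke the rank lower bound \eqref{eq:MehohornSchmidt}, $\monorect(\textsf{subcube-intersect}_n) \geq 2\cdot\rank(M_{\textsf{subcube-intersect}_n}) - 1$, and then Lemma~\ref{lem:subcube-intersect}, which gives $\rank(M_{\textsf{subcube-intersect}_n}) = 3^n$. Stringing the four bounds together yields $\size{\un}(\MUX_n) \geq 2\cdot 3^n - 1$, matching the upper bound of Theorem~\ref{thm:mux-upper}, and the argument is uniform in $n\geq 1$ because all ingredients hold there.

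Given that Lemmas~\ref{lem:subcube-intersect-mux} and~\ref{lem:subcube-intersect} are available, the theorem itself is short; the two points needing care are (i) that a zero-communication reduction preserves the leaf count and not merely $\cc$, which is precisely the observation about preimages of rectangles above, and (ii) the full-rank fact behind Lemma~\ref{lem:subcube-intersect}. For the latter I would observe that $M_{\textsf{subcube-intersect}_n}$ is the $n$-fold Kronecker power of the $3\times 3$ single-coordinate intersection matrix (two subcubes of a product intersect iff they intersect in every coordinate), and that this $3\times 3$ matrix has determinant $-1$, hence rank $3$, so the $n$-fold power has rank $3^n$. Everything else is bookkeeping.
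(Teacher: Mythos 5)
Your proposal follows exactly the paper's proof: reduce to the monochromatic rectangle count via Theorem~\ref{thm:hf-kwthm}, lower-bound it by $\monorect(\textsf{subcube-intersect}_n)$ through Lemma~\ref{lem:subcube-intersect-mux}, and finish with the rank bound \eqref{eq:MehohornSchmidt} and Lemma~\ref{lem:subcube-intersect}. Your two refinements --- spelling out that the zero-communication reduction preserves the leaf count (not just $\cc$) because preimages of rectangles under product maps are rectangles, and computing $\rank = 3^n$ directly from the Kronecker power of a determinant-$(-1)$ matrix rather than by induction --- are correct and tidy but do not change the argument's structure.
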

\begin{proof}
Using Theorem~\ref{thm:hf-kwthm}, it is sufficient to show that the
communication matrix of $\kw{\un}_{\MUX_n}$ requires 
$2\cdot3^n-1$ monochromatic rectangles, i.e., 
\( \psize(\kw{\un}_{\MUX_n}) \geq 2\cdot 3^n-1.\)
This is readily checked:
\begin{align*}
\monorect(\kw{\un}_{\MUX_n}) &\stackrel{\text{Lem.~\ref{lem:subcube-intersect-mux}}}{\geq} \monorect(\textsf{subcube-intersect}_n)
\\
&\stackrel{\eqref{eq:MehohornSchmidt}}{\geq} 2\cdot\rank(M_{\textsf{subcube-intersect}_n})-1 \stackrel{\text{Lem.~\ref{lem:subcube-intersect}}}{=} 2 \cdot 3^n -1.\qedhere
\end{align*}

\end{proof}

It now remains to prove Lemma~\ref{lem:subcube-intersect}. 
\begin{proof}[Proof of Lemma~\ref{lem:subcube-intersect}] 
We prove it by induction on $n$. For the base case, $n=1$ and the 
communication matrix $M_{\textsf{subcube-intersect}_1}$ is as follows:
\[
\begin{pNiceMatrix}[first-row,first-col,margin]
~       & 0          &   {\un}         & 1 \\
0 &  1 & 1 & 0 \\
{\un} & 1 & 1 & 1   \\
1 & 0 & 1   & 1 
\end{pNiceMatrix}
.\]
Clearly $\rank(M_{\textsf{subcube-intersect}_1}) = 3$. Now consider the communication matrix of $M_{\textsf{subcube-intersect}_n}$. We claim that it looks as follows:
\[
\begin{pNiceMatrix}[first-row,first-col,margin]
~       & 0          &   {\un}         & 1 \\
0 &  M_{\textsf{subcube-intersect}_{n-1}} & M_{\textsf{subcube-intersect}_{n-1}} & 0 \\
{\un} & M_{\textsf{subcube-intersect}_{n-1}} & M_{\textsf{subcube-intersect}_{n-1}} & M_{\textsf{subcube-intersect}_{n-1}}   \\
1 & 0 & M_{\textsf{subcube-intersect}_{n-1}}   & M_{\textsf{subcube-intersect}_{n-1}} 
\end{pNiceMatrix}
,\]
where the row labeled $\gamma \in \Tri$ represents the 
set of rows labeled with $\alpha \in \Tri^n$ such that
$\alpha_1 = \gamma$ and similarly for the columns.
The validity of the claim follows from inspection that on fixing 
the first variables we either know the answer or 
have self-reduced it to a smaller instance. 
Therefore, we obtain:
\[M_{\textsf{subcube-intersect}_n} = 
\begin{pNiceMatrix}[margin]
  1 & 1 & 0 \\
 1 & 1 & 1   \\
 0 & 1   & 1 
\end{pNiceMatrix}
\otimes M_{\textsf{subcube-intersect}_{n-1}}, \]
where $\otimes$ is the Kronecker product of matrices.  
Hence, using the fact that rank is multiplicative with respect to Kronecker product, we have 
\[\rank(M_{\textsf{subcube-intersect}_n}) = \rank(M_{\textsf{subcube-intersect}_1})\cdot\rank(M_{\textsf{subcube-intersect}_{n-1}}).\]
Now using the induction hypothesis completes the proof.  
\end{proof}

Translating the size lower bound to depth gives the following corollary. 
\begin{corollary}
\label{cor:hf-depth-lb}
For all $n\geq 1$, $\depth{\un}(\MUX_n) \geq \lceil \log_2(2 \cdot 3^n - 1) \rceil$.  
\end{corollary}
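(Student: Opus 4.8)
The plan is to obtain this bound as an immediate consequence of the size lower bound of Theorem~\ref{thm:mux-lower} together with the elementary observation that a De~Morgan formula of depth $d$ has at most $2^d$ leaves (its underlying tree is binary once negations are pushed to the leaves). Concretely, let $F$ be a hazard-free formula for $\MUX_n$ of minimal depth, so $\depth{}(F) = \depth{\un}(\MUX_n)$. On the one hand $F$ has at most $2^{\depth{\un}(\MUX_n)}$ leaves; on the other hand Theorem~\ref{thm:mux-lower} forces $\size{}(F) \geq 2\cdot 3^n - 1$. Combining, $2^{\depth{\un}(\MUX_n)} \geq 2\cdot 3^n - 1$, hence $\depth{\un}(\MUX_n) \geq \log_2(2\cdot 3^n - 1)$. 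Since $\depth{\un}(\MUX_n)$ is a nonnegative integer, we may round up the right-hand side, which gives $\depth{\un}(\MUX_n) \geq \lceil \log_2(2\cdot 3^n - 1)\rceil$.

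An equivalent way to phrase the same argument, staying entirely within the communication framework, is to use Theorem~\ref{thm:hf-kwthm}: a deterministic protocol of cost $c$ has at most $2^c$ leaves, so $2^{\cc(\kw{\un}_{\MUX_n})} \geq \monorect(\kw{\un}_{\MUX_n})$, and then substituting $\cc(\kw{\un}_{\MUX_n}) = \depth{\un}(\MUX_n)$ and $\monorect(\kw{\un}_{\MUX_n}) = \size{\un}(\MUX_n) = 2\cdot 3^n - 1$ yields the claim after taking logarithms and rounding up.

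There is no real obstacle here: the statement is a routine corollary, and the only point requiring a word of care is the passage to the ceiling, which is legitimate precisely because formula depth is integer-valued.
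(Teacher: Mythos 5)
Your proof is correct and matches the paper's (implicit) argument: the paper simply states that translating the size lower bound of Theorem~\ref{thm:mux-lower} to depth gives the corollary, which is exactly your observation that a depth-$d$ De Morgan formula has at most $2^d$ leaves. The remark about integrality justifying the ceiling is the right (and only) point of care.
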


\subsection{Formulas of improved depth}
The lower bound from Corollary~\ref{cor:hf-depth-lb} on hazard-free formula depth is at least $1+(\log_2 3)\cdot n$ for large $n$. 
However, our construction in Theorem~\ref{thm:mux-upper} gives an upper bound of $3n$. 
We now give an improved construction (Theorem~\ref{thm:mux-depth-upper}) with respect to depth 
while increasing the size by a factor of
$\frac{9}{8}$.
In contrast, the depth-optimal version of Huffman's construction (Proposition~\ref{prop:depthlowerbound}) is larger than the optimal size hazard-free formula by a multiplicative factor that is exponential in $n$.

\begin{theorem}
\label{thm:mux-depth-upper}
The multiplexer function $\MUX_n$ has hazard-free
formulas 
of depth $2n+1$ and size at most
$2.25\cdot 3^n -\frac{n}{2} - 1.25$ for all $n \geq 1$.
\end{theorem}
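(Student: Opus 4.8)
The plan is to argue, as throughout Section~\ref{sec:exactsizeMUX}, via monochromatic partitions: by Theorem~\ref{thm:kw-prime} it suffices to partition the communication matrix $M_{\kw{\un}_{\MUX_n}}$ restricted to prime implicants and prime implicates, and by Theorem~\ref{thm:hf-kwthm} such a partition yields a hazard-free formula whose number of leaves is the number of rectangles and whose depth is the depth of the induced protocol. The naive recursion built from $\MUX_n=\MUX_1(s_1,F_0,F_1)$ only gives depth $2n+2$, so the key move is to strengthen the induction to an \emph{enlarged} matrix. I would fix a fresh symbol $\star$ (not a variable of $\MUX_n$) and let $N_n$ be the restricted $M_{\kw{\un}_{\MUX_n}}$ together with one extra row $r_\star$ all of whose entries are $\{\star\}$. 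The statement to prove by induction on $n\ge 0$ is: \emph{$N_n$ admits a monochromatic partition which, read as a protocol, has depth $2n+1$ and uses $t(n):=\tfrac94\cdot 3^n+\tfrac n2-\tfrac14$ rectangles, exactly $p(n):=n+1$ of which are contained in $r_\star$.} Granting this, restrict the partition to the real rows: the $p(n)$ purely-$\star$ rectangles vanish, no rectangle can mix $r_\star$ with a real row (its answer would be both $\star$ and an honest variable), and one obtains a depth-$(2n+1)$ partition of $M_{\kw{\un}_{\MUX_n}}$ with $t(n)-p(n)=\tfrac94\cdot 3^n-\tfrac n2-\tfrac54=2.25\cdot 3^n-\tfrac n2-1.25$ rectangles, which is exactly the claimed bound.

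For the base case I would take $n=0$: $\MUX_0(x_0)=x_0$, so $N_0$ is the $2\times1$ matrix with entries $\{x_0\},\{\star\}$, split into its two rows, giving $t(0)=2$, $p(0)=1$, depth $1$ (one could equally start from the size-optimal formula of Figure~\ref{fig:mux_formula} at $n=1$). For the inductive step I would use the block structure of Proposition~\ref{prop:mux-comm-matrix}: with $F_0,F_1$ the two copies of $\MUX_{n-1}$ on the two halves of the data bits and $M_i:=M_{\kw{\un}_{F_i}}$, the matrix $N_n$ has row blocks indexed by $\alpha_1\in\{0,\un,1\}$ followed by $r_\star$, columns indexed by $\beta_1\in\{0,\un,1\}$, and it looks like
\[
\begin{pmatrix} M_0 & M_0 & s_1 \\ M_0 & M_0\cup M_1 & M_1 \\ s_1 & M_1 & M_1 \\ \star & \star & \star \end{pmatrix}.
\]
The protocol: Alice first announces whether $\alpha_1\in\{0,\un\}$ or $\alpha_1=1$ (placing $r_\star$ with the latter); then Bob announces $\beta_1\in\{0,\un\}$ vs.\ $\beta_1=1$ in the first branch and $\beta_1=0$ vs.\ $\beta_1\in\{\un,1\}$ in the second. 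This leaves four sub-instances, each of which I would dispatch so: (i) $\alpha_1\in\{0,\un\},\beta_1\in\{0,\un\}$ --- every entry contains the matching entry of $M_0$, so the partition of $M_0=M_{\kw{\un}_{\MUX_{n-1}}}$ obtained by restricting $N_{n-1}$'s partition to real rows (depth $\le 2n-1$, $t(n-1)-n$ rectangles) solves it, up to harmless row/column duplication (Proposition~\ref{prop:parallel}); (ii) $\alpha_1\in\{0,\un\},\beta_1=1$ --- after relabeling $\star\mapsto s_1$, duplicating the $\star$-row, and enlarging some entries to supersets, this \emph{is} $N_{n-1}$, hence has a depth-$(2n-1)$ partition with $t(n-1)$ rectangles, now all on real rows; (iii) $\alpha_1=1$ or $r_\star$, $\beta_1=0$ --- one block of entries $\supseteq\{s_1\}$ over $r_\star$, which one more Alice bit splits into $2$ rectangles, one inside $r_\star$; (iv) $\alpha_1=1$ or $r_\star$, $\beta_1\in\{\un,1\}$ --- after duplicating columns this is $N_{n-1}$ ($\star\mapsto\star$), a depth-$(2n-1)$ partition with $t(n-1)$ rectangles, $p(n-1)$ inside $r_\star$. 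Bookkeeping then gives depth $\le 2+\max\{2n-1,1\}=2n+1$, $t(n)=\bigl(t(n-1)-n\bigr)+t(n-1)+2+t(n-1)=3t(n-1)-n+2$, and $p(n)=1+p(n-1)$; solving with $t(0)=2,p(0)=1$ yields the stated closed forms.

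The routine ingredients to spell out are that a monochromatic partition is preserved, with unchanged depth and rectangle count, under restriction to a submatrix, under duplicating rows or columns (the Proposition~\ref{prop:parallel} argument), and under enlarging entries to supersets; and that the blocks really are as displayed --- in particular that the $(\alpha_1=0,\beta_1=1)$ and $(\alpha_1=1,\beta_1=0)$ blocks contain $s_1$ as a valid answer, and that the $\alpha_1=\un$ rows and $\beta_1=\un$ columns, after projecting away $s_1$, restrict to honest copies of $M_0$, $M_1$ or their entrywise union --- all of which come from unwinding Propositions~\ref{prop:mux-prime} and~\ref{prop:mux-comm-matrix}. The real obstacle, and the whole point of the enlargement, is sub-instance (ii): without it one would spend an extra bit to separate $\alpha_1=0$ (answer $s_1$) from $\alpha_1=\un$ (recurse on $M_1$), forcing depth $2n+2$; the dummy row $r_\star$ is exactly the slack that makes that block a verbatim copy of $N_{n-1}$, absorbing the ``answer $s_1$'' case into the recursion for free --- and it is also this extra row that turns the size recurrence into $3t(n-1)-n+2$ rather than $3t(n-1)-\Theta(1)$, producing the leading constant $2.25$ instead of~$2$.
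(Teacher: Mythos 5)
Your proposal is correct and is essentially the paper's own argument: your enlarged matrix $N_n$ with the dummy $\star$-row is exactly the paper's extended game $\textsf{e-}\kw{\un}_{\MUX_n}$ (where the dummy block is filled with the unused symbol $0$), your sub-instances (i)--(iv) are precisely the paper's four blocks handled via Proposition~\ref{prop:parallel}, and your bookkeeping via $t(n)$ and $p(n)$ is equivalent to the paper's recurrences for $T(n)$ and $S(n)=t(n)-p(n)$, yielding the same closed form. The only cosmetic difference is your base case at $n=0$ in place of the paper's explicit depth-$3$, seven-rectangle partition of $\textsf{e-}\kw{\un}_{\MUX_1}$.
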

From Theorem~\ref{thm:hf-kwthm} we know it is sufficient to 
give a protocol $\Pi$ solving the hazard-free Karchmer-Wigderson game of 
$\MUX_n$ such that $\cc(\Pi)\leq2n+1$ and $\psize(\Pi)\leq 2.25\cdot 3^n -\frac{n}{2} - 1.25$. 
We consider a monochromatic extension of 
$\kw{\un}_{\MUX_n}$.
We extend the communication matrix of 
$\kw{\un}_{\MUX_n}$ as follows to define the 
extended version $\textsf{e-}\kw{\un}_{\MUX_n}$: 
\begin{align}
\label{eq:def-e-kw-mux}
\begin{pNiceMatrix}[hlines,first-row,first-col,margin]
~       &           &   \text{prime implicates}         &  \\
 & \Block{3-3}{M_{\kw{\un}_{\MUX_n}}}  &  &  \\
\text{prime implicants} &  &  &    \\
 &  &   &  \\
& \Block[]{2-3}<\Large>{0} & & \\
& & & 
\end{pNiceMatrix}
.
\end{align}
In other words, the communication matrix of the extended version 
$\textsf{e-}\kw{\un}_{\MUX_n}$ is obtained by adding 
a rectangular block 
with all $0$s to either the set of rows of $\kw{\un}_{\MUX_n}$ 
(as shown above) or the set of columns. 
We note that the added block could have any number of rows 
in the former case or any number of columns in the latter.  
Further we will \emph{always} require
that the extended part be filled with a number that does not appear in $\kw{\un}_{\MUX_n}$.  
Observe that $0$ doesn't appear in the communication matrix of $\kw{\un}_{\MUX_n}$. However we could have used any other number that 
doesn't appear in $\kw{\un}_{\MUX_n}$. We will denote all such extensions 
by $\textsf{e-}\kw{\un}_{\MUX_n}$.
The following matrix is an example of $\textsf{e-}\kw{\un}_{\MUX_2}$. 
\[
\begin{pNiceMatrix}[margin,first-row,first-col]
{} & 000\un\un\un & 0\un00\un\un & 01{\un}0{\un}{\un} & \un 00{\un}0{\un} & \un\un 0000 & \un 1{\un}0{\un}0 & 10{\un}{\un}0{\un} & 1\un{\un}{\un}00 & 11{\un}{\un}{\un}0 \\
001{\un}{\un}{\un} & x_{0} & x_{0} & s_2 & x_{0} & x_{0} & s_2 & s_1 & s_1 & s_1, s_2 \\
0{\un}11\un\un & x_{0} & x_{0}, x_{1} & x_{1} & x_{0} & x_{0},x_{1} & x_{1} & s_1 & s_1 & s_1 \\
01\un1\un\un & s_2 & x_{1} & x_{1} & s_2 &x_{1}& x_{1} & s_1, s_2 & s_1 & s_1 \\
\un 01\un1\un& x_{0} & x_{0} & s_2 & x_{0}, x_{2} &x_{0},x_{2}&  s_2 & x_{2} & x_{2} & s_2 \\
\un\un 1111&  x_{0} & x_{0},x_{1} & x_{1} & x_{0},x_{2}&x_{0},x_{1},x_{2},x_{3}& x_{1},x_{3} & x_{2} & x_{2},x_{3}& x_{3}\\
\un 1\un1\un1 & s_2 & x_{1} & x_{1} & s_2 &x_{1},x_{3}& x_{1}, x_{3} & s_2 & x_{3} & x_{3} \\
10\un\un1\un & s_1 & s_1 & s_1, s_2 & x_{2} &x_{2}& s_2 & x_{2} & x_{2} & s_2 \\
1\un\un\un11 & s_1 & s_1 & s_1 & x_{2} &x_{2},x_{3}& x_{3} & x_{2} & x_{2}, x_{3} & x_{3} \\
11\un\un\un1 & s_1, s_2 & s_1 & s_1 & s_2 &x_{3}& x_{3} & s_2 & x_{3} & x_{3} \\
  & 0 & 0 & 0 & 0 & 0 & 0 & 0 & 0 & 0
\end{pNiceMatrix}
\]

Clearly the following proposition holds. 
\begin{proposition}
\label{prop:e-kw-mux}
A protocol $\Pi$ for $\mathsf{e\text{-}}\kw{\un}_{\MUX_n}$ 
gives a protocol $\Pi'$ for $\kw{\un}_{\MUX_n}$ 
such that $\cc(\Pi') \leq \cc(\Pi)$  and 
$\monorect(\Pi') \leq \monorect(\Pi)$. 
\end{proposition}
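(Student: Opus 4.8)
The plan is to exploit the fact that $M_{\kw{\un}_{\MUX_n}}$ sits inside $M_{\mathsf{e\text{-}}\kw{\un}_{\MUX_n}}$ as a submatrix: by \eqref{eq:def-e-kw-mux} the extended matrix is obtained from it purely by appending a block of dummy rows (or, in the transposed variant, dummy columns), and running a protocol for a matrix on one of its submatrices can only decrease both the communication cost and the number of leaves.

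First I would take an arbitrary protocol $\Pi$ solving $\mathsf{e\text{-}}\kw{\un}_{\MUX_n}$ and define $\Pi'$ to be the protocol obtained by restricting the players' inputs to the prime implicants / prime implicates of $\MUX_n$ — equivalently, by pruning from the protocol tree of $\Pi$ every leaf that is not reached on any input pair of $M_{\kw{\un}_{\MUX_n}}$. Then I would verify the two promised inequalities. For the depth bound: every root-to-leaf path of $\Pi'$ is also a root-to-leaf path of $\Pi$, so $\cc(\Pi') \leq \cc(\Pi)$. For the leaf/rectangle bound: the leaves of $\Pi'$ form a subset of the leaves of $\Pi$, and the combinatorial rectangles they carve out of $M_{\kw{\un}_{\MUX_n}}$ are exactly the intersections with $M_{\kw{\un}_{\MUX_n}}$ of the rectangles carved out by $\Pi$, with the now-empty ones dropped; hence $\monorect(\Pi') \leq \monorect(\Pi)$.

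The one step that is not pure bookkeeping — and the one I would be most careful about — is confirming that $\Pi'$ is a \emph{correct} protocol for $\kw{\un}_{\MUX_n}$, i.e.\ that each surviving rectangle is monochromatic with a label that is a legal answer to the original game rather than the dummy symbol. Here I would invoke the design requirement stated just after \eqref{eq:def-e-kw-mux}: the appended block is filled with a value (namely $0$) that does not occur anywhere in $M_{\kw{\un}_{\MUX_n}}$, so any monochromatic rectangle of $\Pi$ that meets $M_{\kw{\un}_{\MUX_n}}$ must carry a label coming from a genuine cell, hence a genuine coordinate $i \in [n+2^n]$; and since that label was valid on the larger rectangle it remains valid on the sub-rectangle. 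This shows that $\Pi'$ induces a monochromatic partition of $M_{\kw{\un}_{\MUX_n}}$ of size at most $\monorect(\Pi)$ and of depth at most $\cc(\Pi)$, which is exactly the claim. In truth no real obstacle arises: the proposition is essentially the observation that a submatrix is never harder than the matrix containing it, and the dummy-symbol condition is precisely what keeps the restricted rectangles from being accidentally ``monochromatic'' with a meaningless label.
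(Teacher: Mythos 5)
Your proof is correct and is the natural elaboration of the paper's one-line argument ("Follows from the definition"). You take the same route — view $M_{\kw{\un}_{\MUX_n}}$ as a submatrix of the extended matrix and restrict the given protocol — and the point you flag as needing care, that the appended block carries a symbol ($0$) occurring nowhere in the original matrix so that every surviving rectangle retains a genuine, still-valid label, is exactly the design choice the paper introduces right after \eqref{eq:def-e-kw-mux} to make this restriction go through.
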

\begin{proof}
Follows from the definition \eqref{eq:def-e-kw-mux} of $\textsf{e-}\kw{\un}_{\MUX_n}$. 
\end{proof}
Therefore, to 
prove the depth bound in Theorem~\ref{thm:mux-depth-upper}
we will give a protocol 
for $\textsf{e-}\kw{\un}_{\MUX_n}$ with communication cost at most $2n+1$. 
\begin{lemma}
\label{lem:e-kw-mux-ub}
There is a protocol solving $\textup{\textsf{e-}}\kw{\un}_{\MUX_n}$ such that its 
communication cost is at most $2n+1$. 
\end{lemma}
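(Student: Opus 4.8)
The plan is to prove, by induction on $n\ge 0$, the slightly more flexible claim that \emph{every matrix obtained from $M_{\kw{\un}_{\MUX_n}}$ by appending an arbitrary number of all‑fresh rows, or an arbitrary number of all‑fresh columns, admits a protocol of cost at most $2n+1$} — this is precisely the form of statement one can feed back into the induction. The base case $n=0$ is immediate: $M_{\kw{\un}_{\MUX_0}}$ is the $1\times 1$ matrix with single entry $\{x_0\}$, and one bit (Alice announcing whether her row is a fresh row, Bob symmetrically) already suffices. For the inductive step I would spend exactly two bits of communication to peel off the first selector coordinate $s_1$ and hand the players an instance of $\textsf{e-}\kw{\un}_{\MUX_{n-1}}$; the induction hypothesis then contributes at most $2(n-1)+1$ further bits, for a total of $2n+1$. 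The two ingredients that make this go through are the recursive block decomposition of $M_{\kw{\un}_{\MUX_n}}$ from Proposition~\ref{prop:mux-comm-matrix} (read with $M_0,M_1$ being $1\times1$ matrices, it already covers $n=1$), and the trivial observation that $M_0\cup M_1$ contains $M_0$ and $M_1$ entrywise, combined with the fact that the label $s_1$ occurs nowhere in $\kw{\un}_{\MUX_{n-1}}$, so an all‑$s_1$ block may be regarded as a block of fresh rows or columns of an instance of $\textsf{e-}\kw{\un}_{\MUX_{n-1}}$.

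Concretely, in the inductive step assume the extension is by rows; the column case is symmetric, after transposing (which preserves the block shape of Proposition~\ref{prop:mux-comm-matrix}, since $\cup$ is entrywise) and swapping the roles of Alice and Bob. First, Alice sends one bit telling whether her row is a fresh row or has $\alpha_1=\un$, or instead has $\alpha_1\in\{0,1\}$. In the latter case Alice sends a second bit revealing $\alpha_1$; by Proposition~\ref{prop:mux-comm-matrix} the residual is the block‑row $(M_0 \; M_0 \; s_1)$ (for $\alpha_1=0$) or $(s_1 \; M_1 \; M_1)$ (for $\alpha_1=1$). Identifying the two $M_i$ column‑blocks with the columns of $\kw{\un}_{\MUX_{n-1}}$ and the all‑$s_1$ column‑block with a block of fresh columns, each residual is solved by any protocol for a suitable instance of $\textsf{e-}\kw{\un}_{\MUX_{n-1}}$ — by the induction hypothesis, at most $2(n-1)+1$ bits, hence $2n+1$ in total. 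In the former case the residual is the block‑row $(M_0 \; M_0\cup M_1 \; M_1)$ together with all the fresh rows, and Bob now sends one bit telling whether $\beta_1=1$. If $\beta_1=1$, the residual is $M_1$ on the genuine rows together with the fresh rows, an instance of $\textsf{e-}\kw{\un}_{\MUX_{n-1}}$ extended by rows. If $\beta_1\in\{0,\un\}$, the residual is $(M_0 \; M_0\cup M_1)$ on the genuine rows together with the fresh rows; since $M_0\cup M_1\supseteq M_0$ entrywise, every entry contains the corresponding entry of $M_0$, so this too is solved by any protocol for the instance ``$M_0$ plus fresh rows'' of $\textsf{e-}\kw{\un}_{\MUX_{n-1}}$. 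Either way the induction hypothesis contributes at most $2(n-1)+1$ bits, for a total of $1+1+(2(n-1)+1)=2n+1$.

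The step I expect to require the most care is the $\alpha_1=\un$ branch, for two reasons. First, one must notice that a single bit from Bob is enough there: he has to separate $\beta_1=0$ from $\beta_1=1$, since these force the answer sets $M_0$ and $M_1$, which use disjoint data labels and are therefore genuinely incompatible — but $\beta_1=\un$ may be lumped with \emph{either} one, precisely because the $(\un,\un)$ block $M_0\cup M_1$ dominates both $M_0$ and $M_1$ entrywise. Second, the fresh rows must be routed into this branch and nowhere else: if even one fresh row stayed attached to the $\alpha_1=0$ block‑row, the otherwise monochromatic sub‑block $M_0$ would acquire a stray fresh entry and could no longer be discharged by a single $\MUX_{n-1}$‑protocol. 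This is exactly why the induction must carry the enlarged matrices rather than the bare communication matrix of $\MUX_{n-1}$. What remains after that is routine: confirming the $n=1$ arithmetic (the $3\times 3$ matrix of $\MUX_1$ together with any number of fresh rows is indeed handled in $3$ bits by the recipe above), and checking that no branch ever produces extra rows and extra columns simultaneously, so that the induction hypothesis in the form stated above suffices.
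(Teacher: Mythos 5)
Your proof is correct and takes essentially the paper's approach: an induction on the extended matrix $\textsf{e-}\kw{\un}_{\MUX_n}$, peeling off $s_1$ with two bits of communication using the block decomposition of Proposition~\ref{prop:mux-comm-matrix} together with the repeated-block and dominated-block tricks (Proposition~\ref{prop:parallel}), so that each residual is again an $\textsf{e-}\kw{\un}_{\MUX_{n-1}}$ instance. The only real deviation is the shape of the two-bit cut: the paper has Alice split $\{0,\un\}$ from $\{1,\text{fresh}\}$ and then a single Bob bit (so only the row-extended form of the hypothesis is ever needed), while your Alice splits $\{\un,\text{fresh}\}$ from $\{0,1\}$ and then branches between a second Alice bit (producing a column-extended residual $(M_0\;M_0\;s_1)$) and a Bob bit (producing a row-extended residual), which is why your hypothesis must carry both forms and invoke the transposition symmetry you noted --- but both partitions spend exactly two bits per level and give the same bound.
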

\begin{proof}
From Proposition~\ref{prop:mux-comm-matrix} we know that 
the communication matrix of $\kw{\un}_{\MUX_n}$ looks as follows 
\[
\begin{pNiceMatrix}[first-row,first-col,margin]
~       & 0          &   {\un}         & 1 \\
0 & M_0 & M_0 & s_1 \\
{\un} & M_0 & M_0\cup M_1 & M_1   \\
1 & s_1 & M_1   & M_1 
\end{pNiceMatrix}
,\]
where we define the formulas $F_0 =\MUX_{n-1}(s_2,\dotsc,s_{n},x_0,\dotsc,x_{2^{n-1}-1})$ 
and $F_1 =\MUX_{n-1}(s_2,\dotsc,s_{n},x_{2^{n-1}},\dotsc,x_{2^n - 1})$ and matrices $M_i := M_{\kw{\un}_{F_i}}$ for $i \in \{0, 1\}$.
Therefore, the matrix of extended version $\textsf{e-}\kw{\un}_{\MUX_n}$ looks as follows 
\[
\begin{pNiceMatrix}[first-row,first-col,margin]
~       & 0          &   {\un}         & 1 \\
0 & M_0 & M_0 & s_1 \\
{\un} & M_0 & M_0\cup M_1 & M_1   \\
1 & s_1 & M_1   & M_1  \\
& 0 & 0 & 0
\end{pNiceMatrix}
.\]
We now give a protocol to partition this matrix into monochromatic rectangles. This will be done inductively. 

Base case: $n=1$. The matrix of $\textsf{e-}\kw{\un}_{\MUX_1}$ 
can be monochromatically partitioned as follows
\[
\begin{tikzpicture}
\node at (0,0) {$
\begin{pNiceArray}{ccc}[first-row,first-col,rules/width=3pt]
~       & 00{\un}          &   {\un}00         & 1{\un}0 \\
\arrayrulecolor{blue}
01{\un} & \entry{x_0}      & \entry{x_0}       & \multicolumn{1}{|c}{\entry{s}}   \\ 
\arrayrulecolor{orange}\cline{3-3}\arrayrulecolor{blue}
{\un}11 & \entry{x_0}      & \entry{x_0, x_1}  & \multicolumn{1}{|c}{\entry{x_1}} \\ 
\arrayrulecolor{red}\hline\arrayrulecolor{blue} 
1{\un}1 & \entry{s}        & \multicolumn{1}{|c}{\entry{x_1}}       & \entry{x_1} \\
\arrayrulecolor{orange}\cline{1-1}\cline{2-3}\arrayrulecolor{blue}
& 0 & \multicolumn{1}{|c}{0} & 0
\end{pNiceArray}
$};
\fill [blue] (-0.35,-0.7) rectangle +(0.1,-0.2);
\fill [blue] (0.94,0.45) rectangle +(0.1,-0.2);
\node at (2.2,-0.3) {.};
\end{tikzpicture}
\]
This is obtained from the protocol where 
Alice sends the first bit indicating whether her input lies in the 
top or bottom part of red line. Then Bob sends a bit indicating whether 
his input lies in the left part or the right part of the blue line(s). 
Finally Alice sends one last bit to indicate whether her input 
lies in the top or bottom part of the orange line(s). 
Clearly the communication cost is $3$. 

Induction step: $n\geq 2$. 
Now Alice and Bob send one bit of communication each to reduce 
$\textsf{e-}\kw{\un}_{\MUX_n}$ to the following partition
\[
\begin{pNiceArray}{ccc}[first-row,first-col,rules/width=3pt]
~       & 0          &   {\un}         & 1 \\
\arrayrulecolor{blue}
0 & M_0 & M_0 & \multicolumn{1}{|c}{s_1} \\
{\un} & M_0 & M_0\cup M_1 & \multicolumn{1}{|c}{M_1}   \\
\arrayrulecolor{red}\hline\arrayrulecolor{blue}
1 & s_1 & \multicolumn{1}{|c}{M_1}   & M_1  \\
& 0 & \multicolumn{1}{|c}{0} & 0
\end{pNiceArray}
.\]
Observe that the top-right block $\begin{pmatrix}s_1\\M_1\end{pmatrix}$ 
is the matrix of an $\textsf{e-}\kw{\un}_{\MUX_{n-1}}$. 
The bottom-right block $\begin{pmatrix}M_1 & M_1 \\ 0 & 0\end{pmatrix}$ 
can be solved, using Proposition~\ref{prop:parallel}, at the cost of solving $\begin{pmatrix}M_1\\0\end{pmatrix}$ which is a block of $\textsf{e-}\kw{\un}_{\MUX_{n-1}}$. Similarly, the top-left block
$\begin{pmatrix}M_0&M_0\\M_0&M_0\cup M_1\end{pmatrix}$ can be solved, using Proposition~\ref{prop:parallel}, at the cost of solving $M_0$ which is a block of $\kw{\un}_{\MUX_{n-1}}$. Therefore, by Proposition~\ref{prop:e-kw-mux} it has less complexity (size and depth) 
than $\textsf{e-}\kw{\un}_{\MUX_{n-1}}$. 
Finally, note that the bottom-left block $\begin{pmatrix}s_1\\0\end{pmatrix}$ 
just needs one bit of communication to monochromatically partition it. 
Therefore, we have 
\begin{align*}
    \cc(\mathsf{e\text{-}}\kw{\un}_{\MUX_{n}}) & \leq 
    2 + \max\left\{ \cc(\mathsf{e\text{-}}\kw{\un}_{\MUX_{n-1}}) , 1\right\} \\ 
    & \leq 2+ \cc(\mathsf{e\text{-}}\kw{\un}_{\MUX_{n-1}}), \\
    & \leq 2n + 1,
\end{align*}
where the second inequality follows because $n\geq 2$ and the third 
follows from the induction hypothesis. 
\end{proof}
As an illustration of the induction step in the proof above, we present a detailed example of $\textsf{e-}\kw{\un}_{\MUX_2}$ with a decomposition
into one block of $\textsf{e-}\kw{\un}_{\MUX_1}$ (red),
two blocks of $\textsf{e-}\kw{\un}_{\MUX_1}$ (green) that can be solved at the cost of solving a single $\textsf{e-}\kw{\un}_{\MUX_1}$ using Proposition~\ref{prop:parallel},
four blocks of $\MUX_1$ (blue) that can be solved at the cost of solving a single $\MUX_1$ using Proposition~\ref{prop:parallel},
and one (yellow) block of two identities that can be solved with depth 1. 
Therefore, the total depth is at most 5. 
\[
\begin{pNiceMatrix}[margin,first-row,first-col]
\CodeBefore
\tikz \draw [fill=red!50] (1-|7) |- (7-|7) |- (7-|10) |- cycle ;
\tikz \draw [fill=blue!15] (1-|1) |- (4-|1) |- (4-|4) |- cycle ;
\tikz \draw [fill=blue!15] (1-|4) |- (4-|4) |- (4-|7) |- cycle ;
\tikz \draw [fill=blue!15] (4-|1) |- (7-|1) |- (7-|4) |- cycle ;
\tikz \draw [fill=blue!15] (4-|4) |- (7-|4) |- (7-|7) |- cycle ;
\tikz \draw [fill=green!50] (7-|7) |- (11-|7) |- (11-|10) |- cycle ;
\tikz \draw [fill=green!50] (7-|4) |- (11-|4) |- (11-|7) |- cycle ;
\tikz \draw [fill=yellow!50] (7-|1) |- (11-|1) |- (11-|4) |- cycle ;
\Body
{} & 000\un\un\un & 0\un00\un\un & 01{\un}0{\un}{\un} & \un 00{\un}0{\un} & \un\un 0000 & \un 1{\un}0{\un}0 & 10{\un}{\un}0{\un} & 1\un{\un}{\un}00 & 11{\un}{\un}{\un}0 \\
001{\un}{\un}{\un} & x_{0} & x_{0} & s_2 & x_{0} & x_{0} & s_2 & s_1 & s_1 & s_1, s_2 \\
0{\un}11\un\un & x_{0} & x_{0}, x_{1} & x_{1} & x_{0} & x_{0},x_{1} & x_{1} & s_1 & s_1 & s_1 \\
01\un1\un\un & s_2 & x_{1} & x_{1} & s_2 &x_{1}& x_{1} & s_1, s_2 & s_1 & s_1 \\
\un 01\un1\un& x_{0} & x_{0} & s_2 & x_{0}, x_{2} &x_{0},x_{2}&  s_2 & x_{2} & x_{2} & s_2 \\
\un\un 1111&  x_{0} & x_{0},x_{1} & x_{1} & x_{0},x_{2}&x_{0},x_{1},x_{2},x_{3}& x_{1},x_{3} & x_{2} & x_{2},x_{3}& x_{3}\\
\un 1\un1\un1 & s_2 & x_{1} & x_{1} & s_2 &x_{1},x_{3}& x_{1}, x_{3} & s_2 & x_{3} & x_{3} \\
10\un\un1\un & s_1 & s_1 & s_1, s_2 & x_{2} &x_{2}& s_2 & x_{2} & x_{2} & s_2 \\
1\un\un\un11 & s_1 & s_1 & s_1 & x_{2} &x_{2},x_{3}& x_{3} & x_{2} & x_{2}, x_{3} & x_{3} \\
11\un\un\un1 & s_1, s_2 & s_1 & s_1 & s_2 &x_{3}& x_{3} & s_2 & x_{3} & x_{3} \\
  & 0 & 0 & 0 & 0 & 0 & 0 & 0 & 0 & 0
\end{pNiceMatrix}
\]

We are now ready to prove Theorem~\ref{thm:mux-depth-upper}. 
\begin{proof}[Proof of Theorem~\ref{thm:mux-depth-upper}]
As mentioned in the beginning we will give a protocol to partition 
the communication matrix of $\kw{\un}_{\MUX_n}$ into monochromatic rectangles 
such that the communication cost of this protocol is at most $2n+1$ and 
the number of monochromatic rectangles is at most 
$2.25\cdot 3^n -\frac{n}{2} - 1.25$ for all $n \geq 1$.
Our protocol is the same as the one given in the proof of Lemma~\ref{lem:e-kw-mux-ub}. Thus the upper bound on depth follows readily. To bound the size we count the number of monochromatic rectangles in the partition given by the protocol. 

Define $T(n)$ to be the number of monochromatic rectangles 
in the partition of the communication matrix of $\textsf{e-}\kw{\un}_{\MUX_n}$ given by the protocol. 
Further define $S(n)$ to be the number of monochromatic rectangles covering only the entries of $\kw{\un}_{\MUX_n}$ in the partition of 
the communication matrix of $\textsf{e-}\kw{\un}_{\MUX_n}$. 
In particular, $S(n) < T(n)$. Note that $S(n)$ gives the bound on the size we are interested in. 
We now write recurrences for $S(n)$ and $T(n)$ using the partition given by the induction step and the base case in Lemma~\ref{lem:e-kw-mux-ub}. 
Recall the partition in the induction step looks as shown below: 
\[
\begin{pNiceArray}{ccc}[first-row,first-col,rules/width=3pt]
~       & 0          &   {\un}         & 1 \\
\arrayrulecolor{blue}
0 & M_0 & M_0 & \multicolumn{1}{|c}{s_1} \\
{\un} & M_0 & M_0\cup M_1 & \multicolumn{1}{|c}{M_1}   \\
\arrayrulecolor{red}\hline\arrayrulecolor{blue}
1 & s_1 & \multicolumn{1}{|c}{M_1}   & M_1  \\
& 0 & \multicolumn{1}{|c}{0} & 0
\end{pNiceArray}
.\]
It follows that $S(n)$ and $T(n)$ satisfies the following recurrences for $n\geq 2$:
\begin{align}
S(n) & = 2\cdot S(n-1) + T(n-1) + 1, \mbox{ and }\label{eq:recurrence-S(n)}\\
T(n) &= S(n-1) + 2\cdot T(n-1) + 2,\label{eq:recurrence-T(n)}
\end{align}
where $S(1) = 5$ and $T(1) = 7$. 
For solving these recurrences it is helpful to first solve the two auxiliary recurrences $S(n)+T(n)$ and $S(n)-T(n)$. One ultimately obtains $S(n) = 2.25\cdot 3^n -\frac{n}{2} - 1.25$.
\end{proof}
\begin{remark}
We note that in the induction step we are reducing to an instance of $\textsf{e-}\kw{\un}_{\MUX}$, in particular the top-right part of the matrix in the proof of Theorem~\ref{thm:mux-depth-upper}.  
Therefore, we need to strengthen our induction hypothesis and begin with an instance of $\textsf{e-}\kw{\un}_{\MUX_{n}}$. 
\end{remark}

\section{Alternation Depth Two and Two-round Protocols}
\label{sec:tworoundprot}
In this section, we determine the hazard-free depth complexity of formulas of alternation depth $2$ computing $\MUX_n$.
We assume without loss of generality that the output gate is an $\vee$ gate since the prime implicants and prime implicates of $\MUX_n$ are symmetric.
We exploit the correspondence between hazard-free formulas of alternation depth $2$ and two-round communication protocols of the form: Alice sends some string, Bob replies with some string, and they settle on an answer (See Lemmas~\ref{lem:formula-protocol} and \ref{lem:protocol-formula}).

We begin by proving a property of hazard-free formulas of alternation depth $2$ that has been observed in \cite{Huf:57}. We present a proof for completeness.

\begin{proposition}\label{prop:dnf-unique}
Let $n \geq 1$. In any two-round communication protocol for $\MUX_n$, for different prime implicants, Alice must send different strings to Bob in the first round. Equivalently, in any hazard-free formula of alternation depth $2$ computing $\MUX_n$, for any prime implicant $\alpha$, there is at least one $\wedge$ gate $g$ in the formula such that the subformula at $g$ is exactly the $\wedge$ of literals in the prime implicant~$\alpha$.
\end{proposition}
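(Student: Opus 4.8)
The plan is to prove the communication-game formulation directly and then transcribe it to formulas through the correspondence of Lemmas~\ref{lem:formula-protocol} and~\ref{lem:protocol-formula}. By Proposition~\ref{prop:mux-prime} the prime implicants of $\MUX_n$ are in bijection with ternary strings $\alpha\in\Tri^n$ on the selector variables, the data part of the prime implicant being forced to carry $1$'s exactly on the positions indexed by the resolutions of $\alpha$ (equivalently, by the subcube $C_\alpha\subseteq\Bool^n$) and $\un$'s everywhere else; prime implicates are described symmetrically, with $0$'s on $C_\gamma$ and $\un$'s elsewhere. Hence two \emph{distinct} prime implicants correspond to two \emph{distinct} subcubes $C_\alpha\neq C_{\alpha'}$.

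First I would record which coordinates are valid answers of $\kw{\un}_{\MUX_n}$ on a prime implicant with selector $\alpha$ versus a prime implicate with selector $\gamma$: on a selector coordinate $i$ the answer is valid exactly when $\alpha_i,\gamma_i$ are both stable and differ, which forces $C_\alpha\cap C_\gamma=\emptyset$; on a data coordinate $j$ the answer is valid exactly when $j\in C_\alpha\cap C_\gamma$. In particular, if $C_\alpha\cap C_\gamma\neq\emptyset$ then every valid answer is a data variable $x_j$, and if $C_\alpha\cap C_\gamma=\emptyset$ then every valid answer is a selector variable $s_i$.

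Now suppose, for contradiction, that some two-round protocol sends the same first-round message $m$ on two distinct prime implicants. By the bijection these have distinct selector subcubes $C_\alpha\neq C_{\alpha'}$, so after possibly swapping their roles there is a point $b\in C_\alpha\setminus C_{\alpha'}$. Let Bob's input be the prime implicate (existing by Proposition~\ref{prop:mux-prime}) whose selector is the stable string $b$, so its selector subcube is $\{b\}$. Then $C_\alpha\cap\{b\}=\{b\}\neq\emptyset$, so every valid answer for Alice's first input is a data variable (indeed necessarily $x_b$), while $C_{\alpha'}\cap\{b\}=\emptyset$, so every valid answer for Alice's second input is a selector variable; these are nonempty and disjoint. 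But in a two-round protocol the output is a function of the transcript, which is a function of Alice's message $m$ and Bob's input alone, so the protocol outputs the \emph{same} coordinate in both cases, and that coordinate cannot be valid for both — contradiction. Hence distinct prime implicants force distinct first-round messages.

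For the equivalent statement about formulas, a hazard-free alternation-depth-$2$ formula with an $\vee$ output gate is a disjunction $F=\bigvee_k T_k$ of conjunctions of literals, and since $F$ is hazard-free and the ternary evaluation of an $\vee$ gate is $\max$, each term $T_k$ is (the conjunction associated with) an implicant $\iota_k$ of $\MUX_n$. Passing $F$ through Lemma~\ref{lem:formula-protocol} yields a two-round protocol in which Alice, holding the prime implicant with selector $\alpha$, chooses a term $T_k$ satisfied by every resolution of her input, i.e.\ with $C_\alpha\subseteq C_{\iota_k}\subseteq\MUX_n^{-1}(1)$; since $C_\alpha$ is a \emph{maximal} subcube inside $\MUX_n^{-1}(1)$ (this is exactly what primality means), this forces $C_{\iota_k}=C_\alpha$, so $T_k$ is precisely the $\wedge$ of the literals of that prime implicant, and the previous paragraph shows the choice is distinct for distinct prime implicants; the reverse implication is Lemma~\ref{lem:protocol-formula}. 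The one place demanding care is the valid-answer case analysis together with the observation that for the special Bob-input $b$ the two answer sets land in the disjoint classes of data variables and selector variables; the rest is bookkeeping with Proposition~\ref{prop:mux-prime} and the two-roundness of the protocol.
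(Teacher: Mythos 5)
Your proposal is correct and takes essentially the same route as the paper: it derives a contradiction by feeding Bob a carefully chosen prime implicate on which the two colliding Alice-inputs have disjoint sets of valid answers, and then transfers this to formulas via the $\kw{}$-game correspondence. The only difference is cosmetic — the paper constructs Bob's counterexample by flipping a single stable bit of one prime implicant's selector, whereas you pick a fully resolved selector $b\in C_\alpha\setminus C_{\alpha'}$ and invoke the data-variable versus selector-variable dichotomy; the formula half of your argument likewise matches the paper's.
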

\begin{proof}
Let $\alpha, \beta \in \Tri^n$ be two distinct prime implicants of $\MUX_n$ (See Proposition~\ref{prop:mux-prime}). Then Bob must receive different strings from Alice for these two inputs to Alice, which can be seen as follows. Suppose Bob receives the same string from Alice for prime implicants $\alpha$ and $\beta$. Without loss of generality (i.e., we can swap $\alpha$ and $\beta$) $\alpha \neq \un^n$ and there exists a position $i$ such that $\{0,\un,1\}\ni \beta_i \neq \alpha_i\in\{0, 1\}$. Let $\alpha'$ be the prime implicate obtained from $\alpha$ by flipping the $i^\text{th}$ bit.
Clearly, the only answer to the input $(\alpha, \alpha')$ is $i$ and $i$ is a wrong answer for the input $(\beta, \alpha')$. Therefore, Alice must send different strings to Bob for $\alpha$ and $\beta$.

Now, we prove the equivalent statement for formulas. Consider an arbitrary hazard-free formula~$F$ for $\MUX_n$ of alternation depth $2$. For any prime implicant $\alpha$ of $\MUX_n$, since $F$ is hazard-free, we have $F(\alpha) = 1$. This is possible only if there is a topmost $\wedge$ gate $g$ in $F$ such that $g(\alpha) = 1$. Since the formula has alternation depth $2$, the subformula at $g$ is simply an $\wedge$ of literals. Since it evaluates to $1$ on $\alpha$, this set of literals has to be a subset of the literals in $\alpha$. It cannot be a proper subset because of minimality of prime implicants.
\end{proof}

We now observe a slightly weaker (than Theorem~\ref{thm:mux-dnf-depth}) depth lower bound using known results. We define the \emph{size} of a prime implicant $\alpha$ as $|\{i\mid \alpha_i \neq \un\}|$.

\begin{proposition}\label{prop:depthlowerbound}
For $n \geq 1$, we have 
$\size{\un}_{2}(\MUX_n) = 4^n + 2n3^{n-1}$ and 
$\depth{\un}_{2}(\MUX_n) \geq 2n+1.$
\end{proposition}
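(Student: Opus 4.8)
The plan is to establish both claims by counting prime implicants and using their combinatorial structure. For the size claim, $\size{\un}_2(\MUX_n)$, I would use Huffman's characterization together with Proposition~\ref{prop:dnf-unique}: a hazard-free formula of alternation depth $2$ with an $\vee$ at the root must, for \emph{every} prime implicant $\alpha$, contain an $\wedge$-gate that is exactly the conjunction of the literals of $\alpha$, and (by a symmetric argument applied at the $\wedge$-level, or directly since minimal monochromatic DNFs contain no redundant terms) it contains \emph{only} such gates. Hence the optimal such formula is precisely the DNF whose terms are all prime implicants of $\MUX_n$, and its size (number of leaves) is $\sum_{\alpha \text{ prime impl.}} (\text{size of }\alpha)$. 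First I would count the prime implicants of $\MUX_n$ by type: by Proposition~\ref{prop:mux-prime}, prime implicants are in bijection with strings $\alpha \in \Tri^n$ on the selector bits (the data part $\alpha'$ is then forced, with $1$'s exactly on the resolutions of $\alpha$ and $\un$'s elsewhere). A selector string $\alpha$ with exactly $k$ stable bits contributes a prime implicant of total size $k + 2^{n-k}$ (the $k$ stable selector literals plus the $2^{n-k}$ forced $1$'s in the data part, since $\alpha$ has $2^{n-k}$ resolutions). The number of such $\alpha$ is $\binom{n}{k}2^k$. Therefore
\[
\size{\un}_2(\MUX_n) = \sum_{k=0}^{n} \binom{n}{k} 2^k \left( k + 2^{n-k}\right) = \sum_{k=0}^n \binom{n}{k}2^k k + \sum_{k=0}^n \binom{n}{k} 2^n,
\]
and the two sums evaluate to $2n3^{n-1}$ (via $\sum_k \binom{n}{k}2^k k = n\sum_{k}\binom{n-1}{k-1}2^k = 2n\cdot 3^{n-1}$) and $2^n\cdot 2^n = 4^n$ respectively, giving $4^n + 2n3^{n-1}$.

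For the depth lower bound $\depth{\un}_2(\MUX_n) \geq 2n+1$, I would argue via the two-round protocol picture (Lemmas~\ref{lem:formula-protocol}, \ref{lem:protocol-formula}): a hazard-free formula of alternation depth $2$ and depth $d$ yields a protocol in which Alice sends a first message, Bob replies, and they agree on an answer, with the total communication bounded in terms of $d$; more precisely the subformula structure forces that Alice's message in effect identifies a prime implicant (by Proposition~\ref{prop:dnf-unique} distinct prime implicants need distinct messages), costing at least $\lceil \log_2 P \rceil$ bits where $P$ is the number of prime implicants, and Bob's reply plus the final bit(s) cost at least $\lceil \log_2 Q \rceil$ where $Q$ relates to the number of prime implicates (or literals in the chosen term). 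The cleanest route is: any prime implicant has size at most $\max_k (k + 2^{n-k})$ which is dominated by the all-$\un$ selector case giving size $2^n$; but the term we must also find a literal \emph{within} — so Bob must point to one of up to $n$ selector literals, needing $\lceil \log_2 n\rceil$, which is too weak. Instead, the right bound comes from a counting/Kraft-inequality argument: Alice's prefix-free code over her $P = 3^n$ prime implicants, combined with the fact that a prime implicant of size $s$ forces the formula to have a root-to-leaf path of length at least (first-round length) $+ \lceil \log_2 s\rceil$, and the first-round length for that implicant is at least $\log_2(\text{number of implicants Alice must distinguish})$. Summing, a prime implicant with selector-part of $k$ stable bits has size $\geq k$ stable selector literals, and the subformula computing its $\wedge$ has depth $\geq \lceil \log_2 k\rceil$ — no. Let me restate: I expect the actual lower bound to follow the paper's own outline — distinguish prime implicants of logarithmic size from those of super-logarithmic size, apply Kraft's inequality to the prefix code Alice uses in round one, and show $2n+1$ cannot be beaten.

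Concretely, I would proceed as follows. Step one: translate the depth-$d$ alternation-$2$ formula into a two-round protocol where Alice's first message is a prefix-free code $C:\{\text{prime implicants}\}\to\{0,1\}^*$, and on input a prime implicant $\alpha$ of size $s_\alpha$, the formula has depth at least $|C(\alpha)| + \lceil\log_2 s_\alpha\rceil$ (the $\wedge$ of $s_\alpha$ literals needs depth $\lceil\log_2 s_\alpha\rceil$, sitting below Alice's decision tree of depth $|C(\alpha)|$). Step two: so $d \geq \max_\alpha\left(|C(\alpha)| + \lceil\log_2 s_\alpha\rceil\right)$, and since $C$ is prefix-free over the $3^n$ prime implicants, Kraft gives $\sum_\alpha 2^{-|C(\alpha)|} \leq 1$, hence $\sum_\alpha 2^{-(d - \lceil\log_2 s_\alpha\rceil)} \leq 1$, i.e. $\sum_\alpha s_\alpha \cdot 2^{\lceil\log_2 s_\alpha\rceil - \log_2 s_\alpha} \geq \sum_\alpha s_\alpha \leq ... $ — rearranging, $2^d \geq \sum_\alpha 2^{\lceil \log_2 s_\alpha\rceil} \geq \sum_\alpha s_\alpha$. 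Step three: using the size count above, $\sum_\alpha s_\alpha = 4^n + 2n3^{n-1} > 4^n$, so $d > \log_2(4^n) = 2n$, hence $d \geq 2n+1$.

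The main obstacle is step one: carefully justifying that a hazard-free alternation-$2$ formula of depth $d$ really forces, for each prime implicant $\alpha$, a root-to-leaf path of length $\geq |C(\alpha)| + \lceil \log_2 s_\alpha\rceil$ with a \emph{prefix-free} assignment $C$. This needs the protocol-to-formula and formula-to-protocol correspondence of Lemmas~\ref{lem:formula-protocol} and \ref{lem:protocol-formula} to be set up so that the $\vee$-tree above the term-$\wedge$-gates is exactly Alice's decision tree (prefix-freeness is automatic from the tree structure) and the $\wedge$-gate for $\alpha$ genuinely sits at the claimed depth with all $s_\alpha$ of its literals present (no sharing that could shorten paths — but since it is a \emph{formula}, i.e.\ a tree, there is no sharing). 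Once that structural claim is nailed down, the Kraft-inequality computation and the arithmetic with $\sum_\alpha s_\alpha = 4^n + 2n3^{n-1}$ are routine. A secondary subtlety is the rounding: I use $2^{\lceil \log_2 s_\alpha\rceil} \geq s_\alpha$, which is tight enough since we only need $2^d > 4^n$, a strict inequality guaranteed by the extra $2n3^{n-1}$ term.
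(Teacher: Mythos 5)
Your proposal is correct, and the size half is essentially the paper's own argument: Proposition~\ref{prop:dnf-unique} forces, for each prime implicant, a dedicated $\wedge$-subformula equal to the conjunction of its literals, so $\size{\un}_{2}(\MUX_n)$ is at least the sum of the prime implicant sizes, which your count $\sum_{k}\binom{n}{k}2^{k}(k+2^{n-k})=4^n+2n3^{n-1}$ matches (the paper indexes by the number $i=n-k$ of $\un$'s), and Huffman's DNF gives the matching upper bound. Your side claim that an optimal formula contains \emph{only} such gates is neither needed nor really justified; equality already follows from the two bounds. Where you genuinely diverge is the depth half: the paper gets $\depth{\un}_{2}(\MUX_n)\geq\lceil\log_2(4^n+2n3^{n-1})\rceil=2n+1$ in one line, simply because a depth-$d$ formula has at most $2^d$ leaves, so the size bound immediately forces $2^d>4^n$. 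You instead route through a two-round protocol, a prefix code for Alice's prime implicant, and Kraft's inequality; this is sound once your flagged ``step one'' is pinned down (the chosen $\wedge$-gates form an antichain in the tree because distinct prime implicants have incomparable literal sets by primality, so their root-to-gate paths are automatically prefix-free, and the subformula below the gate for $\alpha$ has depth at least $\lceil\log_2 s_\alpha\rceil$), and it yields exactly the same numerical conclusion $2^d\geq\sum_\alpha s_\alpha>4^n$. So it buys nothing extra here --- it is essentially the machinery the paper reserves for the sharper bound $\depth{\un}_{2}(\MUX_n)=2n+2$ in Theorem~\ref{thm:mux-dnf-depth}, where Bob's reply lengths must also be charged --- and it costs you the structural lemma plus the somewhat garbled intermediate manipulations in your step two (the final rearrangement $2^d\geq\sum_\alpha 2^{\lceil\log_2 s_\alpha\rceil}\geq\sum_\alpha s_\alpha$ is the correct one). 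If you only want $2n+1$, replace all of that by ``leaves $\leq 2^{\depth{}}$'' applied to your own size bound.
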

\begin{proof}
Consider an arbitrary hazard-free formula $F$ of alternation depth $2$ computing $\MUX_n$. By Proposition~\ref{prop:dnf-unique}, for each prime implicant $\alpha$ of $\MUX_n$, there is at least one subformula that computes $\alpha$. Therefore, the size of $F$ must be at least the sum of the sizes of all prime implicants. We now compute the sum of the sizes of all prime implicants of $\MUX_n$. The size of any prime implicant where the selector bits have exactly $i$ $\un$'s is $n - i + 2^i$. This is because the other $n-i$ selector bits must have Boolean values, and the subcube indexed by these selector bits contains $2^i$ points, which must all have value 1 in the data bits. There are $\binom{n}{i}$ ways to choose $i$ positions for the selector bits with unstable values. For each such choice, there are $2^{n-i}$ ways to set the remaining selector bits. Therefore, the sum of the sizes of all prime implicants in $\MUX_n$ is $\sum_{i = 0}^{n} \binom{n}{i} 2^{n-i} (2^i + n-i) = 4^n + 2n3^{n-1}$, giving us the required size lower bound (This is also an upper bound using \cite{Huf:57}). We readily conclude $\depth{\un}_{2}(f) \geq \lceil \log_2(4^n + 2n3^{n-1}) \rceil = 2n+1$.
\end{proof}

We now show that the existence of low-depth hazard-free formulas of alternation depth $2$ is linked to the existence of certain short prefix codes.

\begin{lemma}\label{lem:hf-prefix-code}
Let $n \geq 1$. For $d \geq 0$, there is a hazard-free formula of alternation depth $2$ and depth $d$ for $\MUX_n$ if and only if there is a prefix code for the set of all prime implicants of $\MUX_n$ such that for any $i$ where $0 \leq i \leq n$, each prime implicant with exactly $i$ $\un$'s in the selector bits is encoded using at most $d - \lceil \log_2(2^i + n - i) \rceil$ bits.
\end{lemma}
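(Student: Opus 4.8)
The plan is to establish a correspondence between hazard-free formulas of alternation depth $2$ for $\MUX_n$ and prefix codes for prime implicants, by translating the combinatorial structure of such formulas into the branching structure of a prefix code and vice versa. The key link is the two-round communication protocol picture: by Lemmas~\ref{lem:formula-protocol} and \ref{lem:protocol-formula} a hazard-free formula of alternation depth $2$ and depth $d$ corresponds to a protocol where Alice sends a string, Bob replies, and they output an answer, with the total communication bounded by $d$. By Proposition~\ref{prop:dnf-unique}, Alice's first message is an injective function of her prime implicant, so Alice's messages form a prefix code on the set of prime implicants; I will argue that the number of bits Bob must then send, in the worst case over Bob's prime implicates, when Alice holds a prime implicant with exactly $i$ unstable selector bits, is exactly $\lceil \log_2(2^i + n - i)\rceil$, because the relevant literals Alice could name form a set of size $2^i + n - i$ (the $n-i$ stable selector literals plus the $2^i$ data literals in the selected subcube), and for the DNF term corresponding to such a prime implicant Bob must be able to point to any one of those literals.

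\textbf{Forward direction (formula $\Rightarrow$ code).} Given a hazard-free formula $F$ of alternation depth $2$ and depth $d$ computing $\MUX_n$, I would first use Proposition~\ref{prop:dnf-unique}: for each prime implicant $\alpha$ there is a topmost $\wedge$-gate $g_\alpha$ whose subformula is exactly the conjunction of the literals of $\alpha$, and distinct prime implicants give distinct such subformulas. The $\vee$-structure above these $\wedge$-gates is a tree (formula!), so the root-to-$g_\alpha$ paths in this $\vee$-tree assign to each prime implicant $\alpha$ a binary string $c(\alpha)$, and these strings form a prefix code since the $g_\alpha$'s are leaves of the $\vee$-tree. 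The depth of $g_\alpha$ in $F$ is $|c(\alpha)|$ plus the depth of the $\wedge$-subformula computing $\alpha$; a balanced $\wedge$ of $k$ literals has depth $\lceil \log_2 k\rceil$, and for a prime implicant with exactly $i$ unstable selector bits $k = 2^i + n - i$ (as computed in Proposition~\ref{prop:depthlowerbound}). Since every such path has length at most $d$, we get $|c(\alpha)| \le d - \lceil \log_2(2^i+n-i)\rceil$, as required. One subtlety I must handle: the $\wedge$-subformula in $F$ need not be balanced, but if it is unbalanced then $|c(\alpha)|$ is even smaller, so the inequality still holds — I should phrase this as "depth at $g_\alpha$ is at least $|c(\alpha)| + \lceil\log_2(2^i+n-i)\rceil$ using that any $\wedge$ of $k$ literals has depth $\ge \lceil \log_2 k\rceil$."

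\textbf{Reverse direction (code $\Rightarrow$ formula).} Given a prefix code satisfying the stated bound, build the corresponding binary prefix tree, and at the leaf for prime implicant $\alpha$ attach a balanced $\wedge$-subformula computing the conjunction of the literals of $\alpha$; turn every internal node of the prefix tree into an $\vee$-gate. The result is a formula of alternation depth $2$ (a tree of $\vee$'s over $\wedge$'s of literals). I must check two things: (1) it computes $\MUX_n$ and is hazard-free — this follows because the formula is exactly the DNF over all prime implicants, which is Huffman's hazard-free construction (Proposition~\ref{prop:depthlowerbound} / \cite{Huf:57}), just with the $\vee$'s arranged in a non-balanced tree, and re-associating $\vee$'s preserves the computed ternary function; (2) the depth is at most $d$ — the path to $\alpha$'s leaf has length $|c(\alpha)| \le d - \lceil\log_2(2^i+n-i)\rceil$, and the balanced $\wedge$ below it adds $\lceil\log_2(2^i+n-i)\rceil$, for a total of at most $d$.

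\textbf{Main obstacle.} The delicate point is the exact value $\lceil \log_2(2^i + n-i)\rceil$ for the $\wedge$-subformula: I need to be sure that for a prime implicant with $i$ unstable selector bits the optimal conjunction genuinely has $2^i + n - i$ literals and not fewer — i.e.\ that no literal can be dropped. This is exactly the minimality content of Proposition~\ref{prop:mux-prime} and was already carried out in Proposition~\ref{prop:depthlowerbound}, so I can cite it. A second minor obstacle is making the formula/protocol translation in Lemmas~\ref{lem:formula-protocol} and \ref{lem:protocol-formula} line up cleanly with the "alternation depth $2$ $\leftrightarrow$ two-round" correspondence; I expect this is routine once the DNF normal form (Proposition~\ref{prop:dnf-unique}) is in hand, since it pins Alice's first round to be a prefix encoding of her prime implicant and Bob's round to be a selection among the $\le 2^i+n-i$ literals of that term.
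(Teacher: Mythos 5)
Your proposal is correct and takes essentially the same approach as the paper: both hinge on Proposition~\ref{prop:dnf-unique} to force an injective (hence prefix-free) encoding of prime implicants, and both use the count $2^i+n-i$ of stable positions in a prime implicant to account for the remaining $\lceil\log_2(2^i+n-i)\rceil$ of depth. The only difference is cosmetic: the paper phrases the argument in the two-round protocol picture (Alice's message is the codeword, Bob's reply selects among $2^i+n-i$ possible answers), while you work directly on the formula (the $\vee$-tree above the $\wedge$-gates is the prefix tree, and the $\wedge$-subformula with $2^i+n-i$ leaves contributes the $\lceil\log_2(2^i+n-i)\rceil$); via the Karchmer-Wigderson correspondence (Lemmas~\ref{lem:formula-protocol} and~\ref{lem:protocol-formula}) these are the same argument.
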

\begin{proof}
Let $i$ be the number of $\un$'s in the selector bits of the prime implicant given to Alice. First, we determine a tight bound on the number of bits that Bob must transfer based on $i$ (after Alice has transferred her prime implicant to Bob). Once Bob has received the prime implicant from Alice, the final answer could be any of the selector bits with Boolean values ($n-i$ possibilities) or any of the data bits in the subcube indexed by Alice's selector bits ($2^i$ possibilities). Since there are $2^i + n - i$ distinct answers possible, Bob must use at least $\lceil \log_2(2^i + n - i) \rceil$ bits in his reply. This bound is tight. Once Bob receives the prime implicant from Alice, he can reply with the answer using at most $\lceil \log_2(2^i + n - i) \rceil$ bits. Therefore, a two-round protocol of depth $d$ exists if and only if there is a prefix code for the prime implicants that uses at most $d - \lceil \log_2(2^i + n - i) \rceil$ bits to encode prime implicants with $i$ $\un$'s in the selector bits.
\end{proof}

We now prove that the optimal depth for hazard-free formulas with alternation depth $2$ is $2n+2$.
This is the only depth lower bound in this paper that does not follow directly from a size lower bound.
We will need the well-known Kraft's inequality giving a necessary and sufficient condition for the existence of a prefix code. 

\begin{theorem}[{\cite[Theorem 5.2.1]{CT91}}]
\label{thm:kraft-ineq}
For any binary prefix code, the codeword lengths $\ell_1,\dotsc,\ell_m$ must satisfy the inequality 
\[\sum_{i=1}^m 2^{-\ell_i} \leq 1.\]
Conversely, given a set of codeword lengths that staisfy this inequality, there exists a prefix code with these codeword lengths. 
\end{theorem}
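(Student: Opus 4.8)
The plan is to prove both directions via the standard dictionary between binary strings and nodes of the infinite rooted binary tree: a codeword $c$ of length $\ell$ is the node reached by following $c$ from the root, and ``$c$ is a prefix of $c'$'' translates into ``$c$ is an ancestor of $c'$''. Under this dictionary a prefix code is precisely an antichain of tree nodes, with codeword length equal to node depth. Everything then reduces to counting nodes at a fixed depth.

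For the necessity part, set $L = \max_i \ell_i$ and consider the $2^{L}$ nodes at depth $L$. Each codeword, being a node at depth $\ell_i$, has exactly $2^{L-\ell_i}$ descendants at depth $L$, and the prefix-free condition says no codeword is an ancestor of another, so these descendant sets are pairwise disjoint subsets of the depth-$L$ layer. Counting gives $\sum_i 2^{L-\ell_i} \le 2^{L}$, and dividing by $2^{L}$ yields $\sum_i 2^{-\ell_i}\le 1$. This direction is essentially a one-line disjointness-and-counting argument once the tree picture is in place.

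For the sufficiency part, sort so that $\ell_1 \le \cdots \le \ell_m$ and build codewords greedily from shortest to longest: to place $c_i$, choose any depth-$\ell_i$ node whose entire rooted subtree is still untouched, set $c_i$ to be that node, and mark its whole subtree as used. This keeps the assigned codewords an antichain automatically --- a later codeword cannot lie inside a used subtree, and an already-placed shorter codeword cannot be an ancestor of $c_i$ since it would have blocked the chosen node. The only thing to verify is that the greedy step never fails, which I would do with a running-sum count: after placing $c_1,\dots,c_{i-1}$, each earlier codeword $c_j$ blocks $2^{\ell_i-\ell_j}$ of the depth-$\ell_i$ nodes, so the number of blocked nodes is exactly $\sum_{j<i} 2^{\ell_i-\ell_j}$; a free node exists iff this is $< 2^{\ell_i}$, i.e., iff $\sum_{j<i} 2^{-\ell_j} < 1$, and this holds because $\sum_{j<i}2^{-\ell_j} = \sum_{j\le i}2^{-\ell_j} - 2^{-\ell_i} \le 1 - 2^{-\ell_i} < 1$ by the Kraft hypothesis. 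Hence a suitable node always survives and the process outputs a prefix code with the prescribed lengths.

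The main obstacle --- modest as it is --- is the bookkeeping in the converse: making precise that after the first $i-1$ placements exactly $\sum_{j<i}2^{\ell_i-\ell_j}$ of the $2^{\ell_i}$ depth-$\ell_i$ nodes are blocked, and that strict positivity of the term $2^{-\ell_i}$ in the Kraft sum is what leaves at least one node available. With that count and the ancestor/descendant disjointness set up carefully, both directions are short.
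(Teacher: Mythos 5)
The paper does not prove this theorem itself; it is stated as a known result with a citation to Cover and Thomas, so there is no in-paper proof to compare against. Your argument is the standard (and correct) tree-based proof of Kraft's inequality, essentially the one given in the cited reference: necessity via disjointness of descendant sets at the deepest level, sufficiency via a greedy shortest-first assignment whose feasibility at each step follows from the running Kraft sum being strictly below~$1$ once the term $2^{-\ell_i}$ is removed. Both directions are sound, the disjointness you invoke in the blocked-node count is justified by the antichain invariant your greedy construction maintains, and sorting the lengths beforehand is exactly what makes the exponents $\ell_i-\ell_j$ nonnegative and the count exact. Nothing further is needed.
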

\begin{theorem}\label{thm:mux-dnf-depth}
For $n \geq 2$, we have $\depth{\un}_{2}(\MUX_n) = 2n+2$.
\end{theorem}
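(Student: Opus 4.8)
The plan is to combine Lemma~\ref{lem:hf-prefix-code} with Kraft's inequality (Theorem~\ref{thm:kraft-ineq}) in both directions: first construct a suitable prefix code witnessing $\depth{\un}_{2}(\MUX_n)\leq 2n+2$, and then show no prefix code can achieve depth $2n+1$. By Lemma~\ref{lem:hf-prefix-code}, a formula of alternation depth $2$ and depth $d$ exists iff there is a prefix code on the prime implicants of $\MUX_n$ (indexed by $\alpha\in\Tri^n$, by Proposition~\ref{prop:mux-prime}) in which each prime implicant whose selector part has exactly $i$ undefined bits is assigned a codeword of length at most $d-\lceil\log_2(2^i+n-i)\rceil$. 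Since there are $\binom{n}{i}2^{n-i}$ prime implicants with exactly $i$ unstable selector bits, by the converse part of Kraft's inequality it suffices (and is necessary) to check
\[
\sum_{i=0}^{n}\binom{n}{i}2^{n-i}\,2^{-(d-\lceil\log_2(2^i+n-i)\rceil)}\leq 1,
\]
i.e. $\sum_{i=0}^{n}\binom{n}{i}2^{n-i}2^{\lceil\log_2(2^i+n-i)\rceil}\leq 2^d$. So the whole theorem reduces to showing that the left-hand sum is $\leq 2^{2n+2}$ but $>2^{2n+1}$.

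For the upper bound $d=2n+2$ I would bound $2^{\lceil\log_2(2^i+n-i)\rceil}<2(2^i+n-i)\leq 2^{i+1}+2n$, so the sum is at most $\sum_i\binom{n}{i}2^{n-i}(2^{i+1}+2n)=2\cdot 3^n\cdot\frac{?}{}$… more precisely $2\sum_i\binom{n}{i}2^{n-i}2^i + 2n\sum_i\binom{n}{i}2^{n-i} = 2\cdot 3^n\cdot? $; here $\sum_i\binom{n}{i}2^{n-i}2^i=\sum_i\binom{n}{i}2^n=2^n\cdot 2^n=4^n$ and $\sum_i\binom{n}{i}2^{n-i}=3^n$, giving a bound $2\cdot 4^n + 2n\cdot 3^n$, which is comfortably below $4^{n+1}=2^{2n+2}$ for all $n\geq 1$. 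The catch is the ceiling: for most $i$ the crude estimate is wasteful, and one must be careful with the small number of indices $i$ where $2^i+n-i$ is just above a power of $2$. The clean way, hinted at in the introduction, is to split the indices into the \emph{logarithmic-size} range (roughly $i\leq \log_2 n$, where the $n-i$ term dominates or is comparable) and the \emph{super-logarithmic} range ($i>\log_2 n$, where $2^i$ dominates and $2^i+n-i<2^{i+1}$ so the ceiling costs at most one extra bit, giving exactly $2^{i+1}$ after rounding up in the worst case); summing each range separately against $\binom{n}{i}2^{n-i}$ and adding a lower-order correction term for the finitely many boundary indices should land under $2^{2n+2}$.

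For the lower bound — that depth $2n+1$ is impossible — I would show $\sum_{i=0}^{n}\binom{n}{i}2^{n-i}2^{\lceil\log_2(2^i+n-i)\rceil}>2^{2n+1}=2\cdot 4^n$. Using $2^{\lceil\log_2(2^i+n-i)\rceil}\geq 2^i+n-i\geq 2^i$ already gives $\sum_i\binom{n}{i}2^{n-i}2^i=4^n$, which is only half of what we need, so the rounding-up \emph{must} be exploited. The key point is that for a constant fraction of the weight — concretely for $i$ around $n/2$, where $\binom{n}{i}2^{n-i}$ is large and $2^i+n-i$ is strictly between consecutive powers of $2$ (it is not a power of $2$ since $n-i>0$ when $i<n$, and $2^i<2^i+n-i<2^{i+1}$ once $2^i>n-i$), we have $2^{\lceil\log_2(2^i+n-i)\rceil}=2^{i+1}$, contributing $\binom{n}{i}2^{n-i}2^{i+1}=2\binom{n}{i}2^n$ to the sum. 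Summing $2\cdot 2^n\sum_{i\in I}\binom{n}{i}$ over the range $I$ of such indices (which, for $n\geq 2$, includes enough of the binomial mass that $\sum_{i\in I}\binom{n}{i}>2^{n-1}$) already exceeds $2\cdot 4^n$; one then needs the hypothesis $n\geq 2$ exactly to make this counting work (for $n=1$ the optimum is indeed smaller, $\depth{\un}_{2}(\MUX_1)=3=2\cdot1+1$ by Proposition~\ref{prop:mux1-optimal}, consistent with the theorem being stated only for $n\geq 2$).

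The main obstacle is the careful bookkeeping around the ceiling function: both bounds are genuinely tight up to the $O(n3^n)$ correction, so one cannot afford to lose a factor of $2$ anywhere. In particular the lower bound forces one to identify precisely which indices $i$ have $2^i+n-i$ strictly above a power of $2$ (equivalently $\lceil\log_2(2^i+n-i)\rceil=i+1$), and to verify that the binomial weight $\sum_{i}\binom{n}{i}2^{n-i}$ carried by those indices is more than $\tfrac12\cdot 3^n$ — this is where the case distinction between prime implicants of logarithmic size and super-logarithmic size (and the assumption $n\geq 2$) does the real work. The upper bound is comparatively routine once the same case split is in place.
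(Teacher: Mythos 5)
Your framework — reduce via Lemma~\ref{lem:hf-prefix-code} to the existence of a prefix code and then invoke Kraft's inequality in both directions — is exactly the paper's strategy, and the reduction to the single inequality $\sum_{i=0}^{n}\binom{n}{i}2^{n-i}\,2^{\lceil\log_2(2^i+n-i)\rceil}\leq 2^d$ is correct. However, both of your concrete estimates have gaps.

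For the \emph{lower bound}, the arithmetic does not close. You want the sum to exceed $2^{2n+1}=2\cdot 4^n$, and you isolate a set $I$ of indices with $2^{\lceil\log_2(2^i+n-i)\rceil}=2^{i+1}$, whose contribution is $2\cdot 2^n\sum_{i\in I}\binom{n}{i}$. To beat $2\cdot 4^n$ this way you would need $\sum_{i\in I}\binom{n}{i}>2^n$, which is impossible since the full sum $\sum_{i=0}^n\binom{n}{i}=2^n$; your stated threshold $\sum_{i\in I}\binom{n}{i}>2^{n-1}$ only gives $>4^n$, off by a factor of $2$. The correct observation is stronger than ``a constant fraction of indices has the ceiling rounded up'': writing $\Psi(i)=\lceil\log_2(2^i+n-i)\rceil-i$, one has $\Psi(i)\geq 1$ for \emph{every} $i\in\{0,\dots,n-1\}$ (because $2^i+n-i>2^i$ whenever $i<n$), while $\Psi(n)=0$. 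That alone gives $\sum_i\binom{n}{i}2^{\Psi(i)}\geq 1+2(2^n-1)=2^{n+1}-1$, which misses strictness by one unit; the missing slack comes from a single index where $\Psi(i)\geq 2$, e.g.\ $\Psi(0)=\lceil\log_2(n+1)\rceil\geq 2$ once $n\geq 2$. This is precisely where the hypothesis $n\geq 2$ enters, not through any binomial-mass estimate near $n/2$.

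For the \emph{upper bound}, the crude bound $2^{\lceil\log_2 m\rceil}<2m$ yields $2\cdot 4^n+2n\cdot 3^n$, but this is \emph{not} below $4^{n+1}$ for all $n\geq 1$: at $n=2$ it gives $32+36=68>64$, and it fails for $n=2,\dots,6$. So the ``comfortably below'' claim is false, and the subsequent hand-wave about splitting into logarithmic versus super-logarithmic $i$ does not resolve it. The paper has to work considerably harder here: it introduces the threshold $t=\min\{i:\Psi(i)=1\}$, rewrites the Kraft sum using $\Psi(i)=1$ on $[t,n-1]$, manipulates the result via Pascal's rule, bounds the contribution of the $O(\log n)$ small indices by an entropy estimate, and even then only closes the argument analytically for $n\geq 39$, verifying $n\leq 38$ by a direct computer calculation (Section~\ref{sec:computer}). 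This fallback is unavoidable because the Kraft sum is extremely close to $2^{2n+2}$ for small $n$, exactly where your crude estimate overshoots. Any complete proof along these lines must either match the paper's bookkeeping or find a genuinely sharper bound that handles small $n$ uniformly.
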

\begin{proof}
From Lemma~\ref{lem:hf-prefix-code} we know that it suffices to find the minimal $d$ for which there exists a prefix code for the set of all prime implicants of $\MUX_n$ such that for any $i$, $0\leq i\leq n$, each prime implicant with exactly $i$ $\un$'s in the selector bits is encoded using at most $d-\lceil\log_2(2^{i}+n-i)\rceil$ bits. We know that there are $\binom{n}{i}2^{n-i}$ many prime implicants with exactly $i$ $\un$'s. Now using Kraft's inequality (Theorem~\ref{thm:kraft-ineq}) we have that the lengths of the encoding for each prime implicant must satisfy the following inequality 
\[\sum_{i=0}^n\binom{n}{i}2^{n-i}\cdot 2^{-(d-\lceil\log_2(2^{i}+n-i)\rceil)}  \leq 1.\]
Rearranging we obtain
\[\sum_{i=0}^n\binom{n}{i}2^{n-i}\cdot 2^{\lceil\log_2(2^{i}+n-i)\rceil} \leq 2^d.\]

For a fixed $n$, define the function $\Psi(i) := \lceil\log_2(2^{i}+n-i)\rceil - i$, where $0\leq i\leq n$. 
Note that for all $i \in \{0,1,\dotsc, n\}$, 
$\Psi(i)\geq 0$ and further $\Psi(i)$ is an integer. 
Now consider $\psi(i) := \log_2(2^{i}+n-i) - i$. 
Clearly from the definitions we have $\Psi(i) = \lceil\psi(i)\rceil$ for all $i \in \{0,1,\dotsc, n\}$.  From elementary calculus it follows 
that $\psi(i)$ is a continuous function that is decreasing in the interval $[0,n]$. Furthermore observe that at 
$i=0$, $\psi(0) = \log_2(n+1)$ and $\Psi(0) = \lceil\log_2(n+1)\rceil$, and at $i=n-1$, 
$\Psi(n-1)=\lceil\psi(n-1)\rceil=1$. 
Therefore
when $n\geq 2$, $\Psi(0) \geq 2$, and thus by the continuity of $\psi$ there exists a $t \in \{1,\dotsc n-1\}$ such that for all $i\in \{t,\dotsc,n-1\}$, 
$\Psi(i) = 1$. Choose the minimal such $t$. In the following we will work with this minimal $t$.  

Now breaking the summation at $t-1$ we get 
\[ \sum_{i=0}^{t-1}\binom{n}{i}2^{n-i}\cdot 2^{\Psi(i)+i} + \sum_{i=t}^{n-1}\binom{n}{i}2^{n-i}\cdot 2^{\Psi(i)+i} + \binom{n}{n}2^n \leq 2^d.\]
Dividing by $2^{n+1}$ on both sides we obtain 
\begin{align}
\label{eq:kraft-bound}
\sum_{i=0}^{t-1}\binom{n}{i} 2^{\Psi(i)-1} + \sum_{i=t}^{n-1}\binom{n}{i} 2^{\Psi(i)-1} + \binom{n}{n}\frac{1}{2} \leq 2^{d-n-1}.
\end{align}

Up to this point we only reformulated the property of the existence of a prefix code with the desired properties. We now prove the theorem by first considering the lower bound and then the upper bound.

Now by the minimality of $t$ and the integrality of $\Psi$, for $0\leq i\leq t-1$,  $\Psi(i) \geq 2$, and for $i\in\{t,\dotsc,n-1\}$, $\Psi(i) = 1$. Thus, plugging these values in the left hand side of the above inequality we see that the left hand side is strictly greater than $2^n$. 
Hence, we have 
\[ 2^n < 2^{d-n-1}.\]
Therefore, we get that $d> 2n+1$, when $n\geq 2$. 

We further observe that $t$ is at most $\lceil\log_2 n\rceil$, because for all $i \in \{\lceil\log_2 n\rceil, \dotsc, n-1\}$, we have $\Psi(i) = 1$.
We now claim that using $t\leq \lceil\log_2 n\rceil$ and $d=2n+2$, the inequality~\eqref{eq:kraft-bound} is satisfied. 
Plugging the value of $d$ and using the fact that $\Psi(i)-1=0$ for $i \in\{t,\dotsc ,n-1\}$ we can rewrite inequality~\eqref{eq:kraft-bound} as follows
\begin{equation}\label{eq:computer}
\sum_{i=0}^{t-1}\binom{n}{i} 2^{\Psi(i)-1} + \sum_{i=t}^{n-1}\binom{n}{i} + \binom{n}{n}\frac{1}{2} \leq 2^{n+1}.
\end{equation}
Now using $\sum_{i=0}^{n+1}\binom{n+1}{i} = 2^{n+1}$, we obtain 
\[\sum_{i=0}^{t-1}\binom{n}{i} 2^{\Psi(i)-1} + \sum_{i=t}^{n-1}\binom{n}{i} + \binom{n}{n}\frac{1}{2} \leq \sum_{i=0}^{n+1}\binom{n+1}{i} .\]
Moving the second summand to the right hand side we have
\[\sum_{i=0}^{t-1}\binom{n}{i} 2^{\Psi(i)-1} +\binom{n}{n}\frac{1}{2}  \leq \sum_{i=0}^{n+1}\binom{n+1}{i} - \sum_{i=t}^{n-1}\binom{n}{i}.\]
Rewriting again we get 
\[\sum_{i=0}^{t-1}\binom{n}{i} 2^{\Psi(i)-1} +\binom{n}{n}\frac{1}{2}  \leq \sum_{i=0}^{t-1}\binom{n+1}{i} + \sum_{i=t}^{n+1}\binom{n+1}{i} - \sum_{i=t}^{n-1}\binom{n}{i}.\]
Further rewriting leads to 
\[\sum_{i=0}^{t-1}\binom{n}{i} 2^{\Psi(i)-1} +\binom{n}{n}\frac{1}{2}  \leq \sum_{i=0}^{t-1}\binom{n+1}{i} + \sum_{i=t}^{n-1}\left[\binom{n+1}{i} -\binom{n}{i}\right] +\binom{n+1}{n} + \binom{n+1}{n+1}.\]
Now using the Pascal's rule, $\binom{n+1}{i}-\binom{n}{i} = \binom{n}{i-1}$, to simplify the second summand on the right hand side we have
\[\sum_{i=0}^{t-1}\binom{n}{i} 2^{\Psi(i)-1} +\binom{n}{n}\frac{1}{2}  \leq \sum_{i=0}^{t-1}\binom{n+1}{i} + \sum_{i=t}^{n-1}\binom{n}{i-1} +\binom{n+1}{n} + \binom{n+1}{n+1}.\]
Rewriting and simplifying we obtain,
\[\sum_{i=0}^{t-1}\binom{n}{i} 2^{\Psi(i)-1} +\binom{n}{n}\frac{1}{2}  \leq \sum_{i=0}^{t-1}\binom{n+1}{i} + \sum_{i=t-1}^{n-2}\binom{n}{i} +n+1 + 1.\]
Rewriting the right hand side again gives us
\[\sum_{i=0}^{t-1}\binom{n}{i} 2^{\Psi(i)-1} +\binom{n}{n}\frac{1}{2}  \leq \sum_{i=0}^{t-1}\binom{n+1}{i} + \sum_{i=t-1}^{n}\binom{n}{i} + 1.\]
Moving the second summand on the left hand side to the right and simplifying we have 
\[\sum_{i=0}^{t-1}\binom{n}{i} 2^{\Psi(i)-1}  \leq \sum_{i=0}^{t-1}\binom{n+1}{i} + \sum_{i=t-1}^{n}\binom{n}{i} + \frac{1}{2} .\]
Now we first observe that the right hand side is strictly greater than $2^n$. Using $t \leq \lceil\log_2 n\rceil$ we now show that the left hand side is at most $2^{\log_2^2n + 2\log_2 n}$. 
Thus showing that the inequality is satisfied. 
Using the fact that $\Psi(i)-1 \leq \Psi(0)-1\leq \lceil\log_2(n+1)\rceil -1 \leq \log_2 n$ and simplifying the left hand side, we obtain the following upper bound on it   
\[n\sum_{i=0}^{t-1}\binom{n}{i}.\]
Now using $\sum_{i=0}^k\binom{n}{i} \leq 2^{\mathbb{H}(k/n)\cdot n}$ for $k/n \leq 1/2$ and $\mathbb{H}$ being the Shannon entropy function, we can further bound it by 
\[n \cdot 2^{\mathbb{H}(\frac{t-1}{n})\cdot n}.\]
Now since $t-1 \leq \log_2 n$, we obtain the following upper bound on it
\[2^{\log_2 n + \log_2^2 n + \log_2 n},\]
where we used $\mathbb{H}(\log_2 n /n)\cdot n \leq \log_2^2 n+ \log_2 n$.
For $n \in \{1,\ldots,38\}$ we verified eq.~\eqref{eq:computer} with a computer calculation, see Section~\ref{sec:computer}.
The fact that $n \geq \log_2^2 n+ 2\log_2 n$ for all $n \geq 39$ completes the proof.
\end{proof}

%\section*{Appendix}

\section{Proofs for the hazard-free Karchmer-Wigderson game}
\label{sec:hazfreeKWgameproofs}
In this section we prove Theorem~\ref{thm:hf-kwthm}.
We split the proof into two lemmas.

\begin{lemma}
\label{lem:formula-protocol}
Let $f\colon\Bool^n\to\Bool$ be a Boolean function and $F$ be 
a hazard-free De Morgan formula computing it. Then, 
\[\cc(\kw{\un}_f) \leq \depth{}(F),\quad \mbox{ and }\quad \monorect(\kw{\un}_f) \leq \size{}(F).\]
\end{lemma}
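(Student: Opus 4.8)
The plan is to mimic the classical Karchmer--Wigderson argument, adapting it to the three-valued setting. I would proceed by structural induction on the hazard-free De Morgan formula $F$, proving simultaneously the depth bound $\cc(\kw{\un}_f) \leq \depth{}(F)$ and the leaf-count bound $\monorect(\kw{\un}_f) \leq \size{}(F)$. The key semantic fact driving the whole argument is the monotonicity of Goto's gate functions with respect to the instability order (used also in the footnote about circuit implementations): if $\alpha$ resolves to $a$, then $F(\alpha) \sqsubseteq F(a)$ in the sense that $F(\alpha) \in \{0,1\}$ forces $F(a) = F(\alpha)$. In particular, $\hfe{f}(\alpha) = 1$ means $F(\alpha) = 1$ and $\hfe{f}(\beta) = 0$ means $F(\beta) = 0$, so Alice and Bob hold inputs on which $F$ evaluates to $1$ and $0$ respectively.

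First I would handle the base case: a leaf labeled by a literal $x_i$ or $\neg x_i$. If $F = x_i$, then $F(\alpha) = 1$ forces $\alpha_i = 1$ (it cannot be $\un$ since the output is stable, and cannot be $0$), and $F(\beta) = 0$ forces $\beta_i = 0$; so the players output $i$ with no communication, and indeed $\alpha_i = 1 \neq 0 = \beta_i$ with both stable, a valid answer. The case $\neg x_i$ is symmetric. This gives $\cc \leq 0 = \depth{}(F)$ and $\monorect \leq 1 = \size{}(F)$. For the inductive step, suppose $F = F_0 \vee F_1$ computing $f = f_0 \vee f_1$ where $F_j$ is hazard-free for $f_j$. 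Given $\alpha$ with $F(\alpha) = 1$: since $\vee$ is $\mathsf{max}$ and the output is stable $1$, at least one of $F_0(\alpha), F_1(\alpha)$ equals $1$; Alice sends one bit naming such a $j$. Given $\beta$ with $F(\beta) = 0 = \mathsf{max}(F_0(\beta), F_1(\beta))$, we have $F_0(\beta) = F_1(\beta) = 0$, so Bob is already prepared for either branch. They recurse on $\kw{\un}_{f_j}$ with inputs $(\alpha, \beta)$, which is well-formed since $\hfe{f_j}(\alpha) = F_j(\alpha) = 1$ (using that $F_j$ is hazard-free for $f_j$) and $\hfe{f_j}(\beta) = F_j(\beta) = 0$. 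The $\wedge$ case is dual, with Bob sending the selector bit. This yields $\cc(\kw{\un}_f) \leq 1 + \max_j \cc(\kw{\un}_{f_j}) \leq 1 + \max_j \depth{}(F_j) = \depth{}(F)$, and for leaves, the protocol's rectangle partition is the disjoint union of the two sub-partitions (restricted to the appropriate half after Alice's/Bob's bit), giving $\monorect \leq \monorect(\kw{\un}_{f_0}) + \monorect(\kw{\un}_{f_1}) \leq \size{}(F_0) + \size{}(F_1) = \size{}(F)$.

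The main subtlety---and what I expect to be the point needing the most care---is verifying that the recursion produces valid inputs for the subgame at every step, i.e.\ that when Alice commits to branch $j$ of an $\vee$-gate, her input $\alpha$ really is an implicant of $f_j$ and not merely of $f$, and dually for Bob. This is exactly where hazard-freeness of the subformulas $F_j$ is essential: $\hfe{f_j} = F_j$ as functions on $\Tri^n$, so "$F_j(\alpha) = 1$" is literally "$\alpha \in \hfe{f_j}^{-1}(1)$". Without hazard-freeness of the subformula this identification would fail. I should also note that the answer returned by the subgame for $f_j$---a coordinate $i$ with $\alpha_i \oplus \beta_i = 1$---is automatically a valid answer for the original game, since the condition does not reference $f$ at all. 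Finally I would remark that the rectangle-counting claim needs the standard observation that a communication protocol's leaves partition $A \times B$ into combinatorial rectangles, and that prepending one communicated bit splits one rectangle's worth of work into two, matching the recursion $\size{}(F_0 \vee F_1) = \size{}(F_0) + \size{}(F_1)$ exactly; there is no slack, which is why the bound is stated with $\leq$ coming only from the choice of an optimal protocol at the leaves.
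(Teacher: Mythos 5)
Your proof rests on a claim that is false in general: that the immediate subformulas $F_0, F_1$ of a hazard-free formula $F = F_0 \vee F_1$ are themselves hazard-free for the Boolean functions $f_0, f_1$ they compute. You invoke this explicitly to justify the recursion (``using that $F_j$ is hazard-free for $f_j$'') and again when identifying ``$F_j(\alpha)=1$'' with ``$\alpha \in \hfe{f_j}^{-1}(1)$'', and your inductive hypothesis is stated only for hazard-free formulas, so without this claim the induction does not close. Here is a concrete counterexample. Take $F = \bigl((\neg s \wedge x_0) \vee (s \wedge x_1)\bigr) \vee (x_0 \wedge x_1)$ with the displayed parenthesization; this is the formula of Figure~\ref{fig:mux1-eg}(b) and is hazard-free for $\MUX_1$. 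Its left child $G = (\neg s \wedge x_0) \vee (s \wedge x_1)$ computes $\MUX_1$ on Boolean inputs, yet $G(\un,1,1) = \un$ while $\hfe{\MUX_1}(\un,1,1) = 1$, so $G$ is not hazard-free. Thus at the top $\vee$ gate, with Alice holding $\alpha = (\un,1,1)$, the subformula your protocol may descend into is not hazard-free, the inductive hypothesis does not apply to it, and the equation $\hfe{f_j}(\alpha) = F_j(\alpha)$ that you rely on fails to hold as an identity of functions.

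The direction you actually need is weaker and is true: by the stability-monotonicity you already invoked, $F_j(\alpha)=1$ implies $\hfe{f_j}(\alpha)=1$ and $F_j(\beta)=0$ implies $\hfe{f_j}(\beta)=0$; the converse (which is what hazard-freeness of $F_j$ would give) is not needed, and is not available. The paper avoids this trap by never recursing through the games $\kw{\un}_{f_j}$ of subfunctions at all. Instead it runs a single protocol on $F$ maintaining only the invariant that the current subformula $G$ satisfies $G(\alpha)=1$ and $G(\beta)=0$ (formula evaluations on the ternary inputs, not hazard-free extensions). This invariant is trivially preserved at $\vee$ and $\wedge$ gates by the $\max/\min$ semantics and terminates correctly at literals, with hazard-freeness of $F$ used only once, at the root, to establish the invariant initially. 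If you restate your induction so that the inductive hypothesis is this evaluation invariant rather than ``$F_j$ is hazard-free'', your argument goes through and coincides with the paper's.
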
 
\begin{lemma}
\label{lem:protocol-formula}
Let $f\colon\Bool^n\to\Bool$ be a Boolean function and $\Pi$ be a protocol 
for the hazard-free $\kw{}$-game $\kw{\un}_f$. Then, 
\[\depth{\un}(f) \leq \cc(\Pi),\quad \mbox{ and }\quad 
\size{\un}(f) \leq \monorect(\Pi).\]
\end{lemma}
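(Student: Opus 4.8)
The plan is to adapt the classical protocol-to-formula direction of the Karchmer--Wigderson theorem and then supply the one extra ingredient, namely that the formula we produce is \emph{hazard-free}. Let $\Pi$ be a protocol solving $\kw{\un}_f$. We may assume, after pruning the protocol tree (which changes neither its depth nor its number of leaves), that every node is associated with a nonempty rectangle $A'\times B'\subseteq\hfe f^{-1}(1)\times\hfe f^{-1}(0)$ and that each internal node splits this rectangle into two nonempty parts, so that the leaf rectangles form a monochromatic partition with at most $\monorect(\Pi)$ parts. Building bottom-up along the tree, I would turn each leaf into a literal, each Alice-owned node into a $\vee$ gate over the formulas of its two children, and each Bob-owned node into a $\wedge$ gate. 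The literal at a leaf with rectangle $A'\times B'$ is read off from a common valid answer $i$ there: by Remark~\ref{rem:shortcondition}, $\alpha_i\oplus\beta_i=1$ for all $(\alpha,\beta)\in A'\times B'$, so all $\alpha\in A'$ carry the same stable value $c:=\alpha_i$ and all $\beta\in B'$ carry $1-c$; take $x_i$ if $c=1$ and $\neg x_i$ if $c=0$. The resulting De Morgan formula $F$ then satisfies $\size{}(F)\le\monorect(\Pi)$ and $\depth{}(F)\le\cc(\Pi)$.

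The core step is an induction over the protocol tree establishing the invariant: the subformula $F_{A',B'}$ attached to the node with rectangle $A'\times B'$ evaluates, over Kleene's three-valued logic with $\vee=\max$ and $\wedge=\min$ under the order $0<\un<1$ and $\neg x=1-x$, to $1$ on every $\alpha\in A'$ and to $0$ on every $\beta\in B'$. The base case is exactly the choice of literal above. At an Alice node with split $A'=A'_0\uplus A'_1$ and children $F_0,F_1$: any $\alpha\in A'$ lies in some $A'_j$, hence $F_j(\alpha)=1$ and $(F_0\vee F_1)(\alpha)=1$; any $\beta\in B'$ satisfies $F_0(\beta)=F_1(\beta)=0$ by induction, hence $(F_0\vee F_1)(\beta)=0$. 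The Bob case is dual, using $\min$. Applying the invariant at the root (where $A'=\hfe f^{-1}(1)$ and $B'=\hfe f^{-1}(0)$) shows $F(\gamma)=1$ on every implicant and $F(\gamma)=0$ on every implicate of $f$.

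It remains to pin down $F$ on inputs $\gamma$ with $\hfe f(\gamma)=\un$, and this is where hazard-freeness, rather than mere correctness on $\Bool^n$, enters. I would use the standard fact that any formula over $\{0,\un,1\}$ is monotone for the stability order $\sqsubseteq$ defined by $\un\sqsubseteq 0$, $\un\sqsubseteq 1$ (immediate by induction, since $\max$, $\min$ and $1-x$ are each monotone). If $\hfe f(\gamma)=\un$ then $\gamma$ has resolutions $a,b\in\Bool^n$ with $f(a)=1$ and $f(b)=0$; since $a$ is an implicant and $b$ an implicate (the only resolution of a Boolean string is itself, so $\hfe f$ agrees with $f$ there), the invariant gives $F(a)=1$ and $F(b)=0$, and from $\gamma\sqsubseteq a$, $\gamma\sqsubseteq b$ and monotonicity we get $F(\gamma)\sqsubseteq 1$ and $F(\gamma)\sqsubseteq 0$, which forces $F(\gamma)=\un$. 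Hence $F(\gamma)=\hfe f(\gamma)$ for all $\gamma\in\Tri^n$, i.e.\ $F$ is a hazard-free De Morgan formula computing $f$, and therefore $\depth{\un}(f)\le\depth{}(F)\le\cc(\Pi)$ and $\size{\un}(f)\le\size{}(F)\le\monorect(\Pi)$.

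Compared with the classical proof, the construction is unchanged, so the main obstacle is the last paragraph: showing that the same formula computes the hazard-free extension on all of $\Tri^n$. This works because (i) the inductive invariant is stated in terms of \emph{all} implicants and implicates, so the root case already covers all of $\hfe f^{-1}(1)\cup\hfe f^{-1}(0)$, and (ii) stability-monotonicity sandwiches the value on every remaining input between a $0$ and a $1$ and hence forces it to be $\un$. One should also not overlook the pruning to nonempty leaf rectangles: without it, the literal chosen at a leaf $\emptyset\times B'$ need not be $0$ on all of $B'$, breaking the invariant, while the pruning costs nothing in depth or number of leaves.
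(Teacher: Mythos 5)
Your proof is correct and follows the same construction as the paper's: turn the protocol tree into a formula (Alice-nodes become $\vee$, Bob-nodes become $\wedge$, leaves become literals chosen from the monochromatic rectangle) and prove by induction over the tree that each subformula evaluates to $1$ on the corresponding $A'$ and $0$ on $B'$. The paper stops after applying this invariant at the root; your additional stability-monotonicity argument pinning down $F(\gamma)=\un$ whenever $\hfe{f}(\gamma)=\un$, and your note about pruning empty leaf rectangles, make explicit two small details the paper leaves implicit.
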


The proofs of Lemmas~\ref{lem:formula-protocol} and \ref{lem:protocol-formula} 
are natural generalizations of their corresponding counterparts in the original 
setting of the $\kw{}$-game.  
However, for the sake of completeness and to highlight the   
differences, we present the proofs below. 
\begin{proof}[Proof of Lemma~\ref{lem:formula-protocol}] 
(Formula to Protocol.) Let $F$ be a hazard-free De Morgan formula computing 
the function $f$. It suffices to show a protocol $\Pi$ solving $\kw{\un}_f$ 
such that $\cc(\Pi) \leq \depth{}(F)$ and 
$\monorect(\Pi) \leq \size{}(F)$. 
Let $\alpha\in\hfe{f}^{-1}(1)$ be Alice's input and 
$\beta\in\hfe{f}^{-1}(0)$ be Bob's input. 

In the protocol $\Pi$ both players keep track of a subformula $G$ of $F$ 
such that $G(\alpha) = 1$ and $G(\beta)=0$. We being at the root of $F$, i.e., $G=F$. 
By the hazard-free property of $F$, we have 
$G(\alpha)=F(\alpha)=\hfe{f}(\alpha)=1$ and $G(\beta)=F(\beta)=\hfe{f}(\beta)=0$. 
Now depending on the type of gate at the root of $G$, 
we decide which player sends the message. 

If $G = G_0 \vee G_1$, then it is Alice's turn. Observe that, 
since $G(\alpha) = 1$, 
there exists $i\in\Bool$ such that $G_i(\alpha) = 1$. 
Again by a similar reasoning, we also have $G_0(\beta) = G_1(\beta) = 0$. 
Thus, Alice can send a single bit to Bob 
indicating a child which evaluates to $1$. 
Then they both move to the corresponding subformula. 

If $G = G_0 \wedge G_1$, then it is Bob's turn. 
Again by a similar reasoning, 
$G(\alpha) =1$ and $G(\beta)=0$, we deduce that $G_0(\alpha) = G_1(\alpha) = 1$ and 
there exists $i\in\Bool$ such that $G_i(\beta) = 0$. 
Thus, Bob can send a single bit to Alice 
indicating a child which evaluates to $0$ and 
they both move to the corresponding subformula. 

If $G$ is a leaf of $F$ and is labeled with literal $x_i$ or $\neg x_i$ 
for some $i\in[n]$, then the protocol returns the coordinate $i$ as answer. 

Clearly, the number of bits exchanged on any input equals 
the depth of the leaf reached and the number of leaves in the protocol $\Pi$ equals the number of leaves in $F$. That is, $\cc(\Pi) = \depth{}(F)$ and 
$\monorect(\Pi) = \size{}(F)$. 
The correctness of the protocol follows from observing that 
when $G$ is a literal and $G(\alpha)=1$ and $G(\beta)=0$ then the variable 
corresponding to the literal has different (Boolean) values 
in the certificates $\alpha$ and $\beta$. 
\end{proof}

\begin{remark}
We note that the hazard-freeness of $F$ is crucially used in the above proof. Indeed, if $F$ was a De Morgan formula computing $f$ but \emph{with hazards}, then the players would not be able to play the game on hazardous inputs using this formula. For example, consider the non-hazard-free formula for $\MUX_1$ in Figure~\ref{fig:mux1-eg}(a) and suppose Alice's input is $(\un,1,1)$. Since the root node is $\vee$, it is Alice's turn. However, since the formula has a hazard at $(\un,1,1)$, Alice doesn't know which subformula evaluates to $1$ and hence the game gets stuck. 
\end{remark}

We now give the translation from protocols to formulas (Lemma~\ref{lem:protocol-formula}). We note that there is no difference in the proof with respect to the original setting. The only observation is that if we start with a protocol that solves hazard-free $\kw{}$-game then we obtain a hazard-free formula. Nevertheless for the sake of completeness we present the proof below. 

\begin{proof}[Proof of Lemma~\ref{lem:protocol-formula}]
(Protocol to Formula.) Let $\Pi$ be a protocol solving $\kw{\un}_f$. 
It suffices to show a hazard-free formula $F$ computing $f$ such that 
$\depth{}(F) \leq \cc(\Pi)$ and $\size{}(F) \leq \monorect(\Pi)$. 

Let $T$ be the protocol tree of $\Pi$. We will convert the tree $T$ 
into a hazard-free formula $F$ for~$f$. Every internal node in the 
tree is associated with a player whose turn it is to send the message. 
To obtain $F$ we first replace every internal node in $T$ as follows:
\begin{compactitem}
    \item If it is associated with Alice, then replace it with $\vee$,
    \item If it is associated with Bob, then replace it with $\wedge$.
\end{compactitem}
Now consider a leaf $\ell \in T$ and suppose that the output 
at this leaf is some coordinate $i \in [n]$. Let $A_\ell \times B_\ell$ be 
the set of inputs that reaches this leaf $\ell$.
This is a monochromatic combinatorial rectangle.
By definition of $\kw{\un}_f$,
exactly one of 
the following cases holds:
\begin{compactenum}
    \item[(\emph{i})] for all $\alpha \in A_\ell$, $\alpha_i = 1$ and for all $\beta \in B_\ell$, $\beta_i =0$, or 
    \item[(\emph{ii})] for all $\alpha \in A_\ell$, $\alpha_i = 0$ and for all $\beta \in B_\ell$, $\beta_i =1$.  
\end{compactenum}
In the first case we label the leaf $\ell$ in $F$ with the literal $x_i$, 
while in the second we label it with $\neg x_i$. 
Clearly the constructed formula $F$ has depth and size equal to $\cc(\Pi)$ 
and $\monorect(\Pi)$, respectively.
So it remains to argue the correctness of the transformation. 
That is, we need to verify that $F$ is indeed a hazard-free formula for $f$.
It suffices to show the following: 
\begin{quote}
for every node $v\in F$, 
the subformula $G$ rooted at $v$ satisfies $G(\alpha) = 1$ 
for all $\alpha \in A$ and $G(\beta)=0$ for all $\beta\in B$, 
where $A\times B$ is the set of the inputs 
that reach the node corresponding to $v$ in the protocol tree $T$. 
\end{quote}
The correctness now follows by applying this claim to the root of $F$ (note that for the root we have $A=\hfe{f}^{-1}(1)$ and $B=\hfe{f}^{-1}(0)$).
We prove the claim by induction on the depth of nodes in the formula. 

\noindent\textbf{Base case}: $\depth{} = 0$. The claim holds for the leaf nodes 
by our construction (i.e., by our choices of their labels).

\noindent\textbf{Induction step}: Suppose the claim holds for the children 
$v_0$ and $v_1$ of a certain node $v \in F$. We will now show that 
it also holds for $v$. Let $G, G_0, G_1$ be the subformulas rooted at 
$v, v_0, v_1$, respectively. 
We assume, w.l.o.g., that $G = G_0 \vee G_1$. (The other case being symmetric.)
Let $v',v'_0,v'_1 \in T$ be the corresponding nodes to $v,v_0,v_1 \in F$. 
Let $A \times B$ be the inputs reaching $v'$ in $T$. 
Since $G= G_0 \vee G_1$, it is Alice's turn to send the message. 
Therefore, Alice's message partitions $A$ into $A_0$ and $A_1$ 
such that $A_0 \times B$ is the inputs reaching $v'_0$ and 
$A_1\times B$ is the inputs reaching $v'_1$. 
Thus, by induction hypothesis, we have 
\begin{compactitem}
\item for all $\alpha \in A_0$, $G_0(\alpha) = 1$ and for all $\beta\in B$, $G_0(\beta)=0$, 
\item for all $\alpha \in A_1$, $G_1(\alpha) = 1$ and for all $\beta\in B$, $G_1(\beta)=0$. 
\end{compactitem}
This in turn implies that for all $\alpha \in A$, 
$G(\alpha) = G_0(\alpha) \vee G_1(\alpha) =1$ and for all $\beta\in B$, 
$G(\beta) = G_0(\beta) \vee G_1(\beta) =0$. 
\end{proof}
\begin{remark}
We remark that since $\Pi$ is a protocol solving $\kw{\un}_f$, we obtain a hazard-free formula computing $f$. More generally, if $\Pi$ would be a protocol solving the hazard-free $\kw{}$-game on the input space $A\times B$, where $f^{-1}(1) \subseteq A \subseteq \hfe{f}^{-1}(1)$ and 
$f^{-1}(0) \subseteq B \subseteq\hfe{f}^{-1}(0)$, then we will obtain a formula computing $f$ that is hazard-free only on the inputs in $A\uplus B$. (See Proposition~\ref{pro:hf-limited}.)
\end{remark}

\section{Limited hazard-freeness}\label{app:limited-hazard-free}
Avoiding all hazards can be very expensive (see \cite{IKLLMS19,Jukna21}) and sometimes is not needed, because we might have additional information about the input, for example when composing circuits.
Under the (physically realistic in that setting) assumption that the input contains at most one $\un$, the counter in \cite{FKLP:17} outputs the number of 1s in the input, encoded in binary Gray code with a single $\un$, such that both resolutions of the output correspond to numbers whose difference is 1. Note that if Gray codes are not used, then in the usual binary bit representation
one has to flip 4 bits to go from $7=0111_2$ to $8=1000_2$,
hence any hazard-free counter would have to output $\un\un\un\un$ on input $1111111\un$. When composing circuits, this additional information about the position of the $\un$ can be useful, which is shown for the task of sorting Gray code numbers in \cite{LM:16, BLM:17, BLM:18, BLM:20}.
Another interesting class of hazards that should be avoided are the inputs where the number of $\un$'s is bounded from above. This is the setting of $k$-bit hazard-freeness from \cite{IKLLMS19}.

Theorem~\ref{thm:hf-kwthm} as stated is not directly applicable in these settings, as it 
only characterizes hazard-free formulas, i.e., formulas that are hazard-free with respect to 
\emph{all} inputs.
However, we can also treat formulas that avoid only certain hazards.
\begin{proposition}\label{pro:hf-limited}
Given sets $A$ and $B$ with 
$f^{-1}(1) \subseteq A \subseteq \hfe{f}^{-1}(1)$ and 
$f^{-1}(0) \subseteq B \subseteq\hfe{f}^{-1}(0)$.
The hazard-free $\kw{}$-game where Alice gets input from 
$A$ and Bob gets input from $B$ characterizes formulas that is hazard-free on inputs in $A\uplus B$.
\end{proposition}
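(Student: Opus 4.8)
The plan is to mimic the two lemmas that together give Theorem~\ref{thm:hf-kwthm}, namely Lemmas~\ref{lem:formula-protocol} and \ref{lem:protocol-formula}, but to carry the restricted input sets $A$ and $B$ along throughout. Concretely, we show that $\cc$ of the restricted game (Alice's input from $A$, Bob's from $B$) equals the minimal depth, and $\monorect$ of it equals the minimal size, of a De Morgan formula that computes $f$ and is hazard-free on every input in $A\uplus B$. Before anything else one checks that this restricted game is a total search problem: for $\alpha\in A\subseteq\hfe{f}^{-1}(1)$ and $\beta\in B\subseteq\hfe{f}^{-1}(0)$ there is always a coordinate $i$ with $\alpha_i\oplus\beta_i=1$ (the observation following Definition~\ref{def:hf-kw-rel} applies verbatim, since a common resolution of $\alpha$ and $\beta$ would be mapped by $f$ to both $1$ and $0$), so the communication matrix restricted to rows $A$ and columns $B$ has only nonempty cells.

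For the formula-to-protocol direction, let $F$ compute $f$ and be hazard-free on $A\uplus B$. Run the descent of the proof of Lemma~\ref{lem:formula-protocol} starting from $G=F$, with Alice holding $\alpha\in A$ and Bob holding $\beta\in B$. The only property used at each step is the invariant $G(\alpha)=1$ and $G(\beta)=0$ for the current subformula $G$; at the root this holds because $F(\alpha)=\hfe{f}(\alpha)=1$ and $F(\beta)=\hfe{f}(\beta)=0$, which is exactly the hypothesis that $F$ is hazard-free on $A\uplus B$ together with $\alpha\in A\subseteq\hfe{f}^{-1}(1)$ and $\beta\in B\subseteq\hfe{f}^{-1}(0)$. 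The $\vee$-, $\wedge$-, and leaf-steps propagate the invariant using only the three-valued $\max$/$\min$/$(1-x)$ semantics, exactly as in the original proof; crucially, the formula is never evaluated on an input outside $A\cup B$, so no further hazard-freeness is needed. This yields a protocol $\Pi$ solving the restricted game with $\cc(\Pi)=\depth{}(F)$ and $\monorect(\Pi)=\size{}(F)$.

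For the protocol-to-formula direction, given a protocol $\Pi$ for the restricted game, apply the construction of Lemma~\ref{lem:protocol-formula} unchanged: Alice-nodes become $\vee$, Bob-nodes become $\wedge$, and a leaf returning $i$ becomes $x_i$ or $\neg x_i$ according to which of the two cases of Definition~\ref{def:hf-kw-rel} holds on the monochromatic rectangle reaching it. The inductive claim becomes: for every node $v$ of $F$, if $A'\times B'$ is the set of inputs reaching the corresponding protocol node, then the subformula at $v$ evaluates to $1$ on all of $A'$ and to $0$ on all of $B'$; the induction is identical to before. Applied at the root, where $A'=A$ and $B'=B$, we get $F(\alpha)=1$ for all $\alpha\in A$ and $F(\beta)=0$ for all $\beta\in B$. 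Since $f^{-1}(1)\subseteq A$, $f^{-1}(0)\subseteq B$, and $f^{-1}(1)\cup f^{-1}(0)=\Bool^n$, the formula $F$ agrees with $f$ on all Boolean inputs, hence computes $f$; and since $A\subseteq\hfe{f}^{-1}(1)$ and $B\subseteq\hfe{f}^{-1}(0)$, it moreover agrees with $\hfe{f}$ on every input in $A\uplus B$, i.e.\ it is hazard-free there. Again $\depth{}(F)=\cc(\Pi)$ and $\size{}(F)=\monorect(\Pi)$, so the two directions together give the claimed characterization.

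The argument is a routine generalization of the unrestricted case; the only thing requiring care is the bookkeeping of the four containments and where each is used. The containments $A\subseteq\hfe{f}^{-1}(1)$ and $B\subseteq\hfe{f}^{-1}(0)$ are what make the restricted game well-posed (total) and what force the constructed formula to be hazard-free precisely on $A\uplus B$, whereas $f^{-1}(1)\subseteq A$ and $f^{-1}(0)\subseteq B$ are exactly what guarantees that the constructed formula still computes the underlying Boolean function $f$ rather than some incomparable function. Dropping any one of them breaks one of the two halves of the equivalence, so pinning down where each is invoked is the main (and essentially only) subtle point.
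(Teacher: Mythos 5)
Your proof is correct and takes exactly the approach the paper intends: the paper simply states that the proposition follows by ``a straightforward generalization of the proof of Theorem~\ref{thm:hf-kwthm} (and in particular of Lemmas~\ref{lem:formula-protocol} and \ref{lem:protocol-formula})'' without spelling it out, and your write-up carries out that generalization carefully, correctly tracking where each of the four containments is used.
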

The proof is a straightforward generalization of the proof of Theorem~\ref{thm:hf-kwthm} (and in particular of Lemmas~\ref{lem:formula-protocol} and \ref{lem:protocol-formula}).

\begin{example}\label{ex:limited-hf}
Consider the function $\MUX_{2}(s_1, s_2, x_{00}, x_{01}, x_{10}, x_{11})$. We proved in Theorem~\ref{thm:mux-lower} that any hazard-free formula for $\MUX_{2}$ requires $17$ leaves. Suppose we only want to be hazard-free on inputs where at least one selector bit is stable. Note that every other possible hazard is covered by the eight prime implicants and eight prime implicates of $\MUX_{2}$ labeling the rows and columns of the following matrix. Therefore, we can obtain an improved upper bound for this task by showing that the following game has a protocol of size smaller than $17$.
\[
\begin{pNiceMatrix}[margin,first-row,first-col]
\CodeBefore
% Top right
\rectanglecolor{red!10}{1-6}{3-8}
% Bottom left
\rectanglecolor{blue!10}{6-1}{8-3}
% Above bottom left, four rectangles
\rectanglecolor{green!10}{4-1}{4-2}
\rectanglecolor{red!20}{5-2}{5-3}
\cellcolor{blue!20}{4-3}
\cellcolor{yellow!20}{5-1}
% Top left, five rectangles
% x0
\rectanglecolor{red!30}{1-1}{2-2}
\rectanglecolor{red!30}{1-4}{2-4}
% s1
\cellcolor{green!30}{1-3,1-5}
% s1
\cellcolor{blue!30}{3-1,3-4}
% x1
\cellcolor{orange}{2-3,2-5}
% x1
\cellcolor{purple}{3-2,3-3,3-5}
% Bottom right, five rectangles
% x2
\cellcolor{green!60}{4-4,4-6,4-7,6-4,6-6,6-7,7-4,7-6,7-7}
% x3
\cellcolor{blue!40}{5-5,5-8,7-5,7-8,8-5,8-8}
% s1
\cellcolor{yellow!30}{4-5,4-8,6-5,6-8}
% s1
\cellcolor{red!40}{5-4,5-6,8-4,8-6}
% x3
\cellcolor{orange!50}{5-7,8-7}
\Body
{} & 000\un\un\un & 0\un00\un\un & 01{\un}0{\un}{\un} & \un 00{\un}0{\un} & \un 1{\un}0{\un}0 & 10{\un}{\un}0{\un} & 1\un{\un}{\un}00 & 11{\un}{\un}{\un}0 \\
001{\un}{\un}{\un} & x_{00} & x_{00} & s_2 & x_{00} & s_2 & s_1 & s_1 & s_1, s_2 \\
0{\un}11\un\un & x_{00} & x_{00}, x_{01} & x_{01} & x_{00} & x_{01} & s_1 & s_1 & s_1 \\
01\un1\un\un & s_2 & x_{01} & x_{01} & s_2 & x_{01} & s_1, s_2 & s_1 & s_1 \\
\un 01\un1\un& x_{00} & x_{00} & s_2 & x_{00}, x_{10} &  s_2 & x_{10} & x_{10} & s_2 \\
\un 1\un1\un1 & s_2 & x_{01} & x_{01} & s_2 & x_{01}, x_{11} & s_2 & x_{11} & x_{11} \\
10\un\un1\un & s_1 & s_1 & s_1, s_2 & x_{10} & s_2 & x_{10} & x_{10} & s_2 \\
1\un\un\un11 & s_1 & s_1 & s_1 & x_{10} & x_{11} & x_{10} & x_{10}, x_{11} & x_{11} \\
11\un\un\un1 & s_1, s_2 & s_1 & s_1 & s_2 & x_{11} & s_2 & x_{11} & x_{11} \\
\end{pNiceMatrix}
\]
Indeed, the above colouring yields a protocol of size 16. We can also show that we cannot do better. Substitute $s_1 = s_2 = \bigl(\begin{smallmatrix} 1&0 \\ 0&0 \end{smallmatrix}\bigr)$ and $x_{00} = x_{01} = x_{10} = x_{11} = \bigl(\begin{smallmatrix} 0&0 \\ 0&1 \end{smallmatrix}\bigr)$. This yields the following $16 \times 16$ (block) matrix. Note that if the original communication matrix had a protocol that yields fewer than $16$ combinatorial rectangles, then this block matrix must have rank less than $16$ since all substituted matrices are rank one. However, the block matrix is full-rank, showing that $16$ leaves are required.

\[
\begin{pmatrix}
0 & 0 & 0 & 0 & 1 & 0 & 0 & 0 & 1 & 0 & 1 & 0 & 1 & 0 & 1 & 0 \\
0 & 1 & 0 & 1 & 0 & 0 & 0 & 1 & 0 & 0 & 0 & 0 & 0 & 0 & 0 & 0 \\
0 & 0 & 0 & 0 & 0 & 0 & 0 & 0 & 0 & 0 & 1 & 0 & 1 & 0 & 1 & 0 \\
0 & 1 & 0 & 1 & 0 & 1 & 0 & 1 & 0 & 1 & 0 & 0 & 0 & 0 & 0 & 0 \\
1 & 0 & 0 & 0 & 0 & 0 & 1 & 0 & 0 & 0 & 1 & 0 & 1 & 0 & 1 & 0 \\
0 & 0 & 0 & 1 & 0 & 1 & 0 & 0 & 0 & 1 & 0 & 0 & 0 & 0 & 0 & 0 \\
0 & 0 & 0 & 0 & 1 & 0 & 0 & 0 & 1 & 0 & 0 & 0 & 0 & 0 & 1 & 0 \\
0 & 1 & 0 & 1 & 0 & 0 & 0 & 1 & 0 & 0 & 0 & 1 & 0 & 1 & 0 & 0 \\
1 & 0 & 0 & 0 & 0 & 0 & 1 & 0 & 0 & 0 & 1 & 0 & 0 & 0 & 0 & 0 \\
0 & 0 & 0 & 1 & 0 & 1 & 0 & 0 & 0 & 1 & 0 & 0 & 0 & 1 & 0 & 1 \\
1 & 0 & 1 & 0 & 1 & 0 & 0 & 0 & 1 & 0 & 0 & 0 & 0 & 0 & 1 & 0 \\
0 & 0 & 0 & 0 & 0 & 0 & 0 & 1 & 0 & 0 & 0 & 1 & 0 & 1 & 0 & 0 \\
1 & 0 & 1 & 0 & 1 & 0 & 0 & 0 & 0 & 0 & 0 & 0 & 0 & 0 & 0 & 0 \\
0 & 0 & 0 & 0 & 0 & 0 & 0 & 1 & 0 & 1 & 0 & 1 & 0 & 1 & 0 & 1 \\
1 & 0 & 1 & 0 & 1 & 0 & 1 & 0 & 0 & 0 & 1 & 0 & 0 & 0 & 0 & 0 \\
0 & 0 & 0 & 0 & 0 & 0 & 0 & 0 & 0 & 1 & 0 & 0 & 0 & 1 & 0 & 1
\end{pmatrix}
\]
\end{example}

For a natural number $k$, a Boolean circuit $C$ is said to have a $k$-bit hazard if there exists $\alpha\in\Tri^n$ such that $C$ has hazard at $\alpha$ and the number of $\un$'s in $\alpha$ is at most $k$. 
We can also obtain a $k$-bit hazard-free construction similar to  \cite[Theorem 5.3]{IKLLMS19} using Proposition~\ref{pro:hf-limited}.  
\begin{theorem}\label{thm:k-bit-hz-free}
For any Boolean function $f\colon\Bool^n\to\Bool$, there exists a $k$-bit hazard-free formula of 
depth at most \(2\log\left(\sum_{i=0}^k\binom{n}{i}\right) + 2k + \depth{}(f)\) and 
size at most \(\left(\sum_{i=0}^k\binom{n}{i}\right)^2\cdot 4^k\cdot \size{}(f)\).
\end{theorem}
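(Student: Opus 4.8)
The plan is to realise the statement entirely inside the hazard-free Karchmer--Wigderson framework: build an efficient protocol for the \emph{limited} hazard-free $\kw{}$-game of $f$ in which both players' inputs have at most $k$ unstable coordinates, and then translate it back to a formula via Proposition~\ref{pro:hf-limited}. Write $N:=\sum_{i=0}^{k}\binom{n}{i}$ and set $A:=\{\alpha\in\hfe{f}^{-1}(1)\mid \#\{i:\alpha_i=\un\}\le k\}$, $B:=\{\beta\in\hfe{f}^{-1}(0)\mid \#\{i:\beta_i=\un\}\le k\}$, so that $f^{-1}(1)\subseteq A\subseteq\hfe{f}^{-1}(1)$ and $f^{-1}(0)\subseteq B\subseteq\hfe{f}^{-1}(0)$, exactly as Proposition~\ref{pro:hf-limited} requires. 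Fix a De Morgan formula $F$ for $f$ and let $\Pi_F$ be the protocol for the Boolean game $\kw{}_f$ obtained from $F$ by the Karchmer--Wigderson construction (Theorem~\ref{thm:kwthm}); crucially the \emph{same} protocol satisfies $\cc(\Pi_F)\le\depth{}(F)$ and $\monorect(\Pi_F)\le\size{}(F)$.

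The protocol $\Pi$ on inputs $\alpha\in A$ (Alice), $\beta\in B$ (Bob) runs in four phases. (i) Alice sends the set $S_A\subseteq[n]$ of coordinates with $\alpha_i=\un$; since $|S_A|\le k$ this is one of $N$ possibilities, transmittable by a binary sub-tree of depth $\lceil\log_2 N\rceil$ with exactly $N$ leaves, and Bob replies with $S_B$ likewise, a second factor $N$ in leaves and $\lceil\log_2 N\rceil$ in depth. (ii) With $S:=S_A\cup S_B$ now known to both, Bob sends the stable bits $\beta|_{S_A\setminus S_B}$ and Alice sends $\alpha|_{S_B\setminus S_A}$; this costs $|S_A\triangle S_B|\le 2k$ bits, a complete sub-tree with $\le 4^k$ leaves and depth $\le 2k$. (iii) Alice forms the resolution $a$ of $\alpha$ by putting $0$'s on $S_A\cap S_B$ and Bob's transmitted values on $S_A\setminus S_B$; Bob forms the resolution $b$ of $\beta$ symmetrically. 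Then $a$ and $b$ agree on all of $S$, with $f(a)=1$ and $f(b)=0$. (iv) The players run $\Pi_F$ on $(a,b)$ and output its answer $i$.

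For correctness, $\Pi_F$ returns $i$ with $a_i\ne b_i$; since $a,b$ agree on $S$ we get $i\notin S$, hence $\alpha_i=a_i$ and $\beta_i=b_i$ are both stable and differ, a legal answer for $\kw{\un}_f$. Composing the sub-trees, $\cc(\Pi)\le 2\lceil\log_2 N\rceil+2k+\depth{}(F)$ and $\monorect(\Pi)\le N^2\cdot 4^k\cdot\size{}(F)$. Proposition~\ref{pro:hf-limited} (the protocol-to-formula direction, i.e.\ the generalisation of Lemma~\ref{lem:protocol-formula}) turns $\Pi$ into a De Morgan formula $F'$ for $f$ that is hazard-free on $A\uplus B$, with $\depth{}(F')\le\cc(\Pi)$ and $\size{}(F')\le\monorect(\Pi)$. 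Now $A\uplus B$ is exactly the set of inputs with at most $k$ unstable bits on which $\hfe{f}$ is stable; on the remaining $\le k$-unstable inputs $\alpha$ (those with $\hfe{f}(\alpha)=\un$), monotonicity of De Morgan formulas with respect to the stability order ($\un\sqsubseteq 0$, $\un\sqsubseteq 1$) forces $F'(\alpha)=\un$ — were $F'(\alpha)=1$, then $F'(b)=1$ for every resolution $b\sqsupseteq\alpha$, contradicting $F'(b)=f(b)=0$ at the resolution witnessing $\hfe{f}(\alpha)=\un$; the case $F'(\alpha)=0$ is symmetric. Hence $F'$ is $k$-bit hazard-free. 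Taking $F$ depth-optimal yields the depth bound and $F$ size-optimal yields the size bound (the $\log$ term being as stated up to the integer rounding $\lceil\cdot\rceil$), as in \cite[Theorem~5.3]{IKLLMS19}.

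The crux is phases (ii)--(iii): merely learning where all the $\un$'s sit does not let the players fall back on the Boolean game, because the resolutions they would naively choose can disagree on $S_A\triangle S_B$, where one player holds a $\un$ and the other a stable bit the first player cannot see. Exchanging the $\le 2k$ relevant stable bits is precisely what lets both commit to one \emph{common} assignment on $S$, after which any coordinate the Boolean game outputs automatically falls outside $S$ and is therefore stable in both inputs. The rest is bookkeeping: realising each subset transmission with depth $\lceil\log_2 N\rceil$ yet only $N$ leaves, and noticing that the $\hfe{f}=\un$ inputs are handled for free by stability-monotonicity.
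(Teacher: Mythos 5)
Your proposal is correct and follows essentially the same route as the paper: exchange the positions of unstable coordinates, exchange the stable bits on the symmetric difference, fix the intersection to a constant, and fall back to the Boolean Karchmer--Wigderson protocol, invoking Proposition~\ref{pro:hf-limited} to get the formula. You also explicitly verify that inputs with at most $k$ unstable bits on which $\hfe{f}=\un$ pose no problem (via stability-order monotonicity of De Morgan formulas), a point the paper leaves implicit; this is a welcome bit of extra rigor but not a different argument.
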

\begin{proof}
Following Proposition~\ref{pro:hf-limited} it suffices to give a protocol for the hazard-free $\kw{}$-game where Alice gets an implicant $\alpha$ with at most $k$ $\un$'s and Bob gets an implicate $\beta$ with at most $k$ $\un$'s.  The protocol is as follows:
\begin{enumerate}
    \item Alice sends the set $A\subseteq [n]$ of positions of the $\un$'s in $\alpha$. 
    \item Bob then sends the set $B\subseteq [n]$ of positions of the $\un$'s in $\beta$ and the bits in $\beta$ at all positions from $A\setminus B$.  
    \item Alice then sends the bits in $\alpha$ for the positions in $B\setminus A$. 
    \item For the positions in $A\cap B$, they decide beforehand to set them to a fixed constant, say $0$. Thus no communication is required. They now have an instance of the classical game $\kw{}_f$, which they solve optimally. 
\end{enumerate}
The correctness of the protocol follows easily. The total number of bits exchanged to reduce to $\kw{}_f$ is at most $2\log\left(\sum_{i=0}^k\binom{n}{i}\right) + 2k$. Alice and Bob decide beforehand that the bits for positions in $A\setminus B$ and $B\setminus A$ are sent in sorted (say, increasing) order of positions. Thus they need to exchange at most $2k$ bits to send the bits across.  
\end{proof}

\section{Hazard-free formula depth reduction}
\label{sec:depthreduction}
In this section we show that the standard depth reduction process for Boolean circuits and formulas \cite{Brent:1974} works in a hazard-free way.
Let $F$ be a hazard-free formula.
In the Boolean depth reduction process we take a gate that has roughly the same distance to the root as it has to its deepest leaf and we write
\[
F(x) = F'(G(x),x)
\]
where $G$ is the subformula of that gate. Now we observe that
\begin{equation}\label{eq:depthreduction}
F(x) = F'(G(x),x) = \MUX_1(G(x),F'(0,x),F'(1,x)) =: \tilde F(x)
\end{equation}
which can be used iteratively to convert a Boolean formula to logarithmic (in the size of $F$) depth. If the implementation of $\MUX_1$ is hazard-free, then this depth-reduction process preserves the hazard-freeness of a formula, as the next claim shows.
\begin{claim}
Assume that a hazard-free implementation of $\MUX_1$ is used in \eqref{eq:depthreduction}.
If $F$ is hazard-free, then $\tilde F$ is also hazard-free.
\end{claim}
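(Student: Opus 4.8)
The plan is to show directly that $\tilde F$ computes the hazard-free extension $\hfe f$, where $f$ is the Boolean function computed by $F$. First I would record two easy facts. (i) On a Boolean input $a$ the selector value $G(a)$ is Boolean, so $\tilde F(a)=\MUX_1(G(a),F'(0,a),F'(1,a))=F'(G(a),a)=F(a)=f(a)$; hence $\tilde F$ is (as a Boolean circuit) a circuit for $f$. (ii) Every circuit over $\Tri$ is monotone with respect to the stability order $\sqsubseteq$ (with $\un\sqsubseteq 0$, $\un\sqsubseteq 1$), and a stable output is never changed when inputs are stabilised (this is the fact recalled in the footnote in the excerpt). A consequence I will use is that any circuit $C$ computing $f$ satisfies $C(\alpha)\in\{\un,\hfe f(\alpha)\}$ for every $\alpha\in\Tri^n$: for a resolution $a$ of $\alpha$ we have $\alpha\sqsubseteq a$, hence $C(\alpha)\sqsubseteq C(a)=f(a)$, which forces $C(\alpha)$ into $\{\un,\hfe f(\alpha)\}$. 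Applying this to $\tilde F$, it suffices to prove $\tilde F(\alpha)=\hfe f(\alpha)$ whenever $\hfe f(\alpha)\neq\un$ (the case $\hfe f(\alpha)=\un$ is then automatic).

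So fix $\alpha\in\Tri^n$ with $\hfe f(\alpha)=c\in\{0,1\}$. Write the ternary evaluation of $\tilde F$ at $\alpha$ as $\tilde F(\alpha)=\hfe{\MUX}_1(g,y_0,y_1)$, where $g:=G(\alpha)$ is the ternary value of the subformula $G$ and $y_b$ is the ternary value of the subformula $F'(b,\cdot)$ at $\alpha$, for $b\in\{0,1\}$; this uses that the sub-circuit implementing $\MUX_1$ inside $\tilde F$ is hazard-free, so it outputs $\hfe{\MUX}_1$ of its three (ternary) input values. Note $y_b$ equals the ternary evaluation of $F'$ at $(b,\alpha)$, since $F'(b,\cdot)$ is just $F'$ with its first input wired to the constant $b$; likewise $F(\alpha)=F'(g,\alpha)$ (first evaluate the subformula $G$ to $g$, then feed $g$ into $F'$), and this equals $c$ because $F$ is hazard-free. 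Now take any resolution $\hat g$ of $g$ (so $\hat g\in\Bool$ and $g\sqsubseteq\hat g$): then $(g,\alpha)\sqsubseteq(\hat g,\alpha)$, and since $F'$ is a circuit whose output at $(g,\alpha)$ is the \emph{stable} value $c$, its output at $(\hat g,\alpha)$ is also $c$; that is, $y_{\hat g}=c$. Consequently every resolution of $(g,y_0,y_1)$ has the form $(\hat g,\hat y_0,\hat y_1)$ with $\hat y_{\hat g}$ a resolution of $y_{\hat g}=c$, hence $\hat y_{\hat g}=c$, so $\MUX_1(\hat g,\hat y_0,\hat y_1)=c$ on every such resolution. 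By the definition of the hazard-free extension this gives $\hfe{\MUX}_1(g,y_0,y_1)=c$, i.e.\ $\tilde F(\alpha)=c=\hfe f(\alpha)$, as desired.

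The main thing to be careful about, rather than a genuine obstacle, is the bookkeeping of the three ``copies'' living inside $\tilde F$ (the hazard-free $\MUX_1$ gadget, one copy of the subformula $G$, and the two copies $F'(0,x)$, $F'(1,x)$ of $F'$ with a constant plugged in), and the observation that their ternary evaluations compose exactly as in the Boolean case, so that the only place where a $\un$ on the multiplexer's selector can arise is routed through $\hfe{\MUX}_1$ -- where, as shown, it is harmless precisely because $F'$ is already stable there. No communication-complexity machinery is needed; everything follows from the definition of the hazard-free extension \eqref{eq:defhazfreeextension} and monotonicity of ternary circuits.
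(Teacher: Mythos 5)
Your proof is correct; it establishes the same statement as the paper but by the contrapositive route. The paper assumes $\tilde F(\alpha)=\un$, splits into the three ways the hazard-free $\MUX_1$ gadget can output $\un$ (selector stable with the selected arm $\un$, or selector $\un$ with the two arms not agreeing on a stable value), and concludes $F(\alpha)=F'(G(\alpha),\alpha)=\un$ in each case. You instead assume $\hfe f(\alpha)=c\in\Bool$, observe that $F'(g,\alpha)=F(\alpha)=c$ with $g:=G(\alpha)$, use monotonicity in the stability order to deduce $F'(\hat g,\alpha)=c$ for every resolution $\hat g$ of $g$, and then conclude $\hfe{\MUX}_1(g,y_0,y_1)=c$ straight from the definition of the hazard-free extension. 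Your version avoids the case split and makes explicit the monotonicity reduction (every circuit $C$ for $f$ satisfies $C(\alpha)\in\{\un,\hfe f(\alpha)\}$) that the paper uses implicitly when it opens with ``it remains to show $F(\alpha)=\un$.'' Both arguments hinge on the same two ingredients --- the identity $F(\alpha)=F'(G(\alpha),\alpha)$ and the hazard-freeness of the $\MUX_1$ implementation --- so the two approaches are logically equivalent, with yours being slightly more self-contained and a bit cleaner.
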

\begin{proof}
Consider the case when $\tilde F(\alpha)=u$. It remains to show that $F(\alpha)=\un$.
The hazard-freeness of the multiplexer implementation implies that there are the following three cases:
\begin{compactitem}
\item $G(\alpha)=0$ and $F'(0,\alpha)=\un$
\item $G(\alpha)=1$ and $F'(1,\alpha)=\un$
\item $G(\alpha)=\un$ and $F'(0,\alpha)$ and $F'(1,\alpha)$ are not both equal to the same Boolean value (in fact, they potentially have value $\un$).
\end{compactitem}
In the first two cases, by definition of $F'$ and $G$ we see that $F(\alpha)=F'(G(\alpha),\alpha)=\un$.
In the third case we see that $F'(\un,\alpha)=\un$. Using the fact that $G(\alpha)=\un$ implies that $F(\alpha)=F'(G(\alpha),\alpha)=\un$.
\end{proof}

By using standard techniques for finding the subformula $G$ (See Lemma~{1.3} in \cite{J:12}), we can show that the balanced formula has depth $3\log_{3/2}(m) + O(1)$ and size $O(m^{2.92})$ where $m$ is the size of the original formula.

\section{Sagemath Source Code for the Finite Cases}
\label{sec:computer}
The following sagemath (version 9) code is used to verify Equation~\eqref{eq:computer} up to $n=38$.

\begin{minipage}{\textwidth}
\small
\begin{verbatim}
def blog(k):
  return float(log(k,2))

def Psi(i,n):
  return int(ceil( blog(2^i+n-i) ))-i

def t(n):
  return int(ceil(blog(n)))

def LHS(n):
  return sum([binomial(n,i)*2^(Psi(i,n)-1)   for i in [0..t(n)-1]])    +  \
    sum([binomial(n,i)  for i in [t(n)..n-1] ])   +  0.5

def RHS(n):
  return 2^(n+1)

for n in [1..38]:
  if RHS(n)>=LHS(n):
    print("ok")
  else:
    print("problem")
\end{verbatim}
\end{minipage}

\bibliographystyle{alpha}
\bibliography{main}

\end{document}